\declaretheorem[style=definition]{example}
\numberwithin{equation}{section}
\DeclareMathOperator{\sgn}{sgn}
\newtheorem{assm}{Assumption}
\newtheorem{lem}{Lemma}
\newtheorem{prop}{Proposition}
\newtheorem{thm}{Theorem}
\newtheorem{cor}{Corollary}
\newtheorem{defn}{Definition}
\begin{document}
\renewcommand{\thefootnote}{\fnsymbol{footnote}}
\renewcommand\thmcontinues[1]{Continued}
\title{Optimal Incentive Contract with Endogenous Monitoring Technology}
\author{Anqi Li\footnote{Department of Economics, Washington University in St. Louis. anqili@wustl.edu.}
\and Ming Yang\footnote{Fuqua School of Business, Duke University. ming.yang@duke.edu. We thank the coeditor, four anonymous referees, Nick Bloom, George Mailath, Ilya Segal, Chris Shannon, Joel Sobel, Jacob Steinhardt, Bob Wilson and the seminar participants at Caltech, Decentralization Conference 2017, Duke-UNC, Johns Hopkins Carey, Northwestern, Stanford, UCSB, UCSD, U of Chicago and U of Washington for comments and suggestions. Lin Hu and Jessie Li provided generous assistance for the numerical analysis. All errors are our own.}}
\date{\emph{Forthcoming, Theoretical Economics}}
\maketitle

\begin{abstract}
\noindent Recent technology advances have enabled firms to flexibly process and analyze sophisticated employee performance data at a reduced and yet significant cost. We develop a theory of optimal incentive contracting where the monitoring technology that governs the above procedure is part of the designer's strategic planning. In otherwise standard principal-agent models with moral hazard, we allow the principal to partition agents' performance data into any finite categories and to pay for the amount of information the output signal carries. Through analysis of the trade-off between giving incentives to agents and saving the monitoring cost, we obtain characterizations of optimal monitoring technologies such as information aggregation, strict MLRP, likelihood ratio-convex performance classification, group evaluation in response to rising  monitoring costs, and assessing multiple task performances according to agents' endogenous tendencies to shirk. We examine the implications of these results for workforce management and firms' internal organizations.

\bigskip

\noindent Key words: incentive contract; endogenous monitoring technology.

\bigskip

\noindent JEL codes: D86, M15, M5.

\end{abstract} 
\renewcommand{\thefootnote}{\arabic{footnote}}
\pagebreak

\section{Introduction}\label{sec_introduction}
Recent technology advances have enabled firms to flexibly process and analyze sophisticated employee performance data at a reduced and yet significant cost.  Speech analytics software, natural language processing tools and cloud-based systems are increasingly used to convert hard-to-process contents into succinct and meaningful ratings such as ``satisfactory'' and ``unsatisfactory'' (\cite{nlp}; \cite{nytimes}; \cite{harper}). This paper develops a theory of optimal incentive contracting where the \emph{monitoring technology} that governs the above procedure is part of the designer's strategic planning. 

Our research agenda is motivated by the case of call center performance management reported by \cite{nytimes}. It has long been recognized that the conversations between call center agents and customers contain useful performance indicators such as customer sentiment, voice quality and tone, etc.. Recently, the advent of speech analytics software has finally enabled the processing and analysis of these contents, as well as their conversions into meaningful ratings such as ``satisfactory'' and  ``unsatisfactory.'' On the one hand, running speech analytics software consumes server space and power, and the procedure has been increasingly outsourced to third parties to take advantage of the latest development in cloud computing. On the other hand, managers now have considerable freedom to decide which facets of the customer conversation to utilize, thanks to the increased availability of products whose specialties range from emotion detection to word spotting. 

We formalize the flexibility and cost associated with the design and implementation of the monitoring technology in otherwise standard principal-agent models with moral hazard. Specifically, we allow the monitoring technology to partition agents' performance data into any finite categories, at a cost that increases with the amount of information the output signal carries  (hereafter \emph{monitoring cost}). An incentive contract pairs the monitoring technology with a wage scheme that maps realizations of the output signal to different wages. An optimal contract minimizes the sum of expected wage and monitoring cost, subject to agents' incentive constraints. 

Our main result gives characterizations of optimal monitoring technologies in general environments, showing that the assignment of Lagrange multiplier-weighted likelihood ratios to performance categories is positive assortative in the direction of agent utilities. Geometrically, this means that optimal monitoring technologies comprise convex cells in the space of likelihood ratios or their transformations. This result provides practitioners with the needed formula for sorting employee performance data, and exploiting its geometry yields insights into workforce management and firms' internal organizations. 

Our proof strategy works directly with the principal's Lagrangian. It handles general situations featuring multiple agents and tasks, in which the direction of sorting vector-valued likelihood ratios is nonobvious a priori. It overcomes the technical challenge whereby perturbations of the sorting algorithm affect wages endogenously through the Lagrange multipliers of agents' incentive constraints, yielding effects that are new and difficult to assess using standard methods.    

We give three applications of our result. In the single-agent model considered in \cite{holmstrom}, we show that the assignment of likelihood ratios to wage categories is positive assortative and follows a simple cutoff rule. The monitoring technology aggregates potentially high-dimensional performance data into rank-ordered ratings, and the output signal satisfies the strict monotone likelihood ratio property with respect to the order induced by likelihood ratios. Solving cutoff likelihood ratios yields consistent findings with recent developments in manufacturing, retail and healthcare sectors, where decreases in the data processing cost have shown to increase the fineness of the performance grids (\cite{bloomvanreenenwp}; \cite{nlp}; \cite{mckinsey}). 

In the multi-agent model considered in \cite{holmstromteam}, the optimal monitoring technology partitions vectors of individual agents' likelihood ratios into convex polygons. Based on this result, we then compare individual and group performance evaluations from the angle of monitoring cost, showing that firms should switch from individual to group evaluation in response to rising monitoring costs. This result formalizes the theses of \cite{ad} and \cite{lazearrosen} that either team or tournament should be the dominant incentive system when individual performance evaluation is too costly to conduct. It is consistent with the findings of \cite{bloomvanreenenwp} that lack of IT access increases the use of group performance evaluation among otherwise similar firms. 

In the multiple-task model studied in \cite{holmstrommilgrom}, the resources spent on the assessment of a task performance should increase with the agent's endogenous tendency to shirk the corresponding task. Using simulation, we apply this result to the study of, e.g., 
how improved precision of some task measurements (caused by, e.g., the advent of high-quality scanner data measuring the skillfulness in scanning items) would affect the resources spent on the assessments of other task performances (e.g., projecting warmth to customers). 

\subsection{Related Literature}\label{sec_literature}
Earlier studies on contracting with costly experiments (in the sense of \cite{blackwell}) include, but are not limited to: \cite{baimandemski} and \cite{dye2}, in which the principal can pay an external auditor for drawing a signal from an exogenous distribution; \cite{holmstrom}, \cite{grossmanhart} and \cite{kim}, in which signal distributions are ranked based on the incentive costs they incur. In these studies, the principal can change the probability space generated by the agent's hidden effort and, in the first two studies, through paying stylized costs. In contrast, we focus on the conversion of raw data into performance ratings while taking the probability space as given. Also our assumption that the monitoring cost increases with the amount of information carried by the output signal could be ill-suited for modeling the cost of running experiments.

The current work differs from existing studies on rational inattention (hereafter RI) in three aspects. First, early developments in RI by \cite{sims}, \cite{riprice} and \cite{woodford} sought to explain the stickiness of macroeconomic variables by information processing costs, whereas we examine the implication of costly yet flexible monitoring for principal-agent relationships.\footnote{\cite{yang} studies a security design problem where a rationally inattentive buyer can obtain any signal about the uncertain fundamental at a cost that is proportional to entropy reduction. Other recent efforts to introduce RI into strategic environments include but are not limited to: \cite{riconsumer}, \cite{martin} and \cite{ravid}.} Second, we focus mainly on partitional monitoring technologies because in reality, adding non-performance-related factors into employee ratings could have dire consequences such as appeals, lawsuits and excessive turnover.\footnote{See standard HR textbooks for this subject matter. \cite{saintpaul} demonstrates the validity of entropy as an information cost in  decision problems where the decision variable is a deterministic function of the exogenous state variable. } Finally, our monitoring cost function nests entropy as a special case.

Recent works of \cite{cremeretal}, \cite{voronoi}, \cite{sobel} and \cite{dilme} examine the optimal language used between organization members who share a common interest but face communication costs. The absence of conflicting interests hence incentive constraints distinguishes these works from ours.

\bigskip 

The remainder of this paper is organized as follows: Section \ref{sec_baseline} introduces the baseline model; Section \ref{sec_analysis} presents main results; Sections \ref{sec_multiagent} and \ref{sec_multidev} investigate extensions of the baseline model; Section \ref{sec_conclusion} concludes. See Appendices \ref{sec_proof} and \ref{sec_online} for omitted proofs and additional materials. 

\section{Baseline Model}\label{sec_baseline}
\subsection{Setup}\label{sec_setup}
\paragraph{Primitives} A risk-neutral principal faces a risk-averse agent, who  earns a utility $u\left(w\right)$ from spending a nonnegative wage $w \geq 0$ and incurs a cost $c\left(a\right)$ from privately exerting high or low effort $a \in \{0,1\}$. The function $u: \mathbb{R}_+ \rightarrow \mathbb{R}$ satisfies $u\left(0\right)=0$, $u'>0$ and $u''<0$, whereas $c\left(1\right)=c>c\left(0\right)=0$. 

Each effort choice $a$ generates a probability space $\left(\Omega, \Sigma, P_a\right)$, where $\Omega$ is a finite-dimensional Euclidean space that comprises the agent's performance data, $\Sigma$ is the Borel sigma-algebra on $\Omega$, and $P_a$ is the probability measure on $\left(\Omega, \Sigma\right)$. $P_a$'s are assumed to be mutually absolutely continuous, and the probability density function $p_a$'s they induce are well-defined and everywhere positive. 

\paragraph{Incentive contract}  An incentive contract $\langle \mathcal{P}, w\left(\cdot\right) \rangle$ is a pair of \emph{monitoring technology} $\mathcal{P}$ and \emph{wage scheme} $w: \mathcal{P} \rightarrow \mathbb{R}_{+}$. The former represents a human- or machine-operated system that governs the processing and analysis of performance data, whereas the latter maps outputs of the first-step procedure to different levels of wages. In the main body of this paper, $\mathcal{P}$ can be any partition of $\Omega$ with at most $K$ cells that are all of positive measures,\footnote{In Appendix \ref{sec_random}, we allow the monitoring technology to be any mapping from $\Omega$ to lotteries on finite performance categories. If the lottery is degenerate, then the monitoring technology is partitional.} and $w: \mathcal{P} \rightarrow \mathbb{R}_{+}$ maps each cell $A$ of $\mathcal{P}$ to a nonnegative wage $w\left(A\right) \geq 0$.\footnote{Appendix \ref{sec_ir} examines the case where the agent is constrained by individual rationality.  } The upper bound $K$ for the \emph{rating scale} $|\mathcal{P}|$ can be any integer greater than one and will be taken as given throughout the analysis.\footnote{The upper bound $K$, while stylized, guarantees the existence of optimal incentive contract(s). Judging from the simulation exercises we have so far conducted, the optimal rating scale is typically smaller than $K$ even when $\mu$ is small (see, e.g., Figure \ref{figure_ratingscale}).}

For any data point $\omega \in \Omega$, let $A\left(\omega\right)$ be the unique \emph{performance category} that contains $\omega$ and let $w\left(A\left(\omega\right)\right)$ be the wage associated with $A\left(\omega\right)$. Time evolves as follows: 
\begin{enumerate}
\item the principal commits to $\langle \mathcal{P}, w\left(\cdot\right)\rangle$; 
\item the agent privately chooses $a \in \left\{0,1\right\}$;
\item Nature draws $\omega$ from $\Omega$ according to $P_a$; 
\item the monitoring technology outputs $A\left(\omega\right)$;  
\item the principal pays $w\left(A\left(\omega \right)\right)$ to the agent.
\end{enumerate}

\paragraph{Implementation cost} For any given effort choice $a$ by the agent, a monitoring technology $\mathcal{P}=\left\{A_1,\cdots, A_N\right\}$ outputs a signal $X: \Omega \rightarrow \mathcal{P}$ whose probability distribution is represented by a vector $\bm{\pi}\left(\mathcal{P},a\right)=\left(P_a\left(A_1\right),\cdots, P_a\left(A_N\right), 0,\cdots, 0\right)$ in the $K$-dimensional simplex. The principal incurs the following cost from implementing an incentive contract $\langle \mathcal{P}, w\left(\cdot\right) \rangle$: 
\[
\sum_{A \in \mathcal{P}} P_a
\left(A
\right)w\left(A\right)+ \mu \cdot H\left(\mathcal{P}, a\right), 
\]
which consists of two parts. The first part $\sum_{A \in \mathcal{P}} P_a\left(A\right)w\left(A\right)$, i.e., the \emph{incentive cost}, has been the central focus of the existing principal-agent literature. The second part $\mu \cdot H\left(\mathcal{P}, a\right)$, hereafter termed the \emph{monitoring cost},  represents the cost associated with the processing and analysis of the performance data. In particular, $\mu>0$ is an exogenous parameter which we will further discuss in Section \ref{sec_implication}, whereas $H\left(\mathcal{P},a\right)$ captures the amount of information carried by the output signal and is assumed to satisfy the following properties:


\begin{assm}\label{assm_mc}
There exists a function $h: \Delta^K \rightarrow \mathbb{R}_+$ such that $H\left(\mathcal{P}, a\right)=h\left(\bm{\pi}\left(\mathcal{P},a\right)\right)$ for all  $\left(\mathcal{P}, a\right)$. Furthermore,
\begin{enumerate}[(a)]
\item $h\left(\pi_1, \cdots, \pi_K\right)=h\left(\pi_{\sigma\left(1\right)}, \cdots, \pi_{\sigma\left(K\right)}\right)$ for all probability vector $\left(\pi_1,\cdots, \pi_K\right) \in \Delta^K$ and permutation $\sigma$ on $\left\{1,\cdots, K\right\}$; 
\item $h\left(0, \pi_2, \cdots\right)<h\left(\pi_1', \pi_2'', \cdots\right)$ for all $\left(0, \pi_2, \cdots\right)$ and $\left(\pi_1', \pi_2', \cdots\right) \in \Delta^K$ that differ only in the first two elements and satisfy $\pi_2, \pi_1', \pi_2'>0$ and $\pi_2=\pi_1'+\pi_2'$. 
\end{enumerate}
\end{assm}

Inspired by basic principles of information theory, Assumption \ref{assm_mc} stipulates that the amount of information carried by the output signal should depend only on the latter's probability distribution and must increase with the fineness of the monitoring technology. Aside from probabilities, nothing else matters, not even the naming or contents of the performance categories. Assumption \ref{assm_mc} is satisfied by, e.g., the entropy $- \sum_{A \in \mathcal{P}} P_a\left(A\right) \log_2 P_a\left(A\right)$ of the output signal and the bits of information $\log_2|\mathcal{P}|$ it carries.\footnote{The bit is a basic unit of information in information theory, computing, and digital communications. In information theory, one bit is defined as the maximum information entropy of a binary random variable.} In Section \ref{sec_cost}, we motivate the use of this assumption in the example of call center performance management. 

\paragraph{The principal's problem}  Consider the problem of inducing high effort from the agent.\footnote{The problem of inducing low effort is standard.} Define a random variable $Z: \Omega \rightarrow \mathbb{R}$ by 
\[
Z\left(\omega\right)=1-\frac{p_0\left(\omega\right)}{p_1\left(\omega\right)} \text{ } \text{ }\forall \omega,
\]
where $p_0\left(\omega\right)/p_1\left(\omega\right)$ is the \emph{likelihood ratio} associated with data point $\omega$. Note that $\mathbf{E}\left[Z \mid a=1\right]=0$ and that the range of $Z$ is a subset of $\left(-\infty, -1\right)$. For any set $A \in \Sigma$ of positive measure, define the \emph{$z$-value} of $A$ by
\[
z(A)=\mathbf{E}\left[Z \mid A; a=1\right].
\]
In words, $z\left(A\right)$ represents the average value of $Z$ conditional on the data point being drawn from $A$. 

A contract $\langle \mathcal{P}, w\left(\cdot\right)\rangle$ is incentive compatible if 
\[
\sum_{A \in \mathcal{P}} P_1\left(A\right)u\left(w\left(A\right)\right)- c \geq \sum_{A \in \mathcal{P}} P_0\left(A\right)u\left(w\left(A\right)\right)
\]
or, equivalently, 
\begin{equation}
\tag{IC} \sum_{A \in \mathcal{P}} P_1\left(A\right)u\left(w\left(A\right)\right)z\left(A\right) \geq c, \label{eqn_ic}
\end{equation}
and it satisfies the limited liability constraint if
\begin{equation}
\tag{LL}  w\left(A\right) \geq 0 \text{ } \forall A \in \mathcal{P}. \label{eqn_ll}
\end{equation}
An optimal incentive contract that induces high effort from the agent (optimal incentive contract for short) minimizes the total implementation cost under high effort, subject to the incentive compatibility constraint and limited liability constraint: 
\[
\min_{\langle \mathcal{P}, w\left(\cdot\right)\rangle} \sum_{A \in \mathcal{P}} P_1\left(A\right)w\left(A\right)+\mu \cdot H\left(\mathcal{P}, 1\right) \text{ s.t. (\ref{eqn_ic}) and (\ref{eqn_ll})}. 
\]
In what follows, we will denote the solution(s) to the above problem by $\langle \mathcal{P}^*, w^*\left(\cdot\right)\rangle$. 

\subsection{Monitoring Cost}\label{sec_cost}
We first illustrate Assumption \ref{assm_mc} in the context of call center performance management: 

\begin{example}[label=exa:cont1]
In the example described in Section \ref{sec_introduction}, a piece of performance data comprises the major characteristics of a call history (e.g., customer sentiment and voice quality) encoded in binary digits, and the  monitoring technology represents the speech analytics program that categorizes binary digits into performance ratings. To formalize the design flexibility, we allow the monitoring technology to partition performance data into any $N \leq K$ categories, where $K$ can be any interger greater than one. The cost of  running the monitoring technology is assumed to increase with the amount of processed information, whose definition varies from case to case. For example, if the monitoring technology runs many times among many identical agents, then the optimal design should minimize the average steps it takes to find the performance category containing the raw data point. By now, it is well known that this quantity equals approximately the entropy of the output signal. In contrast, if the monitoring technology runs only a few times for a few number of agents, then the worst-case (or unamortized) amount of processed information is best captured by the bits of information carried by the output signal (see, e.g., \cite{infortheory}). In both cases, the quantity of our interest  depends only on the probability distribution of the output signal and nothing else. 
\end{example}

We next introduce the concept of \emph{setup cost} and distinguish it from our  notion of monitoring cost:

\begin{example}[continues=exa:cont1]
As its name suggests, \emph{setup cost} refers the cost incurred to set up the infrastructure that facilitates data processing and analysis, e.g., Fast Fourier Transformation (FFT) chips (which transform sound waves into their major characteristics coded in binary digits), recording devices, etc..

The major role of setup cost is to change the probability space $\left(\Omega, \Sigma, P_a\right)$. For example, design improvements in FFT chips enable more frequent sampling of sound waves and cause $\left(\Omega, \Sigma, P_a\right)$ to change. In what follows, we will take the probability space as given and ignore the setup cost. That said, one can certainly embed our analysis into a two-stage setting in which the principal first incurs the setup cost and then the monitoring cost. Results below will carry over to this new setting.
\end{example}

\section{Analysis}\label{sec_analysis}
\subsection{Preview}\label{sec_preview}
\begin{example}\label{exm_main}
Suppose $u\left(w\right)=\sqrt{w}$, $Z$ is uniformly distributed over $\left[-1/2, 1/2\right]$ under $a=1$ and $H\left(\mathcal{P},a\right)=f\left(|\mathcal{P}|\right)$ for some strictly increasing function $f: \left\{1,\cdots, K \right\} \rightarrow \mathbb{R}_+$. Below we walk through the key steps in  solving the optimal incentive contract, give closed-form solutions and discuss their practical implications. 

\paragraph{Optimal wage scheme} We first solve for the optimal wage scheme for any given monitoring technology $\mathcal{P}$ as in \cite{holmstrom}. Specifically, label the performance categories as $A_1, \cdots, A_N$, and write $\pi_n=P_1\left(A_n\right)$ and $z_n=z\left(A_n\right)$ for $n=1,\cdots, N$. Assume $z_j \neq z_k$ for some $j, k \in \left\{1,\cdots, N\right\}$ to make the analysis interesting.  The principal's problem is then
\begin{align*}
\min_{\left\{w_n\right\}} & \sum_{n=1}^N \pi_n w_n,\\
\tag{IC} \text{ s.t. } & \sum_{n=1}^N \pi_n \sqrt{w_n} z_n \geq c,\\
\tag{LL}\text{ and } & w_n \geq 0, n=1,\cdots, N. 
\end{align*}
Straightforward algebra yields the expression for minimal incentive cost:  
\begin{equation*}
c^2\left(\sum_{n=1}^N \pi_n \max\left\{0, z_n\right\}^2\right)^{-1}. \label{eqn_incentive_cost}
\end{equation*}
A close inspection reveals Holmstr\"{o}m's (1979) \emph{sufficient statistics principle}, namely $z$-value is the only part of the performance data that provides the agent with incentives.

\paragraph{Optimal monitoring technology} We next solve for the optimal monitoring technology. First, note that the principal should partition performance data based only on their $z$-values, and that different  performance categories must attain different $z$-values and wages. The reason combines the sufficient statistic principle with  Assumption \ref{assm_mc}(b), namely merging performance categories of the same $z$-value saves the monitoring cost while leaving the incentive cost unaffected and thus constitutes an improvement to the original monitoring technology. 

A more interesting question concerns how we should assign the various data points, identified by their $z$-values, to different performance categories. In the baseline model featuring a single agent and binary efforts, the answer to this question is relatively straightforward: assign high (resp. low) $z$-values to high-wage (resp. low-wage) categories. Here is a quick proof of this result: since the left-hand side of the (\ref{eqn_ic}) constraint is supermodular in wages and $z$-values, if our conjecture were false, then reshuffling data points as above while holding the probabilities of performance categories constant reduces the incentive cost while leaving the monitoring cost unaffected.

When extending the above intuition to general settings featuring multiple agents or multiple actions, we face two challenges. First, in the case where $z$-values and wages are vectors, the direction of sorting these objects is nonobvious a priori. Second, changes in the sorting algorithm affect wages endogenously through the Lagrange multipliers of the incentive constraints, yielding effects that are new and difficult to assess using standard methods. 

The proof strategy presented in Section \ref{sec_proofsketch} overcomes these challenges, showing that the assignment of Lagrange multiplier-weighted $z$-values to performance categories must be positive assortative in the direction of agent utilities. Geometrically, this means that any optimal monitoring technology must comprise convex cells in the space of $z$-values or their transformations. Theorems \ref{thm_main}, \ref{thm_multiagent} and \ref{thm_multidev} formalize the above statements. 

\paragraph{Implications}  An important feature of the optimal monitoring technology is \emph{information aggregation}---a term used by human resource practitioners to refer to the aggregation of potentially high-dimensional performance data into rank-ordered ratings such as ``satisfactory'' and ``unsatisfactory.'' 

The geometry of the optimal monitoring technology sheds light on the practical issues covered in Sections \ref{sec_implication}, \ref{sec_multiagent_application} and \ref{sec_multitask_application}. Consider, for example, optimal performance grids. In the current example, it can be shown that the optimal $N$-partitional monitoring technology divides the space $\left[-1/2, 1/2\right]$ of $z$-values into $N$ disjoint intervals $[\widehat{z}_{n-1}, \widehat{z}_n)$, $n=1,\cdots, N$, where $\widehat{z}_0=-1/2$ and $\widehat{z}_N=1/2$. The optimal cut points $\left\{\widehat{z}_n\right\}_{n=1}^{N-1}$ can be solved as follows: 
\[\min_{\left\{\widehat{z}_n\right\}} c^2\left(\sum_{n=1}^N \pi_{n} \max\{0, z_{n}\}^2\right)^{-1} - \mu \cdot f\left(N\right),\]
where 
\[\pi_{n}=\int_{\widehat{z}_{n-1}}^{\widehat{z}_{n}}  dZ =\widehat{z}_{n}-\widehat{z}_{n-1},\]
and 
\[
z_{n}=\frac{1}{\pi_{n}}\int_{\widehat{z}_{n-1}}^{\widehat{z}_{n}} Z dZ=\frac{1}{2}\left(\widehat{z}_{n-1}+\widehat{z}_{n}\right).\]
Straightforward algebra yields
\[\widehat{z}_{n}=\frac{2n-1}{4N-2}, \text{ } n=1,\cdots, N-1.\]
Based on this result, as well as the functional form of $f$, we can then solve for  the optimal rating scale $N$ and hence the optimal incentive contract completely. 
\end{example}

\subsection{Main Results}\label{sec_main}
This section analyzes optimal incentive contracts. Results below hold true except perhaps on a measure zero set of data points. The same disclaimer applies to the remainder of this paper.

We first define \emph{$Z$-convexity}: 

\begin{defn}\label{defn_zconvex}
A set $A \in \Sigma$ is \emph{$Z$-convex} if the following holds for all $\omega',\omega'' \in A$ such that $Z\left(\omega'\right) \neq Z\left(\omega''\right)$:   
\[
\left\{\omega \in \Omega: Z\left(\omega\right)=\left(1-s\right) \cdot Z\left(\omega'\right)+s \cdot Z\left(\omega''\right) \text{ for some } s \in \left(0,1\right)\right\} \subset A.
\]
\end{defn}

In words, a set $A \in \Sigma$ is $Z$-convex if whenever it contains data points of different $z$-values, it must also contain all data points of intermediate $z$-values. Let $Z\left(A\right)$ denote the image of any set $A \in \Sigma$ under mapping $Z$. In the case where $Z\left(\Omega\right)$ is a connected set in $\mathbb{R}$, the above definition is equivalent to the convexity of $Z\left(A\right)$ in $\mathbb{R}$. 

A few assumptions before we go into detail. The next assumption says that the distribution of $Z$ has no atom or hole:

\begin{assm}\label{assm_regular}
$Z$ is distributed atomlessly on a connected set $Z\left(\Omega\right)$ in $\mathbb{R}$ under $a=1$. 
\end{assm}

The next assumption says that $Z\left(\Omega\right)$ is compact set in $\mathbb{R}$: 
\begin{assm}\label{assm_compact}
$Z\left(\Omega\right)$ is a compact set in $\mathbb{R}$.
\end{assm}

The next assumption imposes regularities on the monitoring cost function: Part (a) of it holds for the bits of information carried by the output signal, and Part (b) of it holds for the entropy of the output signal: 

\begin{assm}\label{assm_cf}
The function $h: \Delta^K \rightarrow \mathbb{R}_+$ satisfies one of the following conditions: 
\begin{enumerate}[(a)]
\item $h\left(\bm{\pi}\left(\mathcal{P},a\right)\right)=f\left(|\mathcal{P}|\right)$ for some strictly increasing function $f: \left\{1,\cdots, K\right\} \rightarrow \mathbb{R}_+$;
\item $h$ is continuous. 
\end{enumerate}
\end{assm}

We now state our main results. The next theorem shows that any optimal incentive contract assigns data points of high (resp. low) $z$-values to high-wage (resp. low-wage) categories. Under Assumption \ref{assm_regular}, this can be achieved by first dividing $z$-values into disjoint intervals and then backing out the partition of the original data space accordingly. The result is an aggregation of potentially high-dimensional data into rank-ordered ratings, as well as a wage scheme that is strictly increasing in these ratings: 

\begin{thm}\label{thm_main}
Assume Assumption \ref{assm_mc} and let $\langle \mathcal{P}^*, w^*\left(\cdot\right)\rangle$ be any optimal incentive contract that induces high effort from the agent. Then $\mathcal{P}^*$ comprises $Z$-convex cells labeled as $A_1,\cdots, A_N$ where $0=w^*\left(A_1\right)<\cdots<w^*\left(A_N\right)$. Assume, in addition, Assumption \ref{assm_regular}. Then there exist $\inf Z\left(\Omega\right)= \widehat{z}_0< \widehat{z}_1 < \cdots< \widehat{z}_N = \sup Z(\Omega)$ such that $A_n =\left\{\omega: Z\left(\omega\right) \in \left[\widehat{z}_{n-1}, \widehat{z}_{n}\right)\right\}$ for $n=1,\cdots, N$.\footnote{Under Assumption \ref{assm_regular}, the set of (finite) cut points has measure zero, so it is unimportant which of the two adjacent intervals a cut point belongs to. The choice of expressing all intervals as right half-open ones is purely aesthetic.} 
\end{thm}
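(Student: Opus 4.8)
My plan is to work through wages, then the number of cells, then the geometry of the cells. First, fixing the monitoring technology, solve for the optimal wage scheme as in Holmstr\"om and recover the sufficient-statistics structure. Second, use the symmetry and strict monotonicity built into Assumption~\ref{assm_mc} to show that an optimal partition has no two cells sharing a wage and that its lowest wage is zero. Third, establish the assortative ($Z$-convex) structure of the cells by a measure-preserving rearrangement of data points that exploits the supermodularity of (\ref{eqn_ic}) in (wage, $z$-value). Finally, under Assumption~\ref{assm_regular}, read off the cut points.

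For a fixed $\mathcal{P}=\{A_1,\dots,A_N\}$ write $\pi_n=P_1(A_n)$, $z_n=z(A_n)$ and reparametrize by promised utilities $v_n=u(w_n)$. Since $u$ is strictly increasing, strictly concave and $u(0)=0$, the inverse $u^{-1}$ is strictly increasing and strictly convex and (\ref{eqn_ll}) becomes $v_n\ge 0$; so the wage subproblem $\min_{v\ge 0}\sum_n\pi_n u^{-1}(v_n)$ subject to $\sum_n\pi_n v_n z_n\ge c$ is a convex program, and KKT is necessary and sufficient. The constraint binds (else shave a positive wage), so the multiplier $\lambda$ is strictly positive; stationarity gives $(u^{-1})'(v_n)=\lambda z_n$ whenever $v_n>0$, which forces $z_n>0$, while $z_n\le 0$ forces $v_n=0$. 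As $(u^{-1})'$ is strictly increasing, $w_n$ is a strictly increasing function of $z_n$ on $\{z_n>0\}$ --- the structure of Example~\ref{exm_main}. Now if two cells carried the same wage, merging them replaces their probabilities $\pi_i,\pi_j$ by $\pi_i+\pi_j$ and leaves everything else fixed; placing these two coordinates first via Assumption~\ref{assm_mc}(a), Assumption~\ref{assm_mc}(b) yields a strict drop in monitoring cost, while the incentive cost and the left side of (\ref{eqn_ic}) are unchanged (the merged cell's $z$-value is the probability-weighted average of the two old ones, at the common wage) --- contradicting optimality. Hence all wages, and so all $z$-values, are distinct; relabel so $z_1<\dots<z_N$. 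A single cell violates (\ref{eqn_ic}) since then $z_1=\mathbf{E}[Z\mid a=1]=0$, so $N\ge 2$; and since $\sum_n\pi_n z_n=0$ with the $z_n$ distinct, $z_1<0$, whence $w_1=0$ and $0=w^*(A_1)<\dots<w^*(A_N)$.

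For the geometric claim I would work under Assumption~\ref{assm_regular} (the bare $Z$-convexity assertion without it is a special case of the Lagrangian perturbation argument of Section~\ref{sec_proofsketch}, needed in any case for the vector-valued extensions). Suppose that for some $i<j$ the essential supremum of $Z$ over $A_i$ strictly exceeds the essential infimum of $Z$ over $A_j$, and pick $t$ strictly between the two. Then $B_i=A_i\cap\{Z>t\}$ and $A_j\cap\{Z<t\}$ both have positive probability, and by atomlessness of $Z$ we can choose $B_j\subseteq A_j\cap\{Z<t\}$ with $P_1(B_j)=P_1(B_i)$; call this common value $\beta$. Swap them: $A_i'=(A_i\setminus B_i)\cup B_j$, $A_j'=(A_j\setminus B_j)\cup B_i$, all other cells and all wages unchanged. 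Cell probabilities, hence $H(\mathcal{P},1)$ and the incentive cost, are unchanged; but $z(A_i')=z_i-\Delta/\pi_i$, $z(A_j')=z_j+\Delta/\pi_j$ with $\Delta=\int_{B_i}Z\,dP_1-\int_{B_j}Z\,dP_1>\beta t-\beta t=0$, so the left side of (\ref{eqn_ic}) strictly rises by $\Delta\,(u(w_j)-u(w_i))>0$ --- the supermodularity noted in the preview. With (\ref{eqn_ic}) now slack, shaving $w_N>0$ slightly lowers the incentive cost while keeping (\ref{eqn_ic}) and (\ref{eqn_ll}), a contradiction. Hence for every $i<j$ the essential supremum over $A_i$ is at most the essential infimum over $A_j$; together with connectedness of $Z(\Omega)$ and atomlessness this forces each $A_n$ to equal, up to a $P_1$-null set, $Z^{-1}$ of an interval, these intervals tiling $Z(\Omega)$ in the order of the $z_n$ and having positive length since $P_1(A_n)>0$. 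This gives $\widehat z_0=\inf Z(\Omega)<\widehat z_1<\dots<\widehat z_N=\sup Z(\Omega)$ with $A_n=Z^{-1}([\widehat z_{n-1},\widehat z_n))$, which is $Z$-convex per Definition~\ref{defn_zconvex}.

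The main obstacle is the rearrangement step: perturbing the partition changes the cells' $z$-values and hence, in general, the \emph{optimal} wages one would attach to them, an endogenous feedback that ordinary rearrangement arguments do not control. What makes the baseline tractable is that the swap can be carried out without re-optimizing wages --- it suffices that (\ref{eqn_ic}) gains slack --- after which a single scalar wage perturbation turns that slack into a cost saving. This shortcut is special to the one-agent, scalar-$Z$ case; handling the feedback when $z$-values are vectors is exactly what the argument of Section~\ref{sec_proofsketch} is designed to do.
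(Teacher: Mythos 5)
Your proposal is correct, and its first half (the KKT characterization of the wage subproblem, the merging argument under Assumption~\ref{assm_mc}(b), $N\geq 2$, $z_1<0$, and $w_1=0$) reproduces the content of the paper's Lemmas~\ref{lem_foc} and \ref{lem_mlrp} essentially verbatim. Where you genuinely depart is the $Z$-convexity step. The paper's appendix proof starts from a non-convex cell, extracts three equal-probability subsets $A'_\epsilon$, $A''_\epsilon$, $\tilde{A}_\epsilon$ whose $z$-values satisfy a strict convex-combination relation (via Lemma~\ref{lem_smallSet}), runs two competing $\epsilon$-perturbations, and re-optimizes wages so that (\ref{eqn_ic}) stays binding, using the multiplier $\lambda$ and $u'(w_n)=1/(\lambda z_n)$ to show the two perturbations move the expected wage in opposite directions to first order. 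You instead use a two-set threshold swap with all wages held fixed, observe that supermodularity makes (\ref{eqn_ic}) strictly slack, and convert the slack into a cost saving by shaving $w_N$. This is exactly the ``quick proof'' the paper sketches in Section~\ref{sec_preview} but does not use in its formal argument; it is more elementary and sidesteps the bookkeeping of endogenous wage responses, at the price of not generalizing to vector-valued $z$-values (as you correctly note). Two small remarks. First, your construction of $B_j$ with $P_1(B_j)=P_1(B_i)$ implicitly assumes $P_1(A_j\cap\{Z<t\})\geq P_1(B_i)$; take the minimum of the two measures and shrink $B_i$ accordingly. Second, the swap does not actually require Assumption~\ref{assm_regular}: $P_1$ is atomless because the model assumes everywhere-positive densities, so subsets of any prescribed measure exist regardless of whether the distribution of $Z$ itself has atoms. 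Your argument therefore already delivers the ordering of cells by essential range---hence $Z$-convexity up to null sets, which is all the paper claims given its measure-zero disclaimer---under Assumption~\ref{assm_mc} alone; Assumption~\ref{assm_regular} is needed only for the final interval and cut-point representation, so there is no need to defer the first claim of the theorem to the Lagrangian machinery of Section~\ref{sec_proofsketch}.
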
 

The next theorem proves existence of optimal incentive contract: 

\begin{thm}\label{thm_exist}
An optimal incentive contract that induces high effort from the agent exists under Assumptions \ref{assm_mc}-\ref{assm_cf}. 
\end{thm}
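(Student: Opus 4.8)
The plan is to set up a minimizing sequence of incentive contracts and extract a convergent subsequence whose limit is feasible and attains the infimum. The key simplification, which we should exploit from the outset, is Theorem \ref{thm_main}: every optimal (indeed every not-obviously-suboptimal) contract is, up to a measure-zero set, parametrized by a monitoring technology of the form $A_n = \{\omega : Z(\omega) \in [\widehat z_{n-1}, \widehat z_n)\}$ together with a wage vector. So instead of optimizing over the unwieldy space of all $K$-cell partitions of $\Omega$, I would reduce the problem to optimizing over the finite-dimensional compact-looking set of cut-point vectors $\widehat z_0 = \inf Z(\Omega) \le \widehat z_1 \le \cdots \le \widehat z_{N-1} \le \widehat z_N = \sup Z(\Omega)$ (with $N \le K$, allowing ties to encode using fewer than $K$ cells) and wage vectors $(w_1, \dots, w_N) \in \mathbb{R}_+^N$. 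Under Assumptions \ref{assm_regular} and \ref{assm_compact}, $Z(\Omega)$ is a compact interval, the cell probabilities $\pi_n$ and $z$-values $z_n$ are continuous functions of the cut points (the $z$-values defined by the appropriate limit when a cell degenerates), and the objective and the (IC) constraint are continuous in $(\widehat z, w)$.

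The steps, in order, would be: (i) Invoke Theorem \ref{thm_main} to restrict attention to interval-type monitoring technologies, so the decision variable is the finite-dimensional pair $(\widehat z, w)$. (ii) Show the wage vector can be taken to lie in a compact set: the (IC) constraint forces at least one $w_n$ with $z_n > 0$ to be bounded below away from zero, but more importantly, any contract with an enormous wage has incentive cost exceeding that of, say, the cheapest feasible benchmark contract (e.g. a two-cell contract threshold at $z=0$), so a minimizing sequence may be assumed to have wages in a fixed compact box $[0, \bar w]^N$. (iii) Take a minimizing sequence $(\widehat z^{(m)}, w^{(m)})$; by compactness of $Z(\Omega)$ and of $[0,\bar w]^N$, pass to a subsequence converging to some $(\widehat z^\infty, w^\infty)$. (iv) Verify feasibility of the limit: continuity of $\pi_n, z_n$ in the cut points (including the degenerate case where cells collapse, which only lowers $N$) gives continuity of the left side of (IC), so the limit satisfies (IC); (LL) is preserved since $[0,\bar w]^N$ is closed. (v) Verify the limit attains the infimum: the incentive cost $\sum \pi_n w_n$ is continuous; for the monitoring cost, under Assumption \ref{assm_cf}(b) $h$ is continuous so $h(\bm\pi)$ is continuous along the sequence, while under Assumption \ref{assm_cf}(a), $H = f(|\mathcal{P}|)$ and $f$ is increasing, so $H$ is lower semicontinuous under cell-collapsing (the number of nondegenerate cells can only drop in the limit, which weakly decreases $f(|\mathcal{P}|)$) — this gives $\liminf$ of the total cost $\ge$ cost at the limit, which together with the minimizing-sequence property shows the limit is optimal.

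The main obstacle is step (v) under Assumption \ref{assm_cf}(a), i.e. the discontinuity of the monitoring cost $f(|\mathcal{P}|)$ as cells degenerate. When a cut point $\widehat z_{n-1}^{(m)}$ and $\widehat z_n^{(m)}$ merge in the limit, the rating scale $|\mathcal{P}|$ jumps down, and $f$ being merely increasing (not continuous in any ambient sense) means we must argue on the correct side of the inequality. The resolution is that $f$ increasing makes $|\mathcal{P}| \mapsto f(|\mathcal{P}|)$ behave like a lower-semicontinuous penalty along collapsing sequences: the limiting contract has a (weakly) coarser partition, hence (weakly) smaller monitoring cost, so $\liminf_m [\text{incentive cost}^{(m)} + \mu f(|\mathcal{P}^{(m)}|)] \ge \text{incentive cost}^\infty + \mu f(|\mathcal{P}^\infty|)$, which is exactly what we need. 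A secondary subtlety is making precise the claim that restricting to interval-type technologies loses nothing for existence — here one notes that Theorem \ref{thm_main} shows any optimal contract already has this form (modulo measure zero), so the infimum over interval-type contracts equals the infimum over all contracts, and attaining the former attains the latter.
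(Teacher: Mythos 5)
Your skeleton coincides with the paper's: both arguments invoke Theorem \ref{thm_main} to reduce the search to interval-type monitoring technologies parametrized by cut points in the compact set $\left\{\inf Z\left(\Omega\right)\leq \widehat{z}_1\leq\cdots\leq\widehat{z}_{N-1}\leq \sup Z\left(\Omega\right)\right\}$, establish continuity of the incentive cost in the cut points, and then split on the two cases of Assumption \ref{assm_cf}; your lower-semicontinuity treatment of $f\left(|\mathcal{P}|\right)$ in case (a) is the same device as the paper's minimization of $W_N+\mu\cdot f\left(m_N\right)$ over the rating scale $N\in\left\{2,\cdots,K\right\}$. The one structural difference is that the paper performs the inner wage minimization first, defines the value function $W\left(\widehat{\mathbf{z}}\right)$ of the cut points alone, and proves its continuity by exhibiting an explicitly feasible wage profile for the perturbed partition (pay $w_n+\epsilon$ on cells with positive $z$-value and check that (\ref{eqn_ic}) survives for small $\delta$), whereas you carry the wage vector along as a second coordinate of a minimizing sequence.

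That difference is exactly where your argument has a genuine soft spot, namely step (ii). The justification ``any contract with an enormous wage has incentive cost exceeding the benchmark'' conflates the wage $w_n$ with the expected wage $\pi_n w_n$: a minimizing sequence may contain a cell with $\pi_n^{(m)}\rightarrow 0$ and $w_n^{(m)}\rightarrow\infty$ while $\pi_n^{(m)}w_n^{(m)}$ stays bounded, so the wages need not lie in any fixed box $[0,\bar{w}]^N$ and the subsequence extraction in step (iii) does not apply to the wage coordinate. The problem is not merely cosmetic: the primitives only require $u$ concave and increasing with $u\left(0\right)=0$, so $u$ may be asymptotically linear (e.g.\ $u\left(w\right)=w+\log\left(1+w\right)$), in which case such a vanishing cell contributes a non-vanishing amount $\pi_n^{(m)}u\left(w_n^{(m)}\right)z_n^{(m)}$ to the left-hand side of (\ref{eqn_ic}); discarding that cell in the limit can then break feasibility, so step (iv) does not close either. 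The repair is essentially the paper's two-stage route: for each fixed partition, all of whose cells have positive measure, the inner wage problem is coercive and admits a minimizer characterized by Lemma \ref{lem_foc}, so one can work with the value function $W\left(\widehat{\mathbf{z}}\right)$ over the compact cut-point set and never needs to compactify the wage space at all.
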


\begin{proof}
See Appendix \ref{sec_proof_baseline}.
\end{proof}

\subsection{Proof Sketch}\label{sec_proofsketch}
The proof of Theorem \ref{thm_main} consists of three steps. The intuitions of steps one and two have already been discussed in Example \ref{exm_main}. Step three is new.

\paragraph{Step one} We first take any monitoring technology $\mathcal{P}$ as given and solve for the optimal wage scheme as in \cite{holmstrom}: 
\begin{equation}
\min_{w: \mathcal{P} \rightarrow \mathbb{R}_{+}}\sum_{A \in \mathcal{P}} P_1\left(A\right)w\left(A\right) \text{ s.t.  (\ref{eqn_ic}) and (\ref{eqn_ll}).} \label{eqn_reducedproblem}
\end{equation}
The next lemma restates Holmstr\"{o}m's (1979) \emph{sufficient statistic principle}:
\begin{lem}\label{lem_foc}
Let $w^*\left(\cdot; \mathcal{P}\right)$ be any solution to Problem  (\ref{eqn_reducedproblem}). Then there exists $\lambda>0$ such that $u'\left(w^*\left(A; \mathcal{P}\right)\right)=1/\left(\lambda z\left(A\right)\right)$ for all $A \in \mathcal{P}$ such that $w^*\left(A; \mathcal{P}\right)>0$. 
\end{lem}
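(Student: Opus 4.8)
The plan is to recognize Problem~(\ref{eqn_reducedproblem}) as a finite-dimensional convex program after changing variables from wages to utilities, and then to read the asserted first-order condition off the Karush--Kuhn--Tucker conditions.

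First I would substitute $v(A) := u\left(w(A)\right)$ for each cell $A \in \mathcal{P}$. Since $u$ is continuous, strictly increasing and satisfies $u(0)=0$, the map $w \mapsto u(w)$ is a strictly increasing bijection from $\mathbb{R}_+$ onto its range $u\left(\mathbb{R}_+\right)$; write $\phi := u^{-1}$ for its inverse, which is strictly increasing and convex (the inverse of a strictly increasing concave function) and, by the inverse function theorem applied to $u$ with $u'>0$, is continuously differentiable with $\phi'(v) = 1/u'\left(\phi(v)\right)$. In the new variable the problem becomes
\begin{equation*}
\min_{v(A) \geq 0} \ \sum_{A \in \mathcal{P}} P_1(A)\, \phi\left(v(A)\right) \quad \text{subject to} \quad \sum_{A \in \mathcal{P}} P_1(A)\, z(A)\, v(A) \geq c,
\end{equation*}
a finite-dimensional problem (at most $|\mathcal{P}| \leq K$ variables) whose objective is convex, being a positively weighted sum of the convex function $\phi$, and whose constraints---the (\ref{eqn_ic}) inequality and the nonnegativity (\ref{eqn_ll}) constraints---are all affine. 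A minimizer exists by hypothesis, so the KKT conditions are necessary at the optimum with no further constraint qualification: there are multipliers $\lambda \geq 0$ and $\left\{\eta(A) \geq 0\right\}_{A \in \mathcal{P}}$ with $P_1(A)\, \phi'\left(v^*(A)\right) = \lambda\, P_1(A)\, z(A) + \eta(A)$ for every $A$, plus complementary slackness on (\ref{eqn_ic}) and on each (\ref{eqn_ll}).

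Then I would extract the claim. For any $A$ with $w^*(A)>0$ we have $v^*(A)>0$, hence $\eta(A)=0$ by complementary slackness, so $\phi'\left(v^*(A)\right) = \lambda\, z(A)$; rewriting $\phi'\left(v^*(A)\right) = 1/u'\left(w^*(A)\right)$ gives $u'\left(w^*(A)\right) = 1/\left(\lambda z(A)\right)$. It remains to rule out $\lambda=0$: if $\lambda=0$ then $\phi'\left(v^*(A)\right)=0$ for every positive-wage cell, which is impossible since $\phi'>0$ everywhere, so all wages would be zero and the left-hand side of (\ref{eqn_ic}) would equal $0<c$, contradicting feasibility of $w^*$. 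Hence $\lambda>0$. (As byproducts one reads off $z(A)>0$ on every positive-wage cell, since $u'>0$, and that (\ref{eqn_ic}) binds, since otherwise shrinking any positive wage strictly lowers cost while preserving feasibility.)

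The argument is essentially routine; the only points needing care are (i) resisting the temptation to apply KKT directly in $w$, where the constraint $\sum_{A} P_1(A) u\left(w(A)\right) z(A) \geq c$ is not concave in $w$ on cells with $z(A)<0$---the utility substitution is exactly what renders it affine---and (ii) the mild regularity used to differentiate $\phi=u^{-1}$ and to place the optimum in the interior of the domain of $\phi$ (automatic, since $\phi\left(v^*(A)\right)=w^*(A)<\infty$). That is the closest thing to an obstacle here.
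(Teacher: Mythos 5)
Your proof is correct and rests on the same first-order-condition logic as the paper's, but it is executed more carefully, and the extra care is not cosmetic. The paper's own proof simply writes down the Lagrangian of Problem~(\ref{eqn_reducedproblem}) in wage space, differentiates with respect to $\tilde{w}_n$, and reads off $u'\left(w_n\right)=1/\left(\lambda z_n\right)$ from complementary slackness on the limited-liability multiplier; it neither justifies why valid multipliers exist for that (non-convex, as you note) formulation nor argues that $\lambda>0$, even though positivity is part of the statement being proved. Your substitution $v=u\left(w\right)$ supplies both missing pieces: it converts the problem into a convex program with affine constraints, for which KKT necessity holds without any constraint qualification, and your elimination of $\lambda=0$ --- it would force $\phi'\left(v^*\left(A\right)\right)=0$ on every positive-wage cell, which is impossible since $\phi'=1/u'>0$, hence all wages are zero and the left-hand side of (\ref{eqn_ic}) is $0<c$ --- pins down $\lambda>0$. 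This utility-space reformulation is the standard Grossman--Hart device; the two arguments arrive at the identical stationarity condition, with yours being the more defensible version.
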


\begin{proof}
See Appendix \ref{sec_proof_baseline}. 
\end{proof}

\paragraph{Step two} We next demonstrate that different performance categories must attain different $z$-values and wages:  

\begin{lem}\label{lem_mlrp}
Assume Assumption \ref{assm_mc}. Let $\langle \mathcal{P}^*, w^*\left(\cdot\right) \rangle$ be any optimal incentive contract that induces high effort from the agent and label the cells of $\mathcal{P}^*$ as $A_1,\cdots, A_N$ such that $z\left(A_1\right)\leq \cdots \leq z\left(A_N\right)$. Then $z\left(A_1\right)<0<\cdots<z\left(A_N\right)$ and $0=w^*\left(A_1\right)<\cdots<w^*\left(A_N\right)$. 
\end{lem}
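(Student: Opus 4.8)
The plan is to combine the sufficient–statistic characterization of the optimal wage scheme (Lemma~\ref{lem_foc}) with a \emph{merging} argument that exploits the fineness property of the monitoring cost, Assumption~\ref{assm_mc}(b).

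\emph{Step 1 (reduce to the wage problem).} Since the monitoring cost $\mu\cdot H(\mathcal{P}^*,1)$ does not depend on the wage scheme, $w^*(\cdot)$ must solve Problem~(\ref{eqn_reducedproblem}) with $\mathcal{P}=\mathcal{P}^*$. Lemma~\ref{lem_foc} then supplies some $\lambda>0$ with $u'(w^*(A))=1/(\lambda z(A))$ on every cell $A$ with $w^*(A)>0$. Because $u'>0$ and $\lambda>0$, such a cell has $z(A)>0$; and because $u''<0$ makes $u'$ strictly decreasing, $w^*(A)=(u')^{-1}\bigl(1/(\lambda z(A))\bigr)$ is a strictly increasing function of $z(A)$ on these cells. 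Equivalently, any cell with $z(A)\le 0$ has $w^*(A)=0$.

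\emph{Step 2 (no two cells share a wage).} Suppose cells $A,A'$ of $\mathcal{P}^*$ satisfy $w^*(A)=w^*(A')=:w$. Replace them by the single cell $A\cup A'$ with wage $w$, keeping the remaining cells and wages unchanged; this is again a monitoring technology, with one fewer cell, all of positive measure. The identity $P_1(A\cup A')\,z(A\cup A')=\mathbf{E}[Z\mathbf{1}_{A\cup A'}\mid a=1]=P_1(A)z(A)+P_1(A')z(A')$ shows that both the left-hand side of~(\ref{eqn_ic}) and the expected wage are unchanged, so the new contract still satisfies~(\ref{eqn_ic}) and~(\ref{eqn_ll}); since it uses one fewer category, Assumption~\ref{assm_mc}(b) (with part~(a) to relabel the probability vector) makes its monitoring cost strictly smaller, contradicting the optimality of $\langle\mathcal{P}^*,w^*(\cdot)\rangle$. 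The lone exception, $A\cup A'=\Omega$, is vacuous: then the original contract already has left-hand side of~(\ref{eqn_ic}) equal to $u(w)\,\mathbf{E}[Z\mid a=1]=0<c$, contradicting its feasibility. Hence the $N$ wages are pairwise distinct.

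\emph{Step 3 (signs and ordering).} First, $N\ge 2$, since a one-cell partition makes the left-hand side of~(\ref{eqn_ic}) equal to $u(w)\,\mathbf{E}[Z\mid a=1]=0<c$. If every cell had positive wage, Step~1 would give $z(A)>0$ for all cells, contradicting $\sum_{A\in\mathcal{P}^*}P_1(A)z(A)=\mathbf{E}[Z\mid a=1]=0$ with all $P_1(A)>0$. Combined with distinctness of wages (Step~2), exactly one cell — call it $A_1$ — has $w^*(A_1)=0$. The remaining cells have positive wages, hence $z(A_n)>0$ for $n\ge 2$, and the same budget identity forces $z(A_1)=-\sum_{n\ge 2}P_1(A_n)z(A_n)/P_1(A_1)<0$. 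Finally, on $A_2,\dots,A_N$ the map $z(A)\mapsto w^*(A)$ is strictly increasing (Step~1), so ordering these cells by increasing $z$-value also orders them by increasing wage; prepending $A_1$, which carries the unique nonpositive (indeed negative) $z$-value and wage $0$, yields $z(A_1)<0<z(A_2)<\cdots<z(A_N)$ and $0=w^*(A_1)<\cdots<w^*(A_N)$, consistent with the labeling by nondecreasing $z$-value. The main obstacle is the merge in Step~2: one must verify that collapsing two wage-equal cells leaves every constraint intact — which hinges on the additivity of $\mathbf{E}[Z\mathbf{1}_{(\cdot)}\mid a=1]$ — and that the monitoring cost strictly drops, which is exactly the content of Assumption~\ref{assm_mc}(b) and the reason a monitoring technology can never profitably split a wage category. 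The rest is bookkeeping around the budget identity $\mathbf{E}[Z\mid a=1]=0$ and the strict monotonicity delivered by $u''<0$.
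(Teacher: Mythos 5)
Your proof is correct and follows essentially the same route as the paper's: Lemma \ref{lem_foc} to tie positive wages to positive $z$-values and make $w$ strictly increasing in $z$, the merging argument under Assumption \ref{assm_mc}(b) to force distinct wages, and the identity $\mathbf{E}\left[Z \mid a=1\right]=0$ to pin down the signs. The only (harmless) differences are that you obtain $w^*\left(A_1\right)=0$ directly from the budget identity plus distinctness of wages, whereas the paper gets it by observing that a positive wage on the negative-$z$ cell could be profitably cut to zero, and that you spell out the edge case where the merge would collapse to $\Omega$, a detail the paper leaves implicit.
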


\begin{proof}
See Appendix \ref{sec_proof_baseline}. 
\end{proof}

\paragraph{Step three} We finally demonstrate that the assignment of $z$-values into wage categories is positive assortative. In Example \ref{exm_main}, we sketched a proof based on supermodularity and pointed out the difficulties of extending that argument to multidimensional environments.  The argument below overcomes these difficulties.

Take any optimal incentive contract with distinct performance categories $A_j$ and $A_k$. From Lemma \ref{lem_mlrp}, we know that $z\left(A_j\right)\neq z\left(A_k\right)$. Fix any $\epsilon>0$, and take any $A'_\epsilon\subset A_j$ and $A''_{\epsilon}\subset A_k$ such that $P_1\left(A'_\epsilon\right)=P_1\left(A''_\epsilon\right)=\epsilon$ and $z\left(A'_\epsilon\right)=z' \neq z\left(A''_\epsilon\right)=z''$. In words,  $A'_\epsilon$ and $A''_\epsilon$ have the same probability $\epsilon$ under $a=1$ but different $z$-values that are independent of $\epsilon$. Lemma \ref{lem_smallSet} of Appendix \ref{sec_proof_baseline_lemma} proves existence of $A'_\epsilon$ and $A''_\epsilon$ when $\epsilon$ is small. 

Consider a perturbation to the monitoring technology that ``swaps'' $A'_\epsilon$ and $A''_\epsilon$. Post the perturbation, the new performance categories, denoted by $A_n\left(\epsilon\right)$'s, become $A_j\left(\epsilon\right)=\left(A_j \backslash A'_\epsilon\right)\cup A''_\epsilon$, $A_k\left(\epsilon\right)=\left(A_k \backslash A''_\epsilon\right)\cup A'_\epsilon$ and $A_n\left(\epsilon\right) =A_n$ for $n \neq j,k$. Since the perturbation has no effect on the probabilities of the performance categories under $a=1$, it does not affect the monitoring cost by Assumption \ref{assm_mc}(a). Meanwhile, it changes the principal's Lagrangian to the following (ignore the (\ref{eqn_ll}) constraint): 
\[
\mathcal{L}\left(\epsilon\right)=\sum_{n} \pi_n \left(w_n\left(\epsilon\right)-\lambda(\epsilon)u\left(w_n\left(\epsilon\right)\right)z_n \right)+\lambda\left(\epsilon\right)c, 
\]
where $\pi_n$ denotes the probability of $A_n$ (equivalently $A_n\left(\epsilon\right)$) under $a=1$, $w_n\left(\epsilon\right)$ the optimal wage at $A_n\left(\epsilon\right)$, and $\lambda\left(\epsilon\right)$ the Lagrange multiplier associated with the (\ref{eqn_ic}) constraint. A close inspection of the Lagrangian leads to the following conjecture: to minimize $\mathcal{L}\left(\epsilon\right)$, the assignment of Lagrange multiplier-weighted $z$-values to performance categories must be positive assortative in the direction of agent utilities. 

To develop intuition, we assume differentiability and obtain
\begin{align*}
\mathcal{L}'\left(0
\right)=&\sum_{n} \pi_n w'_n\left(0\right)-\lambda'\left(0\right)\underbrace{\left(\sum_{n} \pi_n u\left(w_n\left(0\right)\right) z_n\left(0\right)-c\right)}_\text{\quad \quad \text{ } $(1)$\quad$=0$}\\
&-\lambda(0)\left(\sum_{n} \pi_n  \cdot \underbrace{u'\left(w_n\left(0\right)\right) z_n\left(0\right)}_\text{\quad \quad \quad \quad \quad \text{ } $(2)$\quad $=1/\lambda(0)$} \cdot {w'_n}\left(0\right)+\sum_{n} \pi_n u\left(w_n\left(0\right)\right) z'_n\left(0\right)\right)\\
=&\sum_{n} \pi_n w'_n\left(0\right)-0-\sum_{n} \pi_n w'_n\left(0\right)-\lambda\left(0\right)\sum_{n} \pi_n u\left(w_n\left(0\right)\right) z'_n\left(0\right)\\
=&-\lambda\left(0\right)\sum_{n} \pi_n u\left(w_n\left(0\right)\right) z'_n\left(0\right)\\
= &\text{ } \lambda\left(0\right)\left(z''-z'\right)\left(u\left(w_k\left(0\right)\right)-u\left(w_j\left(0\right)\right)\right).
\end{align*}
In the above expression, $(1)=0$ because the (\ref{eqn_ic}) constraint binds under the original contract, and $(2)=1/\lambda\left(0\right)$ by Lemma \ref{lem_foc}. These findings resolve our concerns raised in Section \ref{sec_preview}, showing that the effects of our perturbation on the Lagrange multiplier and wages are negligible.

To complete the proof, note that $\mathcal{L}'\left(0\right) \geq 0$ by optimality, and that $\mathcal{L}'\left(0\right) \neq 0$ because $\lambda\left(0\right)>0$, $z'' \neq z'$ and $w_j\left(0\right)\neq w_k\left(0\right)$ (Lemma \ref{lem_mlrp}). Combining yields $\mathcal{L}'\left(0\right)>0$, so our conjecture is indeed true. $Z$-convexity is immediate: if a performance category contains extreme but not intermediate $z$-values, then the assignment of $z$-values goes in the wrong direction and an improvement can be constructed.

The above proof strategy yields the endogenous direction of sorting raw data into performance categories, which is relatively straightforward in the baseline model but is less so in later extensions. The proof in Appendix \ref{sec_proof_baseline} does not assume differentiability and handles the limited liability constraint, too. 

\subsection{Implications}\label{sec_implication}
\paragraph{Strict MLRP} Theorem \ref{thm_main} implies that the signal generated by any optimal monitoring technology must satisfy the strict monotone likelihood ratio property (hereafter \emph{strict MLRP}) with respect to the order induced by $z$-values: 

\begin{defn}
For any $A, A' \in \Sigma$ of positive measures, write $A \overset{z}{\preceq} A'$ if $z\left(A\right) \leq z\left(A'\right)$. 
\end{defn}

\begin{cor}\label{cor_mlrp}
The signal $X: \Omega \rightarrow \mathcal{P}^*$ generated by any optimal monitoring technology $\mathcal{P}^*$ satisfies strict MLRP with respect to $\overset{z}{\preceq}$, i.e., any $A, A' \in \mathcal{P}^*$ satisfy $A \overset{z}{\preceq} A'$ if and only if $z\left(A\right) < z\left(A'\right)$.
\end{cor}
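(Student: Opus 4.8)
The plan is to obtain this directly from Lemma \ref{lem_mlrp}, since the corollary is really Step two of the proof sketch recast in the language of likelihood ratios. First I would dispatch the easy direction of the claimed equivalence: $z(A) < z(A')$ trivially gives $z(A) \le z(A')$, i.e.\ $A \overset{z}{\preceq} A'$. So the only content is the converse, namely that for any two distinct cells $A, A'$ of an optimal monitoring technology $\mathcal{P}^*$, $z(A) \le z(A')$ forces $z(A) < z(A')$ --- equivalently, that no two distinct cells of $\mathcal{P}^*$ share a common $z$-value. But Lemma \ref{lem_mlrp} asserts exactly this: labeling the cells $A_1,\dots,A_N$ so that $z(A_1)\le\cdots\le z(A_N)$, one in fact has the strict chain $z(A_1) < 0 < \cdots < z(A_N)$. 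Hence distinct cells carry strictly ordered $z$-values, which is the asserted strict MLRP with respect to $\overset{z}{\preceq}$.

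To justify the ``MLRP'' terminology I would add one line relating $z$-values to the likelihood ratio of the output signal. Since $Z = 1 - p_0/p_1$, one has $p_0 = (1-Z)\,p_1$, so for any cell $A$ of positive measure
\[
\frac{P_0(A)}{P_1(A)} = \frac{\int_A p_0}{\int_A p_1} = \frac{\int_A (1-Z)\,p_1}{\int_A p_1} = 1 - z(A).
\]
Thus the likelihood ratio $P_0(A)/P_1(A)$ of signal realization $A$ is a strictly decreasing affine function of $z(A)$, so ordering the cells by $\overset{z}{\preceq}$ coincides with ordering them by increasing $P_1/P_0$; strict monotonicity of this order along the cells --- genuine strict MLRP --- is then equivalent to the absence of ties among the $z(A)$'s, which Lemma \ref{lem_mlrp} supplies.

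I do not expect any real obstacle here: the argument is a one-line appeal to Lemma \ref{lem_mlrp} together with the affine identity above. The only points requiring a word of care are the degenerate case $N=1$ (where the distinctness claim is vacuous and the equivalence holds automatically) and, if one prefers to read $\mathcal{P}^*$ through the interval representation of Theorem \ref{thm_main} under Assumption \ref{assm_regular}, the remark that each interval $[\widehat z_{n-1}, \widehat z_n)$ has positive $P_1$-measure and hence, $Z$ being atomless, is non-degenerate; two adjacent non-degenerate intervals then have conditional means satisfying $z_n < \widehat z_n \le z_{n+1}$ with the second inequality strict, again yielding distinctness.
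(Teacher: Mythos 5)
Your proposal is correct and matches the paper's own justification: the corollary is exactly the observation that, by Lemma \ref{lem_mlrp}, distinct cells of an optimal partition carry strictly ordered $z$-values (else merging ties would save monitoring cost at no incentive cost). The affine identity $P_0(A)/P_1(A)=1-z(A)$ you add is a harmless, correct elaboration of why the $\overset{z}{\preceq}$ order is the likelihood-ratio order, but it does not change the argument.
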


While the signal generated by any monitoring technology trivially satisfies the weak MLRP with respect to $\overset{z}{\preceq}$ (i.e., replace ``$<$'' with ``$\leq$'' in Corollary \ref{cor_mlrp}), it violates the strict MLRP if there are multiple performance categories that attain the same $z$-value. By contrast, the signal generated by any optimal monitoring technology must satisfy the strict MLRP with respect to $\overset{z}{\preceq}$, because merging performance categories of the same $z$-value saves the monitoring cost while leaving the incentive cost unaffected. 

\paragraph{Comparative statics} The parameter $\mu$ captures factors that affect the (opportunity) cost of data processing and analysis. Factors that reduce $\mu$ include, but are not limited to: the advent of IT-based HR management systems in the 90's, advancements in speech analytics, increases in computing power, etc..

To facilitate comparative statics analysis, we write any choice of optimal incentive contract as $\langle \mathcal{P}^*(\mu), w^*(\cdot; \mu) \rangle$ to make its dependence on $\mu$ explicit: 

\begin{prop}\label{prop_mu}
Fix any $0<\mu<\mu'$. For any choices of $\langle \mathcal{P}^*(\mu), w^*(\cdot; \mu) \rangle$ and $\langle \mathcal{P}^*\left(\mu'\right), w^*\left(\cdot; \mu'\right) \rangle$:
\begin{enumerate}[(i)]
\item $\displaystyle \sum_{A \in \mathcal{P}\left(\mu\right)} P_1\left(A\right)w^*\left(A; \mu\right) \leq \displaystyle \sum_{A \in \mathcal{P}\left(\mu'\right)} P_1\left(A\right)w^*\left(A; \mu'\right)$;
\item $H\left(\mathcal{P}^*\left(\mu\right), 1\right) \geq H\left(\mathcal{P}^*\left(\mu'\right), 1\right)$;
\item $|\mathcal{P}^*(\mu)|\geq |\mathcal{P}^*(\mu')|$ under Assumption \ref{assm_cf}(a). 
\end{enumerate}
\end{prop}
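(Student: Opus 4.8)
The plan is to prove the three monotonicity claims by a revealed-preference comparison between the two optimal contracts, exploiting the fact that the objective is additively separable into incentive cost and $\mu$ times monitoring cost. For brevity write $W(\mu) = \sum_{A \in \mathcal{P}^*(\mu)} P_1(A) w^*(A;\mu)$ for the (minimized) incentive cost and $H(\mu) = H(\mathcal{P}^*(\mu),1)$ for the monitoring-cost component, so that the optimal value at parameter $\mu$ is $V(\mu) = W(\mu) + \mu H(\mu)$. The key observation is that the contract $\langle \mathcal{P}^*(\mu'), w^*(\cdot;\mu')\rangle$ is feasible in the $\mu$-problem (feasibility — i.e.\ \eqref{eqn_ic} and \eqref{eqn_ll} — does not depend on $\mu$ at all), and symmetrically $\langle \mathcal{P}^*(\mu), w^*(\cdot;\mu)\rangle$ is feasible in the $\mu'$-problem.

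First I would write down the two resulting inequalities from optimality: $W(\mu) + \mu H(\mu) \le W(\mu') + \mu H(\mu')$ (the $\mu$-optimum beats the $\mu'$-contract in the $\mu$-problem) and $W(\mu') + \mu' H(\mu') \le W(\mu) + \mu' H(\mu)$ (the $\mu'$-optimum beats the $\mu$-contract in the $\mu'$-problem). Adding these two inequalities and cancelling $W(\mu) + W(\mu')$ from both sides yields $(\mu' - \mu)\bigl(H(\mu) - H(\mu')\bigr) \ge 0$, and since $\mu' > \mu$ this gives $H(\mu) \ge H(\mu')$, which is claim (ii). For claim (i), I would substitute $H(\mu) \ge H(\mu')$ back into the first optimality inequality: $W(\mu) - W(\mu') \le \mu\bigl(H(\mu') - H(\mu)\bigr) \le 0$, hence $W(\mu) \le W(\mu')$. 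This is the standard "supermodularity in $(\mu,\text{choice})$" monotone-comparative-statics argument, and it is completely elementary once one notes the separability; there is no real obstacle here beyond being careful that the constraint set is genuinely $\mu$-independent (which the model makes explicit, since $\mu$ enters only the objective).

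For claim (iii), under Assumption \ref{assm_cf}(a) we have $H(\mathcal{P},1) = f(|\mathcal{P}|)$ with $f$ strictly increasing, so $H(\mu) \ge H(\mu')$ becomes $f(|\mathcal{P}^*(\mu)|) \ge f(|\mathcal{P}^*(\mu')|)$, and strict monotonicity of $f$ immediately gives $|\mathcal{P}^*(\mu)| \ge |\mathcal{P}^*(\mu')|$. The only subtlety is that claims (i) and (ii) are stated for \emph{any} choices of optimizers, so I should make sure the argument above does not secretly assume the same optimizer is used on both sides — it does not, since the revealed-preference inequalities hold for any pair of optimal contracts, one at $\mu$ and one at $\mu'$. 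I would also remark that existence of the optimizers being compared is guaranteed by Theorem \ref{thm_exist} under Assumptions \ref{assm_mc}--\ref{assm_cf}, so the statement is not vacuous. I expect the main (and only) thing to watch is the direction of the inequalities and the clean cancellation in the "adding the two inequalities" step; everything else is immediate.
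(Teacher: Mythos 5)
Your proposal is correct and is essentially the paper's own argument: the paper simply asserts that the claims ``follow from the optimalities'' of the two contracts, and your revealed-preference computation (adding the two optimality inequalities to get (ii), substituting back to get (i), and invoking strict monotonicity of $f$ for (iii)) is the standard way to make that assertion precise. If anything, your write-up is more careful than the paper's in noting that (i) requires (ii) as an input rather than following from a single optimality inequality alone.
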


\begin{proof}
Part (i) follows from the optimalities of $\mathcal{P}^*(\mu)$ and $\mathcal{P}^*(\mu')$. Parts (ii) and (ii) are immediate.
\end{proof}

Proposition \ref{prop_mu} shows that as data processing and analysis become cheaper, the principal pays less wage on average and the information carried by the output signal becomes finer. In the case where the monitoring cost is an increasing function of the rating scale (see, e.g., \cite{introhr}), the optimal rating scale is nonincreasing in $\mu$. For other monitoring cost functions such as entropy, we can first compute the cutoff $z$-values and then the optimal rating scale as in Example \ref{exm_main}.\footnote{In general, this is not an easy task because perturbations of cutoff $z$-values (which differ from the perturbation considered in Section \ref{sec_proofsketch}) affect wages endogenously through the Lagrange multipliers of the incentive constraints. } Figure \ref{figure_ratingscale} plots the numerical solutions obtained in a special case. 

The above findings are consistent with several strands of empirical facts. Among others, access to IT has proven to increase the fineness of the performance grids among manufacturing companies, holding other things constant (\cite{bloomsurvey}; \cite{it}).\footnote{See the appendices of \cite{bloomvanreenenwp} for survey questions regarding the fineness of the performance grids, e.g., ``Each employee is given a red light (not performing), an amber light (doing well and meeting targets), a green light (consistently meeting targets, very high performer) and a blue light (high performer capable of promotion of up to two levels),''  versus ``rewards is based on an individual's commitment to the company measured by seniority.'' } Crowdsourcing the processing and analysis of real-time data has enabled the ``exact individual diagnosis'' that separates distinctive and mediocre performers in companies like GE and Zalando (\cite{mckinsey}). 

\begin{figure}[!h]
    \centering
     \includegraphics[scale=.6]{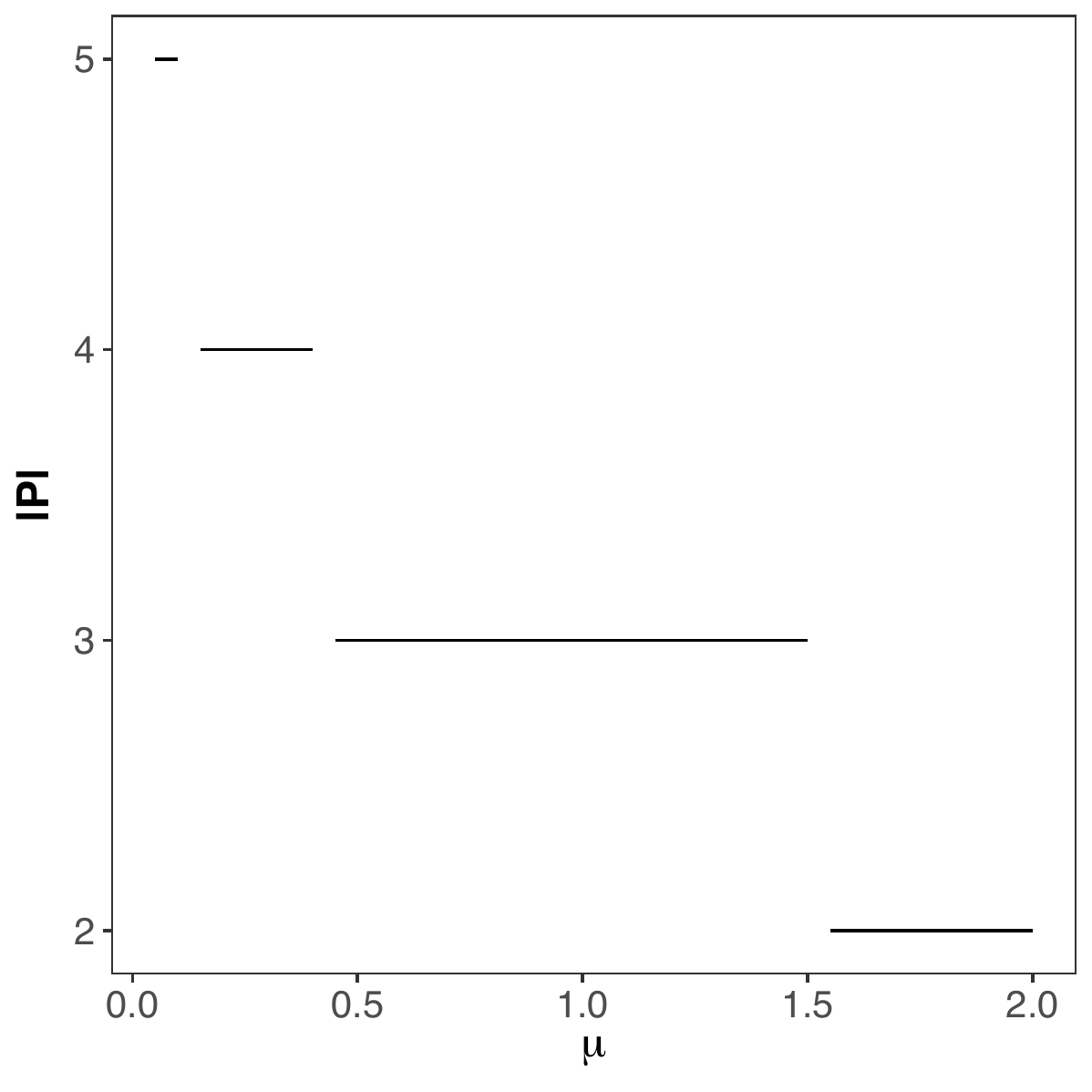}
    \caption{Plot the optimal rating scale against $\mu$: entropy cost, $u\left(w\right) =\sqrt{w}$, $Z \sim U\left[-1/2, 1/2\right]$, $c=1$, $K=100$.}\label{figure_ratingscale}
\end{figure}

\section{Extension: Multiple Agents}\label{sec_multiagent}
\subsection{Setup}\label{sec_multiagent_setup}
Each of the two agents $i=1,2$ earns a payoff $u_i\left(w_i\right)-c_i\left(a_i\right)$ from spending a nonnegative wage $w_i \geq 0$ and privately exerting high or low effort $a_i \in \{0,1\}$. The function $u_i:\mathbb{R}_+\rightarrow \mathbb{R}$ satisfies $u_i\left(0\right)=0$, $u_i'>0$ and $u''_i<0$, and $c_i\left(1\right)=c_i>c_i\left(0\right)=0$. 

Each effort profile ${\bf{a}}=a_1 a_2$ generates a probability space $\left(\Omega, \Sigma, P_{{\bf{a}}}\right)$, where $\Omega$ is a finite-dimensional Euclidean space that comprises agents' performance data, $\Sigma$ is the Borel sigma-algebra on $\Omega$, and $P_{{\bf{a}}}$ is the probability measure on $\left(\Omega, \Sigma\right)$. $P_{{\bf{a}}}$'s are assumed to be mutually absolutely continuous, and the probability density function $p_{{\bf{a}}}$'s they induce are well-defined and everywhere positive. 

In this new setting, a monitoring technology $\mathcal{P}$ can be any partition of $\Omega$ with at most $K$ cells that are all of positive measures, and a wage scheme ${\bf{w}}: \mathcal{P} \rightarrow \mathbb{R}_{+}^2$ maps each cell $A$ of $\mathcal{P}$ to a vector ${\bf{w}}\left(A\right)=\left(w_1\left(A\right), w_2\left(A\right)\right)^{\top}$ of nonnegative wages. For any data point $\omega$, let $A\left(\omega\right)$ be the unique performance category that contains $\omega$ and let ${\bf{w}}\left(A\left(\omega\right)\right)$ be the wage vector associated with $A\left(\omega\right)$. Time evolves as follows: 
\begin{enumerate}
\item the principal commits to $\langle \mathcal{P}, {\bf{w}}\left(\cdot\right)\rangle$; 
\item agent $i$ privately chooses $a_i \in \left\{0,1\right\}$, $i=1,2$;
\item Nature draws $\omega$ from $\Omega$ according to $P_{{\bf{a}}}$;
\item the monitoring technology outputs $A\left(\omega\right)$;
\item the principal pays $w_i\left(A\left(\omega\right)\right)$ to agent $i=1, 2$.
\end{enumerate}

Consider the problem of inducing both agents to exert high effort. Write $\bm{1}$ for $\left(1,1\right)^{\top}$ and define a vector-valued random variable ${\bf{Z}}=\left(Z_1, Z_2\right)^{\top}$ by 
\[
Z_i\left(\omega\right)=1-\frac{p_{a_i=0, a_{-i}=1}\left(\omega\right)}{p_{\bm{1}}\left(\omega\right)} \text{ }  \forall \omega \in \Omega, i=1,2. 
\]
Define the $\bf{z}$-value of any set $A \in \Sigma$ of positive measure by $\left(z_1\left(A\right), z_2\left(A\right)\right)^{\top}$, where 
\[
z_i\left(A\right)=\mathbf{E}\left[Z_i \mid A; {\bf{a}}={\bf{1}}\right] \text{ } \forall  i=1,2.
\]
A contract is incentive compatible for agent $i$ if
\begin{equation}
\tag{IC$_i$} \sum_{A \in \mathcal{P}} P_{\bm{1}}(A)u_i\left(w_i\left(A\right)\right)z_i\left(A\right) \geq c_i, \label{eqn_ici}
\end{equation}
and it satisfies agent $i$'s limited liability constraint if 
\begin{equation}
\tag{LL$_i$} w_i\left(A\right) \geq 0\text{ } \forall A \in \mathcal{P}. \label{eqn_lli}
\end{equation}
An optimal contract minimizes the total implementation cost under the high effort profile, subject to agents' incentive compatibility constraints and limited liability constraints: 
\[
\min_{\langle \mathcal{P}, {\bf{w}}(\cdot) \rangle} \sum_{A \in \mathcal{P}} P_{{\bf{a}}}(A) \sum_{i=1}^2 w_i(A) + \mu \cdot H(\mathcal{P}, {\bf{1}}) \text{ s.t. (\ref{eqn_ici}) and (\ref{eqn_lli}), } i=1, 2.
\]

\subsection{Analysis}\label{sec_multiagent_analysis}
The next definition generalizes $Z$-convexity:
\begin{defn}
A set $A \in \Sigma$ is ${\bf{Z}}$-convex if the following holds for all $\omega', \omega'' \in A$ such that ${\bf{Z}}\left(\omega'\right) \neq {\bf{Z}}\left(\omega''\right)$:
\[
\left\{\omega \in \Omega: {\bf{Z}}\left(\omega\right)=(1-s) \cdot {\bf{Z}}\left(\omega'\right)+s \cdot {\bf{Z}}\left(\omega''\right) \text{ for some } s \in (0,1) \right\}\subset A.
\]
\end{defn}

The next two assumptions impose regularities on the principal's problem analogously to Assumptions \ref{assm_regular} and \ref{assm_compact}: 

\begin{assm}\label{assm_regular_multiagent}
$\bf{Z}$ is distributed atomelessly on a connect set ${\bf{Z}}\left(\Omega\right)$ in $\mathbb{R}^2$ under ${\bf{a}}={\bf{1}}$.   
\end{assm}

\begin{assm}\label{assm_compact_multiagent}
${\bf{Z}}\left(\Omega\right)$ is compact set in $\mathbb{R}^2$ with $\dim {\bf{Z}}\left(\Omega\right)=2$. 
\end{assm}

The next theorems extend Theorems \ref{thm_main}  and \ref{thm_exist} to encompass multiple agents: 

\begin{thm}\label{thm_multiagent}
Assume Assumptions \ref{assm_mc}, \ref{assm_regular_multiagent} and \ref{assm_compact_multiagent}. Then any optimal monitoring technology  comprises $\bf{Z}$-convex cells that constitute convex polygons in $\mathbb{R}^2$. 
\end{thm}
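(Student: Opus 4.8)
The plan is to adapt the three-step strategy of Section \ref{sec_proofsketch} to the vector-valued environment, with the ``swap'' perturbation now carrying two Lagrange multipliers. First I would set up the principal's Lagrangian for the two-agent problem, attaching multipliers $\lambda_1,\lambda_2>0$ to (\ref{eqn_ici}) for $i=1,2$ (strict positivity follows because each incentive constraint binds, just as in Lemma \ref{lem_mlrp}). For a fixed monitoring technology $\mathcal{P}$, the inner wage-minimization separates across agents, so Lemma \ref{lem_foc} applies coordinatewise: $u_i'(w_i^*(A))=1/(\lambda_i z_i(A))$ on the set where $w_i^*(A)>0$. This is the analogue of Step one. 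For Step two, the same merging argument as in Lemma \ref{lem_mlrp} shows that in an optimal contract two distinct cells cannot share the same $\mathbf{z}$-value (merging them preserves both incentive constraints and the incentive cost while strictly lowering the monitoring cost by Assumption \ref{assm_mc}(b)), so cells are identified by distinct points of $\mathbf{Z}(\Omega)\subset\mathbb{R}^2$.

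The core is Step three. Take an optimal contract and two distinct cells $A_j,A_k$; pick small subsets $A_\epsilon'\subset A_j$, $A_\epsilon''\subset A_k$ of equal probability $\epsilon$ under $\mathbf{a}=\mathbf{1}$ but with distinct $\mathbf{z}$-values $\mathbf{z}'\ne\mathbf{z}''$ — the vector-valued version of Lemma \ref{lem_smallSet} (which one should state and prove in the appendix: since $\dim\mathbf{Z}(\Omega)=2$ and $\mathbf{Z}$ is atomless, near any point one can carve out small sets realizing a prescribed conditional mean). Swap them to form $A_j(\epsilon),A_k(\epsilon)$. The probabilities of all cells under $\mathbf{1}$ are unchanged, so by Assumption \ref{assm_mc}(a) the monitoring cost is unaffected. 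Differentiating the perturbed Lagrangian $\mathcal{L}(\epsilon)=\sum_n\pi_n\sum_i\bigl(w_{i,n}(\epsilon)-\lambda_i(\epsilon)u_i(w_{i,n}(\epsilon))z_{i,n}\bigr)+\sum_i\lambda_i(\epsilon)c_i$ at $\epsilon=0$, the envelope-type cancellations of Section \ref{sec_proofsketch} occur separately in each agent's block: the terms in $\lambda_i'(0)$ vanish because (\ref{eqn_ici}) binds, and the terms in $w_{i,n}'(0)$ cancel by the first-order condition $u_i'(w_{i,n})z_{i,n}=1/\lambda_i$. What survives is
\[
\mathcal{L}'(0)=\sum_{i=1}^2 \lambda_i(0)\,(z_i''-z_i')\,\bigl(u_i(w_i(A_k))-u_i(w_i(A_j))\bigr).
\]
Optimality forces $\mathcal{L}'(0)\ge 0$ for this swap and also for the reverse swap, which flips the sign of the bracket $(z_i''-z_i')$; hence $\mathcal{L}'(0)=0$, i.e.
\[
\sum_{i=1}^2 \lambda_i(0)\,(z_i(A_k)-z_i(A_j))\,\bigl(u_i(w_i(A_k))-u_i(w_i(A_j))\bigr)=0
\]
for every pair of cells $A_j,A_k$ of an optimal contract. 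This ``positive assortative in the direction of agent utilities, with multiplier weights'' condition plays here the role that the cutoff rule played in Theorem \ref{thm_main}.

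From the identity above I would extract the geometry. Because $u_i'(w_i(A))=1/(\lambda_i z_i(A))$ on the region where $w_i(A)>0$, the wage $w_i(A)$ is a strictly increasing function of $z_i(A)$ on that region (and is pinned at $0$ below the corresponding threshold), so $u_i(w_i(A))-u_i(w_i(A'))$ has the same sign as — and is a monotone function of — a truncation of $z_i(A)-z_i(A')$. The vanishing of the weighted sum then says that the boundary between any two optimal cells, viewed in the $\mathbf{z}$-plane, is a level set of a function whose gradient is a fixed positive combination of the coordinate directions — i.e. cells are separated by hyperplanes in $\mathbb{R}^2$, so each cell is the preimage under $\mathbf{Z}$ of a convex polygon; intersecting finitely many such half-planes keeps them convex polygons, and $\mathbf{Z}$-convexity of the cells in $\Omega$ is then immediate from the definition (a cell containing two $\mathbf{z}$-extremes but missing an intermediate $\mathbf{z}$-value could be improved by the swap above). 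The main obstacle I anticipate is precisely this last geometric extraction: with possibly nonlinear $u_i$ and with the limited-liability kinks, ``the separating locus is a hyperplane'' is not literally the vanishing condition but a consequence of it together with monotonicity of $w_i$ in $z_i$; making this rigorous — handling the truncation regions where some $w_i=0$, and ruling out pathological non-polygonal cells using $\dim\mathbf{Z}(\Omega)=2$ and Assumption \ref{assm_compact_multiagent} — is where the real work lies, and is presumably why the paper defers it to the appendix. A secondary technical point is establishing the vector analogue of Lemma \ref{lem_smallSet} (existence of the swap sets with prescribed conditional means), which uses atomlessness and full dimensionality of $\mathbf{Z}(\Omega)$.
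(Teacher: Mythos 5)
Your setup (coordinatewise first-order conditions, the swap perturbation, and the envelope cancellations killing the $\lambda_i'(0)$ and $w_{i,n}'(0)$ terms) matches the paper's strategy, and your formula for $\mathcal{L}'(0)$ agrees with the paper's $\left(\mathbf{u}_k-\mathbf{u}_j\right)^{\top}\Lambda\left(\mathbf{z}''-\mathbf{z}'\right)$. The genuine gap is the next step: you claim that a ``reverse swap'' flips the sign of $(z_i''-z_i')$ and hence that $\mathcal{L}'(0)=0$. There is no such reverse swap: the perturbation is parametrized by $\epsilon\ge 0$ only, and undoing the swap merely returns you to the original contract, so optimality delivers only the one-sided variational inequality $\mathcal{L}'(0)\ge 0$. (A swap in the ``opposite direction'' would require carving a subset of $A_j$ with $\mathbf{z}$-value $\mathbf{z}''$, which need not exist.) Your claimed identity $\sum_i\lambda_i\left(z_i(A_k)-z_i(A_j)\right)\left(u_i(w_{i,k})-u_i(w_{i,j})\right)=0$ is in fact false: specializing to one agent it would force $(z_k-z_j)(u(w_k)-u(w_j))=0$ for distinct cells, contradicting Lemma \ref{lem_mlrp} and Theorem \ref{thm_main}, which give \emph{strict} positive assortativity. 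The correct conclusion is the inequality $\left(\mathbf{u}_k-\mathbf{u}_j\right)^{\top}\Lambda\left(\mathbf{z}''-\mathbf{z}'\right)\ge 0$ for all admissible sub-means $\mathbf{z}'$ of $A_j$ and $\mathbf{z}''$ of $A_k$; that inequality already separates any two cells by a hyperplane with normal $\Lambda\left(\mathbf{u}_k-\mathbf{u}_j\right)$ and yields the convex polygons, so your geometric extraction should be rebuilt on the inequality rather than the equality.

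Two further divergences from the paper's appendix proof are worth flagging. First, the paper does not establish the exact two-dimensional analogue of Lemma \ref{lem_smallSet} that you posit (small subsets with a \emph{prescribed} conditional mean); Lemma \ref{lem_smallSet_multiagent} only produces sets whose $\mathbf{z}$-value is within $\delta$ of the target, and the proof carries the resulting $\mathcal{O}(\delta)$ error through Equations (\ref{eqn3})--(\ref{eqn4}). Second, the paper argues by contradiction on $\mathbf{Z}$-convexity using a three-set perturbation ($A',A''\subset A_j$, $\tilde A\subset A_k$ with $\tilde{\mathbf{z}}$ an interior convex combination of $\mathbf{z}'$ and $\mathbf{z}''$) and two alternative moves whose first-order effects have \emph{opposite} signs; it must then separately handle the degenerate case $\left(\mathbf{u}_k-\mathbf{u}_j\right)^{\top}\Lambda\left(\mathbf{z}''-\mathbf{z}'\right)=0$ by using atomlessness and connectedness of $\mathbf{Z}(\Omega)$ to relocate the perturbation where the inner product is nonzero. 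Your proposal is silent on this degenerate case, which is precisely where Assumptions \ref{assm_regular_multiagent} and \ref{assm_compact_multiagent} do their work.
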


\begin{thm}\label{thm_exist_multiagent}
An optimal incentive contract that induces high effort from both agents exists under Assumptions \ref{assm_mc}, \ref{assm_cf}, \ref{assm_regular_multiagent}  and  \ref{assm_compact_multiagent}. 
\end{thm}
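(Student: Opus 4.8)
The plan is to recast the design problem as a finite-dimensional optimization over achievable \emph{profiles} and then invoke a convexity theorem for nonatomic vector measures to obtain compactness. First I would observe that both the objective and the constraints (\ref{eqn_ici})--(\ref{eqn_lli}) depend on the monitoring technology $\mathcal{P}=\{A_1,\dots,A_N\}$ only through the finite data $\left(\pi_n, z_1(A_n), z_2(A_n)\right)_{n=1}^N$, where $\pi_n=P_{\mathbf{1}}(A_n)$. It is convenient to encode these as the values $m_n := \left(P_{\mathbf{1}}(A_n),\, \int_{A_n} Z_1\,dP_{\mathbf{1}},\, \int_{A_n} Z_2\,dP_{\mathbf{1}}\right)\in\mathbb{R}^3$ of the $\mathbb{R}^3$-valued set function $m(B)=\left(P_{\mathbf{1}}(B),\int_B Z_1\,dP_{\mathbf{1}},\int_B Z_2\,dP_{\mathbf{1}}\right)$, so that the principal effectively chooses a \emph{profile} $(m_1,\dots,m_K)$ (padding with zeros) realizable by some measurable $K$-partition, together with a wage scheme.

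The central step is to show that the set of achievable profiles is compact. By Assumption \ref{assm_regular_multiagent} the distribution of $\mathbf{Z}$ is atomless, so $m$ is a nonatomic vector measure; by Assumption \ref{assm_compact_multiagent} its range is bounded. Lyapunov's convexity theorem then makes the range of $m$ compact and convex, and the Dvoretzky--Wald--Wolfowitz purification principle extends this to partitions: the set $\mathcal{F}=\{(m(B_1),\dots,m(B_K)):\{B_k\}\text{ a measurable partition of }\Omega\}$ is a compact, convex subset of $(\mathbb{R}^3)^K$, and, crucially, every point of $\mathcal{F}$ is realized by a genuine partition. This is what guarantees that limits of minimizing sequences of monitoring technologies remain honest monitoring technologies rather than degenerating into randomized or infinitely fine objects.

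I would then analyze the inner wage problem for a fixed profile. Because the objective $\sum_n \pi_n\sum_i w_i(A_n)$ is additively separable across agents and each constraint (\ref{eqn_ici}) involves only $w_i$, the wage problem decouples into two independent single-agent Holmstr\"om problems. Lemma \ref{lem_foc} pins down the optimal wages through $u_i'(w_i(A))=1/(\lambda_i z_i(A))$ on positive-wage cells; combined with the compact range of $z$-values and a bound on the multipliers along any cost-bounded sequence, this yields a uniform a priori bound $w_i(A)\le\bar w$, so the wages may be confined to the compact box $[0,\bar w]^{2K}$ without loss. The reduced incentive cost is then the value of a convex program with continuously varying data over a compact box, hence lower semicontinuous in the profile, and the set of profiles admitting an incentive-compatible wage scheme is a closed, nonempty subset of $\mathcal{F}$.

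It remains to add the monitoring cost and apply Weierstrass. Under Assumption \ref{assm_cf}(b), $h$ is continuous on $\Delta^K$, so the monitoring cost is continuous in the profile and the total reduced cost is lower semicontinuous on the compact set $\mathcal{F}$ restricted to incentive-feasible profiles; the minimum is attained. Under Assumption \ref{assm_cf}(a), where $H=f(|\mathcal{P}|)$ may jump as cells vanish, I would instead loop over the finitely many values $N=1,\dots,K$, minimize the (lower semicontinuous) incentive cost over partitions with at most $N$ positive cells and add the constant $\mu f(N)$, then take the minimum over $N$; each inner minimum exists by the same compactness argument. Either way the optimal profile lies in $\mathcal{F}$ and is realized by a genuine partition, which paired with the minimizing wage scheme gives the desired optimal contract. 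I expect the main obstacle to be the second and third steps: verifying the hypotheses of the Lyapunov/purification argument so that the compact set of profiles consists of \emph{realizable} partitions, and securing the a priori wage bound that confines the problem to a compact box, since a priori a vanishing cell could carry a diverging wage. The profile reformulation and the per-agent sufficient-statistic characterization of Lemma \ref{lem_foc} are precisely what tame these difficulties.
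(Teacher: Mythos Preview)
Your route is sound and genuinely different from the paper's. The paper does not appeal to Lyapunov or Dvoretzky--Wald--Wolfowitz at all; instead it first invokes the structural Theorem~\ref{thm_multiagent} to reduce the search to partitions of $\mathbf{Z}(\Omega)$ into at most $N$ convex polygons, parametrizes these by a finite list of vertices $\vec{\mathbf{z}}\in\mathbf{Z}(\Omega)^{q_N}$ together with one of finitely many adjacency matrices $\mathbf{M}$, and then proves that the minimal incentive cost $W(\vec{\mathbf{z}},\mathbf{M})$ is \emph{continuous} in $\vec{\mathbf{z}}$ by an explicit perturbation (add $\epsilon/2$ to each wage on cells with positive $z_i$-value and check (\ref{eqn_ici}) directly). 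Existence then follows from Weierstrass on the compact set of vertex configurations, looping over $N$ under Assumption~\ref{assm_cf}(a) exactly as you do. Your moment-profile reduction via Lyapunov/DWW buys independence of the existence argument from Theorem~\ref{thm_multiagent} and transfers verbatim to more agents or disconnected $\mathbf{Z}(\Omega)$; the paper's geometric parametrization is more constructive and delivers full continuity rather than mere lower semicontinuity, at the price of relying on the convex-polygon characterization.

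One caution on the step you flag as the main obstacle: the claim that a multiplier bound along cost-bounded sequences yields a \emph{uniform} wage bound $w_i(A)\le\bar w$ is not literally true, since a cell with $\pi_n\to 0$ may carry $w_{i,n}\to\infty$ while $\pi_n w_{i,n}$ stays bounded. The clean fix is to argue lower semicontinuity directly: concavity of $u_i$ gives $u_i(w)/w$ decreasing, so whenever $\pi_n w_{i,n}\le C$ and $w_{i,n}\to\infty$ one has $\pi_n u_i(w_{i,n})|z_{i,n}|\le C\cdot\|Z_i\|_\infty\cdot u_i(w_{i,n})/w_{i,n}\to 0$ (using $u_i'(\infty)=0$, which is implicit in the paper's use of Lemma~\ref{lem_foc}); hence vanishing cells contribute nothing to (\ref{eqn_ici}) in the limit and the limiting profile inherits a feasible wage scheme with no worse cost. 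The paper's $\epsilon$-perturbation argument sidesteps this entirely because it never extracts limits of wages.
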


\begin{proof}
See Appendix \ref{sec_proof_multiagent}. 
\end{proof}

\paragraph{Proof sketch} The proof strategy developed in Section \ref{sec_proofsketch} is useful for handling vector-valued $z$-values and wages. As before, fix any $\epsilon>0$, and take any subsets $A'_\epsilon$ and $A''_\epsilon$ of two distinct performance categories $A_j$ and $A_k$, respectively, such that $P_{\bf{1}}\left(A'_\epsilon\right)=P_{\bf{1}}\left(A''_\epsilon\right)=\epsilon$ and ${\bf{z}}\left(A'_\epsilon\right) = {\bf{z}}' \neq {\bf{z}}\left(A''_\epsilon\right) = {\bf{z}}''$ (Lemma \ref{lem_smallSet_multiagent} of Appendix \ref{sec_proof_multiagent_lemma} proves existence of sets that satisfy weaker properties). Post the perturbation as in Section \ref{sec_proofsketch}, the principal's Lagrangian becomes (ignore  (\ref{eqn_lli}) constraints):
\[\mathcal{L}\left(\epsilon\right)=\sum_{n}\pi_n\left(\sum_i w_{i,n}\left(\epsilon\right)-\lambda_i\left(\epsilon\right)u_i\left(w_{i,n}\left(\epsilon\right)\right) z_{i,n}\left(\epsilon\right)-c_i\right),\]
where $\pi_n$ denotes the probability of $A_n$ (equivalently, $A_n\left(\epsilon\right)$) under ${\bf{a}}={\bf{1}}$, $w_{i,n}\left(\epsilon\right)$ agent $i$'s optimal wage at $A_n\left(\epsilon\right)$ and $\lambda_i\left(\epsilon\right)$ the Lagrange multiplier associated with the (\ref{eqn_ici}) constraint. Assuming differentiability, we obtain 
\[\mathcal{L}'\left(0\right)=-\sum_{n} \pi_n \cdot  {\bf{u}}_n^{\top}  \begin{pmatrix}
    \lambda_1\left(0\right)  & 0\\
    0      &  \lambda_2\left(0\right) 
\end{pmatrix} \frac{d}{d\epsilon} {\bf{z}}_n\left(\epsilon\right)\bigg\vert_{\epsilon=0}=\left({\bf{u}}_k-{\bf{u}}_j\right)^{\top} \left(\widehat{\bf{z}}''-\widehat{\bf{z}}'\right), \]
where \[{\bf{u}}_n=\left(u_1\left(w_{i,n}\left(0\right)\right), u_2\left(w_{i,n}\left(0\right)\right)\right)^{\top}  \text{ } \forall n\]
and 
\[\widehat{\bf{z}}= \begin{pmatrix}
    \lambda_1\left(0\right)  & 0\\
    0      &  \lambda_2\left(0\right) 
\end{pmatrix} {\bf{z}} \text{ for } {\bf{z}}={\bf{z}}', {\bf{z}}''.\]
Since $\mathcal{L}'(0) \geq 0$ by optimality, the assignment of the Lagrange multiplier-weighted ${\bf{z}}$-values into performance categories must be  ``positive assortative,'' where the direction of sorting is given by the vector of agents' utilities. This implies $\bf{Z}$-convexity for the same reason as in  Section \ref{sec_proofsketch}.

\paragraph{Implications} Solving the optimal convex polygons is computationally hard. That said, note that the boundaries of convex polygons consist of straight line segments in ${\bf{Z}}\left(\Omega\right)$, which combined with Assumption \ref{assm_regular_multiagent} yields the following observations: 
\begin{itemize}
\item any bi-partitional contract takes the form of either a team or a tournament and is fully captured by the intercept and slope of the straight line as depicted in Figure \ref{figure_group};

\bigskip

\begin{figure}[htp!]
\centering
\begin{minipage}[b]{0.48\linewidth}
  \includegraphics[scale=.43]{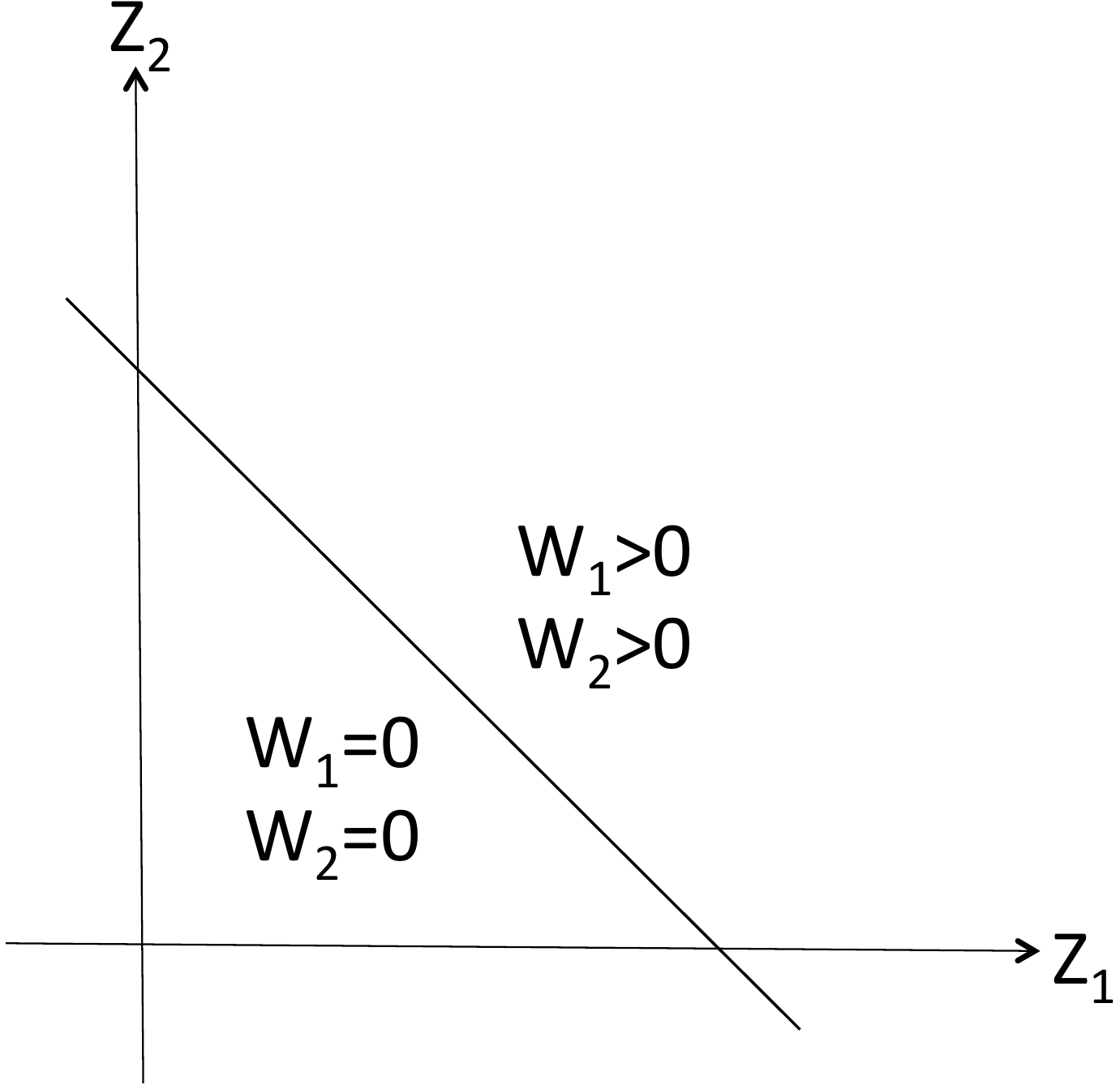}
\end{minipage}
\begin{minipage}[b]{0.48\linewidth}
  \includegraphics[scale=.43]{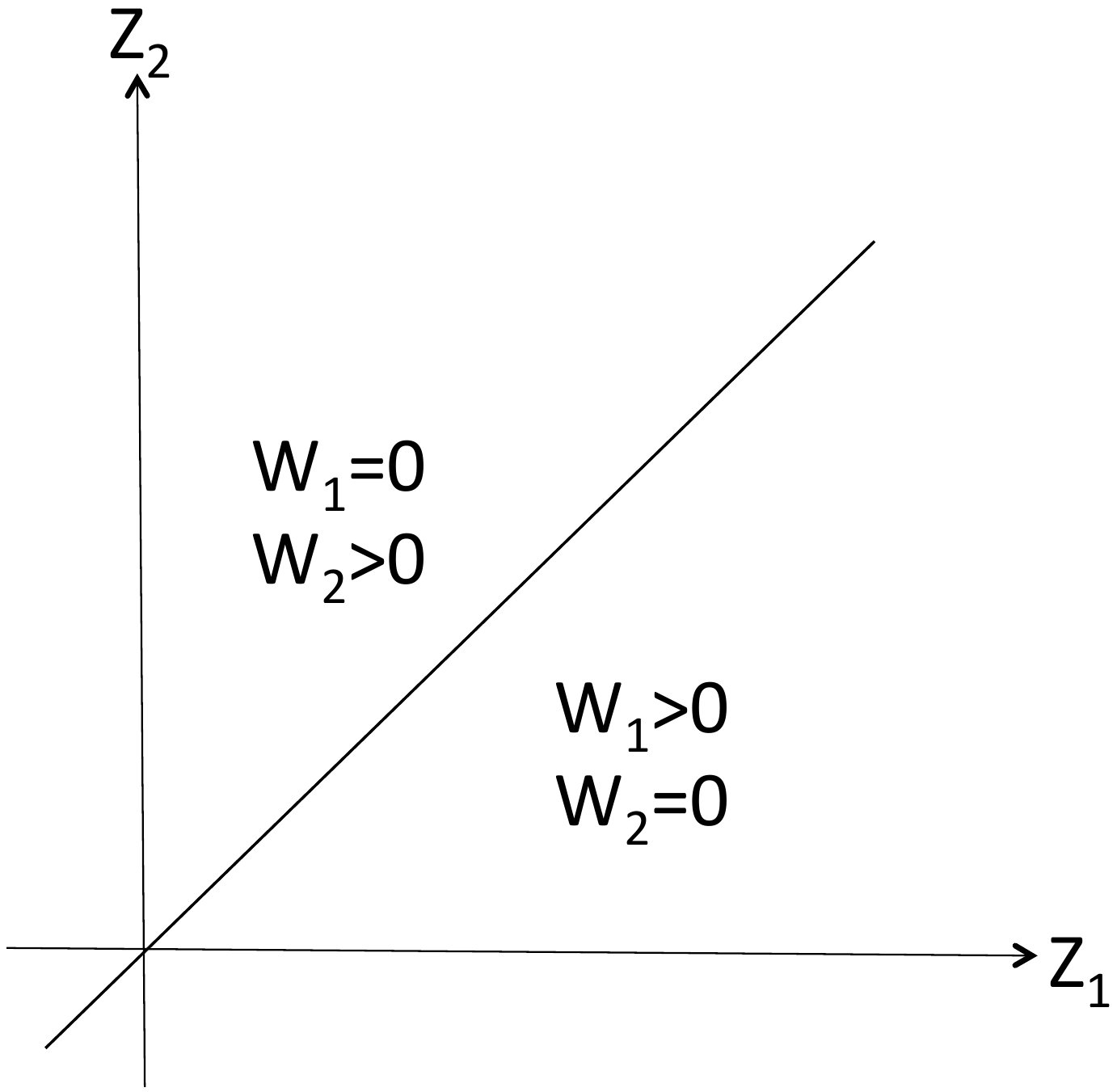}
\end{minipage}
\caption{Bi-partitional contracts: team and tournament. }
\label{figure_group}
\end{figure}

\item contracts that evaluate and reward agents on an individual basis are fully determined by the individual performance cutoffs as depicted in Figure \ref{figure_individual}.

\bigskip

\begin{figure}[htp!]
\centering
\includegraphics[scale=.43]{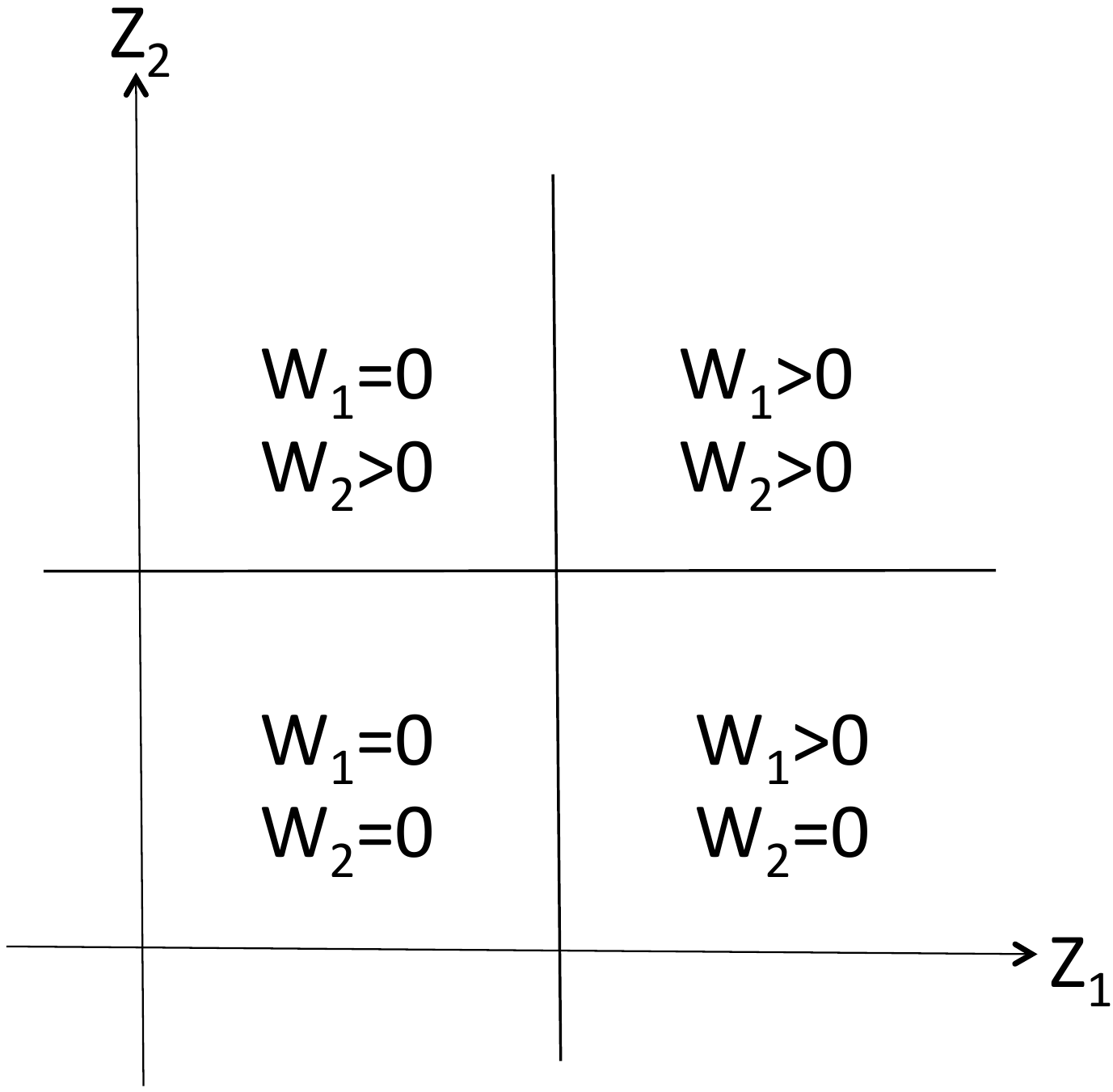}
\caption{An individual incentive contract.}
\label{figure_individual}
\end{figure}
\end{itemize}  

\subsection{Application: Individual vs. Group Evaluation}\label{sec_multiagent_application}
This section compares individual and group performance evaluations from the angle of monitoring cost. To obtain the sharpest insights, suppose that agents are \emph{technologically independent}: 

\begin{assm}\label{assm_product}
There exist probability spaces $\left\{\left(\Omega_i, \Sigma_i, P_{i, a_i}\right)\right\}_{i, a_i}$ as in Section \ref{sec_baseline} such that $\left(\Omega, \Sigma, P_{{\bf{a}}}\right)=\left(\Omega_1 \times \Omega_2, \Sigma_1 \otimes \Sigma_2, P_{1, a_1} \times P_{2, a_2}\right)$ for all ${\bf{a}} \in \left\{0,1\right\}^2$.
\end{assm}

In the language of contract theory, Assumption \ref{assm_product} rules out any  \emph{technological link} (i.e., $\omega_i$ depends on $a_{-i}$) or \emph{common productivity shock} (i.e., $\omega_1, \omega_2$ are correlated given $\bf{a}$) between agents. 

The next definition is standard: 

\begin{defn}\label{defn_individualgroup}
\begin{enumerate}[(i)]
\item $\mathcal{P}$ is an \emph{individual monitoring technology} if for all $A \in \mathcal{P}$, there exist $A_1 \in \Sigma_1$ and $A_2 \in \Sigma_2$ such that $A=A_1 \times A_2$; otherwise $\mathcal{P}$ is a \emph{group monitoring technology}; 

\item Let $\mathcal{P}$ be any individual monitoring technology. Then ${\bf{w}}: \mathcal{P} \rightarrow \mathbb{R}_{+}^2$ is an \emph{individual wage scheme} if $w_i\left(A_i \times A_{-i}'; \mathcal{P}\right)=w_i\left(A_i \times A_{-i}''; \mathcal{P}\right)$ for all $i=1, 2$ and $A_i \times A_{-i}', A_i \times A_{-i}'' \in \mathcal{P}$; otherwise ${\bf{w}}: \mathcal{P} \rightarrow \mathbb{R}_{+}^2$ is a \emph{group wage scheme};

\item $\langle \mathcal{P}, {\bf{w}}: \mathcal{P} \rightarrow \mathbb{R}_{+}^2\rangle$ is an \emph{individual incentive contract} if $\mathcal{P}$ is an individual monitoring technology and ${\bf{w}}: \mathcal{P} \rightarrow \mathbb{R}_{+}^2$ is an individual wage scheme; otherwise it is a \emph{group incentive contract}. 
\end{enumerate}
\end{defn}

By definition, a group incentive contract either conducts group performance evaluations or pairs individual performance evaluations with group incentive pays. Under Assumption \ref{assm_product}, the second option is sub-optimal by the sufficient statistics principle or \cite{holmstromteam}, thus reducing the comparison between individual and group incentive contracts to that of  individual and group performance evaluations.  

Let $I$ be the ratio between the minimal cost of implementing bi-partitional incentive contracts and that of implementing individual incentive contracts (the latter, by definition, have at least four performance categories). $I<1$ is a definitive indicator that group evaluation is optimal whereas individual evaluation is not. The next result is immediate: 

\begin{cor}\label{cor_multiagent}
Under Assumptions \ref{assm_mc}, \ref{assm_cf}(a), \ref{assm_regular_multiagent},  \ref{assm_compact_multiagent} and \ref{assm_product}, $I<1$ when $\mu$ is large.
\end{cor}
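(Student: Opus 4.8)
The plan is to show that as $\mu \to \infty$, the monitoring cost dominates the incentive cost in the total implementation cost, so the optimal contract minimizes the number of cells; with technologically independent agents, any individual incentive contract needs at least four cells while a bi-partitional contract needs only two, and under Assumption \ref{assm_cf}(a) the difference $f(4) - f(2) > 0$ in monitoring cost eventually swamps the bounded gap in incentive cost. First I would establish that a \emph{feasible} bi-partitional contract exists: since the agents are technologically independent and each $P_{i,a_i}$ pair is mutually absolutely continuous with everywhere positive densities, each $Z_i$ is nondegenerate, so ${\bf Z}(\Omega)$ is genuinely two-dimensional (Assumption \ref{assm_compact_multiagent}) and one can find a half-plane cut separating points of positive and negative $z_i$-value for each $i$ simultaneously --- a ``team'' contract in the language of Figure \ref{figure_group} --- that satisfies both (\ref{eqn_ici}) constraints for suitable wages. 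Call its incentive cost $C_{\text{bi}} < \infty$; it does not depend on $\mu$.

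Next I would bound the two sides of the ratio $I$. For the numerator: the minimal cost of a bi-partitional contract is at most $C_{\text{bi}} + \mu f(2)$. For the denominator: any individual incentive contract, being a product partition with an individual wage scheme that gives each agent incentives, must have at least two distinct $A_1$-cells and two distinct $A_2$-cells (otherwise some agent's (\ref{eqn_ici}) fails, since a single cell yields $z_i(\Omega) = 0 < c_i$), hence at least $2 \times 2 = 4$ cells, so its monitoring cost is at least $\mu f(4)$ and its total cost is at least $\mu f(4)$ (incentive cost being nonnegative). Therefore
\[
I \leq \frac{C_{\text{bi}} + \mu f(2)}{\mu f(4)} = \frac{f(2)}{f(4)} + \frac{C_{\text{bi}}}{\mu f(4)}.
\]
Since $f$ is strictly increasing, $f(2)/f(4) < 1$, and the second term vanishes as $\mu \to \infty$, so $I < 1$ for all $\mu$ large enough. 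The reduction remark preceding the corollary (that under Assumption \ref{assm_product} pairing individual evaluation with group pay is suboptimal, by the sufficient statistics principle) ensures that ``$I < 1$'' does indeed certify that group \emph{evaluation} beats individual \emph{evaluation}, not merely group \emph{pay}.

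The main obstacle I anticipate is the first step: constructing an explicit feasible bi-partitional (team) contract and verifying both incentive constraints can be satisfied with a finite, $\mu$-independent incentive cost. One must check that a single linear cut in ${\bf Z}(\Omega)$ can be oriented so that the resulting two cells have $\bf z$-values on the ``correct'' sides for both agents at once --- this uses $\dim {\bf Z}(\Omega) = 2$ and the atomless, connected distribution of ${\bf Z}$ (Assumptions \ref{assm_regular_multiagent}, \ref{assm_compact_multiagent}) together with $\mathbf{E}[Z_i \mid {\bf a} = {\bf 1}] = 0$, which guarantees both positive and negative $z_i$-values are present. Once feasibility is in hand, the rest is the elementary bounding argument above. (Alternatively, one can sidestep the explicit construction by invoking Theorem \ref{thm_exist_multiagent} for the existence of an optimal contract overall, and noting that its incentive cost is bounded uniformly in $\mu$ on any bounded $\mu$-interval; but an upper bound valid for \emph{all} large $\mu$ still requires exhibiting one fixed feasible bi-partitional contract, so the construction cannot be fully avoided.)
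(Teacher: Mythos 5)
Your proof is correct and follows exactly the argument the paper treats as immediate: the numerator is at most $C_{\text{bi}}+\mu f(2)$ for a fixed feasible bi-partition, the denominator is at least $\mu f(4)$ since an individual contract needs a product partition with at least two cells per agent, and $f(2)<f(4)$ does the rest as $\mu\to\infty$. The only simplification worth noting is that for the feasibility step you do not need a linear (half-plane) cut --- under Assumption \ref{assm_product} the bi-partition $A=\{\omega: Z_1(\omega_1)>0 \text{ and } Z_2(\omega_2)>0\}$ versus its complement already gives $z_i(A)=\mathbf{E}[Z_i\mid Z_i>0]>0$ for both $i$ by independence, so a team contract with zero wage on $A^c$ is feasible at a $\mu$-independent incentive cost.
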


Beyond the case considered in Corollary \ref{cor_multiagent}, we can compute $I$ numerically based on the prior discussion about how to parameterize bi-partitional and individual incentive contracts. Figure \ref{figure_indicator} plots the solutions obtained in a special case.

\bigskip

\begin{figure}[h!]
\centering
\includegraphics[scale=.6]{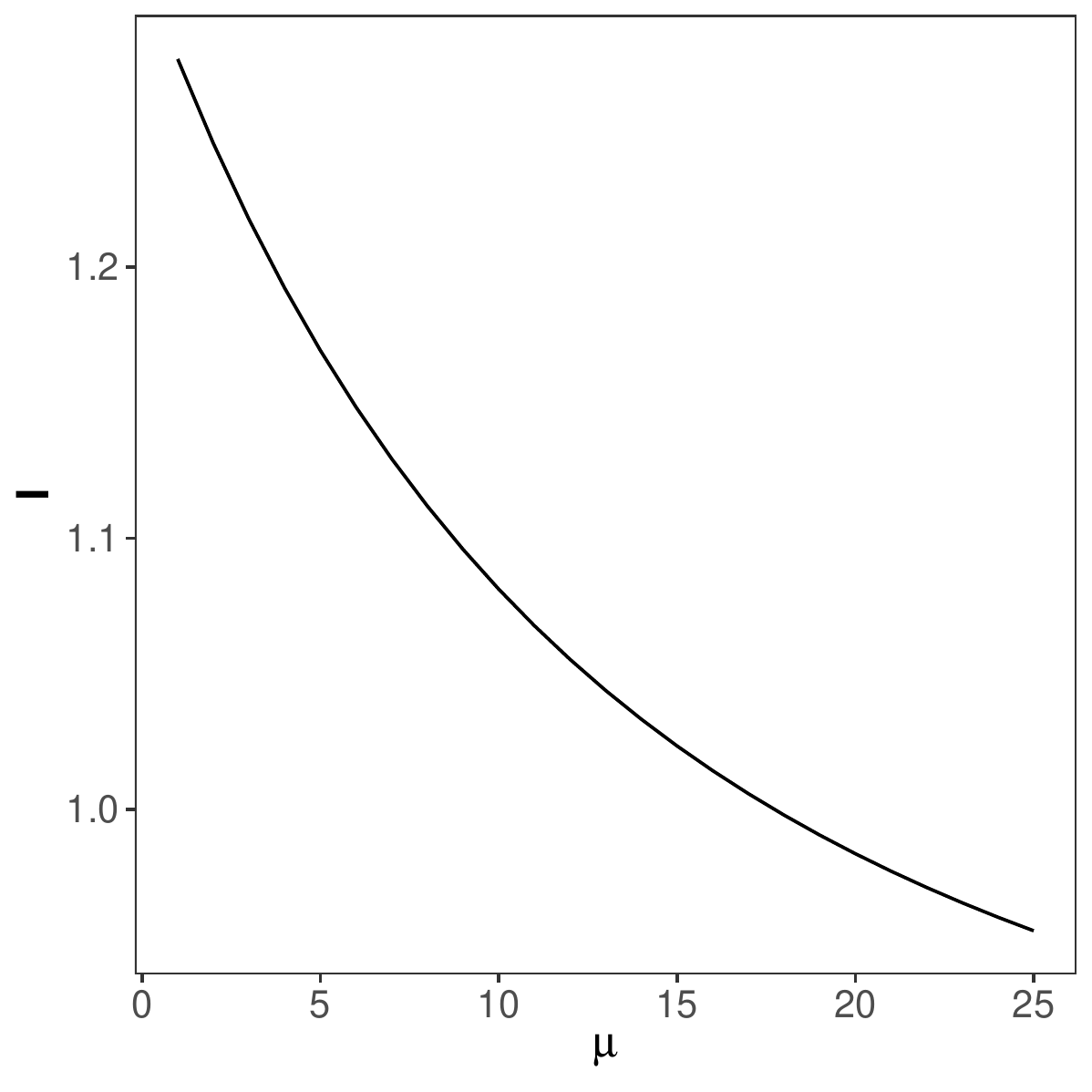}
\caption{Plot $I$ against $\mu$: entropy cost, $u_i\left(w\right)=\sqrt{w}$, $Z_i \sim U\left[-1/2, 1/2\right]$ and $c_i=1$ for $i=1,2$.}
\label{figure_indicator}
\end{figure}

Our result formalizes the theses of \cite{ad} and \cite{lazearrosen} that either team or tournament should be the dominant incentive system when individual performance evaluation is too costly to conduct. It enriches the analyses of \cite{holmstromteam}, \cite{greenstokey} and \cite{mookherjee}, which attribute the use of group incentive contracts to the technological dependence between agents while abstracting away from the issue of data processing and analysis. Recently, these views are reconciled by \cite{bloomvanreenenwp}, which find--just as our theory predicts--that companies make different choices between individual and group evaluations despite being technologically similar, and that  group evaluation is most prevalent when the capacity to sift out individual-level information is limited by, e.g., the lack of IT access.\footnote{See the survey questions of \cite{bloomvanreenenwp} regarding the choices between individual and group evaluations, e.g., ``employees are rewarded based on their individual contributions to the company,'' and ``compensation is based on shift/plant-level outcomes.'' The former is regarded as an advanced but expensive managerial practice and is more prevalent among companies with better IT access, other things being equal.} In the future, it will be interesting to nail down the role of IT in \cite{bloomvanreenenwp}, and to replicate these studies for recent advancements in data technologies. 

\section{Extension: Multiple Actions}\label{sec_multidev} 
In this section, suppose that the agent's action space $\mathcal{A}$ is a finite set, and that taking an action $a$ in $\mathcal{A}$ incurs a cost $c\left(a\right)$ to the agent and generates a probability space $\left(\Omega, \Sigma, P_a\right)$ as in Section \ref{sec_baseline}. The principal wishes to induce the most costly action $a^*$, i.e., $c\left(a^*\right)>c\left(a\right)$ for all $a \in \mathcal{D}=\mathcal{A}-\left\{a^*\right\}$. For any deviation from $a^*$ to $a \in \mathcal{D}$, define a random variable $Z_a: \Omega \rightarrow \mathbb{R}$ by
\[
Z_a\left(\omega\right)=1-\frac{p_a\left(\omega\right)}{p_{a^*}\left(\omega\right)} \text{ } \forall \omega \in \Omega. 
\]
For any $a \in \mathcal{D}$ and set $A \in \Sigma$ of positive measure, define 
\[
z_a\left(A\right)=\mathbf{E}\left[Z_a \mid A; a^*\right].
\]
A contract is incentive compatible if for all $a \in \mathcal{D}$: 
\begin{equation}
\tag{IC$_a$} \sum_{A \in \mathcal{P}} P_{a^*}\left(A\right)u\left(w\left(A\right)\right)z_{a}\left(A\right) \geq c\left(a^*\right)-c\left(a\right).\label{eqn_ic_multidev}
\end{equation}
An optimal incentive contract $\langle \mathcal{P}^*, w^*\left(\cdot\right)\rangle$ that induces $a^*$ solves
\begin{equation*}
\min_{\langle \mathcal{P}, w(\cdot)\rangle} \sum_{A \in \mathcal{P}} P_{a^*}\left(A\right)w\left(A\right)+\mu \cdot H\left(\mathcal{P}, a^*\right) \text{ s.t. (\ref{eqn_ic_multidev}) $\forall a \in \mathcal{D}$ and (\ref{eqn_ll})}. \label{eqn_problem_multidev}
\end{equation*}

Write ${\bf{Z}}$ for $\left(Z_a\right)^{\top}_{a \in \mathcal{D}}$. For any $|\mathcal{D}|$-vector $\bm{\lambda}=\left(\lambda_a\right)^{\top}_{a \in \mathcal{D}}$ in $\mathbb{R}_{+}^{|\mathcal{D}|}$, define a random variable $Z_{{\bm{\lambda}}}: \Omega \rightarrow \mathbb{R}$ by 
\[
Z_{\bm{\lambda}}\left(\omega\right)=\bm{\lambda}^{\top} {\bf{Z}}\left(\omega\right) \text{ } \forall \omega \in \Omega.
\]
The next definition generalizes $Z$-convexity:
\begin{defn}
A set $A \in \Sigma$ is $Z_{\bm{\lambda}}$-convex if the following holds for all  $\omega', \omega'' \in A$ such that $Z_{\bm{\lambda}}\left(\omega'\right) \neq Z_{\bm{\lambda}}\left(\omega''\right)$: 
\[
\left\{\omega: Z_{\bm{\lambda}}\left(\omega\right)=\left(1-s\right)\cdot  Z_{\bm{\lambda}}\left(\omega'\right)+s \cdot Z_{\bm{\lambda}}\left(\omega''\right)  \text{ for some } s \in \left(0,1\right)\right\} \subset A.
\]
\end{defn}
The next theorems extend Theorems \ref{thm_main} and \ref{thm_exist} to encompass multiple actions: 

\begin{thm}\label{thm_multidev}
Assume Assumption \ref{assm_mc} and Assumption \ref{assm_compact} for all $a \in \mathcal{D}$. Then for any optimal incentive contract $\langle \mathcal{P}^*, w^*\left(\cdot\right)\rangle$ that induces $a^{\ast}$, there exists $\bm{\lambda^*} \in \mathbb{R}_{+}^{|\mathcal{D}|}$ with $\|\bm{\lambda^*}\|>0$\footnote{$\|\cdot\| $ denotes the sup norm in the remainder of this paper.} such that all cells of $\mathcal{P}^*$ are $Z_{\bm{\lambda}^*}$-convex and can be labeled as $A_1,\cdots, A_N$ such that $0=w^*\left(A_1\right)<\cdots <w^*\left(A_N\right)$. Assume, in addition, Assumption  \ref{assm_regular} for all $a \in \mathcal{D}$. Then there exist $-\infty\leq \widehat{z}_0<\cdots< \widehat{z}_N<+\infty$ such that $A_n=\left\{\omega: Z_{\bm{\lambda}^*}\left(\omega\right) \in [\widehat{z}_{n-1}, \widehat{z}_n)\right\}$ for $n=1,\cdots, N$.
\end{thm}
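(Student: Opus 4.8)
The plan is to transplant the three-step argument behind Theorem~\ref{thm_main}, with the single incentive constraint replaced by the family $\{(\text{IC}_a)\}_{a\in\mathcal D}$ and the scalar $z$-value replaced by the \emph{weighted} $z$-value $z_{\bm\lambda}(A)=\bm\lambda^{\top}\mathbf z(A)=\mathbf E[Z_{\bm\lambda}\mid A; a^*]$. Fix an optimal contract $\langle\mathcal P^*,w^*(\cdot)\rangle$ (existence being the obvious analogue of Theorem~\ref{thm_exist}) and freeze $\mathcal P^*=\{A_1,\dots,A_N\}$. \textbf{Step one.} Solve the wage subproblem $\min_{w\ge 0}\sum_n P_{a^*}(A_n)w(A_n)$ subject to $(\text{IC}_a)$, $a\in\mathcal D$. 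In the variables $v_n=u(w(A_n))$ the objective $\sum_n P_{a^*}(A_n)u^{-1}(v_n)$ is convex (since $u^{-1}$ is convex) and every constraint is affine, so KKT multipliers exist at the optimum with no constraint qualification needed: there are $\bm\lambda^*=(\lambda^*_a)_{a\in\mathcal D}\in\mathbb R_+^{|\mathcal D|}$ and $\nu^*_n\ge 0$ with $1/u'(w^*(A_n))=z_{\bm\lambda^*}(A_n)+\nu^*_n/P_{a^*}(A_n)$, hence $u'(w^*(A_n))\,z_{\bm\lambda^*}(A_n)=1$ whenever $w^*(A_n)>0$ — the multi-deviation version of Lemma~\ref{lem_foc}. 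Because $w\equiv 0$ violates every $(\text{IC}_a)$ (as $c(a^*)>c(a)$), some cell has $w^*(A_n)>0$; at that cell $z_{\bm\lambda^*}(A_n)=1/u'(w^*(A_n))>0$, which forces $\bm\lambda^*\ne 0$, i.e.\ $\|\bm\lambda^*\|>0$. Replacing $u(w(A))$ by $u(w(A))-t$ on every cell leaves each $\sum_n P_{a^*}(A_n)u(w(A_n))z_a(A_n)$ unchanged because $\mathbf E[Z_a\mid a^*]=0$, so the argument behind Theorem~\ref{thm_main} gives $\min_n w^*(A_n)=0$.

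\textbf{Step two.} Merging two cells that share a wage $w$ preserves the wage bill and, by the law of iterated expectations, each left-hand side $\sum_n P_{a^*}(A_n)u(w(A_n))z_a(A_n)$, while strictly cutting the monitoring cost by Assumption~\ref{assm_mc}(b); hence distinct cells carry distinct wages. Together with the zero-wage cell from Step one this lets me label $0=w^*(A_1)<\dots<w^*(A_N)$; the first-order condition makes $w^*(A_n)=(u')^{-1}(1/z_{\bm\lambda^*}(A_n))$ increasing in $z_{\bm\lambda^*}(A_n)$ on positive-wage cells, and KKT stationarity at $A_1$ gives $z_{\bm\lambda^*}(A_1)\le 1/u'(0)\le 1/u'(w^*(A_2))=z_{\bm\lambda^*}(A_2)$, so in fact $z_{\bm\lambda^*}(A_1)<\dots<z_{\bm\lambda^*}(A_N)$. \textbf{Step three.} Now run the swap perturbation of Section~\ref{sec_proofsketch} with $Z_{\bm\lambda^*}$ in the role of $Z$. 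For distinct $A_j,A_k$ pick $A'_\epsilon\subset A_j$, $A''_\epsilon\subset A_k$ with $P_{a^*}(A'_\epsilon)=P_{a^*}(A''_\epsilon)=\epsilon$ and $\epsilon$-independent values $z_{\bm\lambda^*}(A'_\epsilon)=z'\ne z''=z_{\bm\lambda^*}(A''_\epsilon)$; existence for small $\epsilon$ follows from the scalar construction underlying Lemma~\ref{lem_smallSet}. Swapping them leaves all cell probabilities, hence the monitoring cost, fixed and turns the Lagrangian into $\mathcal L(\epsilon)=\sum_n\pi_n\left(w_n(\epsilon)-\sum_a\lambda_a(\epsilon)u(w_n(\epsilon))z_{a,n}(\epsilon)\right)+\sum_a\lambda_a(\epsilon)(c(a^*)-c(a))$. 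Differentiating at $\epsilon=0$, the $\lambda'_a(0)$-terms drop out by complementary slackness (the slack in $(\text{IC}_a)$ vanishes whenever $\lambda^*_a>0$), and the $w'_n(0)$-terms cancel because $\sum_a\lambda^*_a u'(w_n(0))z_{a,n}(0)=u'(w_n(0))z_{\bm\lambda^*}(A_n)=1$, leaving $\mathcal L'(0)=-\sum_n\pi_n u(w_n(0))\tfrac{d}{d\epsilon}z_{\bm\lambda^*}(A_n(\epsilon))\big|_0=(z''-z')(u(w^*(A_k))-u(w^*(A_j)))$. Optimality forces $\mathcal L'(0)\ge 0$, so the higher-wage cell absorbs the subset of larger $Z_{\bm\lambda^*}$-value; distinctness of wages makes this strict, and $Z_{\bm\lambda^*}$-convexity of every cell follows exactly as in Section~\ref{sec_proofsketch} — a cell holding extreme but not intermediate $Z_{\bm\lambda^*}$-values would admit an improving swap against whichever neighbour holds the missing value.

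For the second assertion, add Assumption~\ref{assm_regular} for all $a\in\mathcal D$: then $Z_{\bm\lambda^*}=\sum_a\lambda^*_a Z_a$ is bounded (a nonnegative combination of the bounded $Z_a$), its level sets $A_1,\dots,A_N$ are $Z_{\bm\lambda^*}$-convex and ordered by $z_{\bm\lambda^*}$, so they map onto a chain of intervals tiling the range of $Z_{\bm\lambda^*}$; reading off the endpoints yields $\widehat z_0<\dots<\widehat z_N$ with $A_n=\{\omega:Z_{\bm\lambda^*}(\omega)\in[\widehat z_{n-1},\widehat z_n)\}$ up to a null set, just as in the second half of Theorem~\ref{thm_main}.

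The main obstacle is making Step three honest. The derivative computation above pretends that $w_n(\cdot)$ and $\lambda_a(\cdot)$ are differentiable and that limited liability can be ignored; the real argument (paralleling Appendix~\ref{sec_proof_baseline}) must instead (i) control the \emph{joint} response of the several multipliers $\bm\lambda(\epsilon)$ and show their binding pattern is locally stable — the genuinely new difficulty relative to the single-constraint case — through an envelope/comparison estimate rather than true differentiation; (ii) accommodate the zero-wage cell, where $w'_n(0)$ is one-sided and the first-order condition holds only as the inequality $u'(0)z_{\bm\lambda^*}(A_1)\le 1$; and (iii) secure swap sets $A'_\epsilon,A''_\epsilon$ with exactly prescribed probability and $z_{\bm\lambda^*}$-value for all small $\epsilon$. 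One should also record that the single vector $\bm\lambda^*$ extracted in Step one is what governs $Z_{\bm\lambda^*}$-convexity for \emph{all} cells at once — which is automatic, since $\bm\lambda^*$ is pinned down once $\mathcal P^*$ is fixed and the perturbation's first-order effect on it vanishes.
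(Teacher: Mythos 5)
Your proposal is correct and follows essentially the same route as the paper: extract the multiplier vector $\bm{\lambda}^*$ from the wage subproblem (the paper's Lemma \ref{lem_multidev}), show distinct cells carry distinct wages via the merging argument under Assumption \ref{assm_mc}(b), and run the two opposite-signed swap perturbations with the scalar $Z_{\bm{\lambda}^*}$ in the role of $Z$, invoking Lemma \ref{lem_smallSet} for the swap sets. The obstacles you flag at the end are precisely the ones the paper's appendix resolves, though it does so in the primal rather than the dual --- constructing a nearby feasible wage profile with every (\ref{eqn_ic_multidev}) constraint slack by $\mathcal{O}\left(\epsilon\right)$ and $|w_n\left(\epsilon\right)-w_n|<M\epsilon$, then multiplying the constraint changes by the \emph{original} multiplier vector --- so that no differentiability or stability of $\bm{\lambda}\left(\epsilon\right)$ is ever needed.
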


\begin{thm}\label{thm_exist_multidev}
Assume Assumptions \ref{assm_mc} and \ref{assm_cf}, as well as  Assumptions \ref{assm_regular} and \ref{assm_compact} for all $a \in \mathcal{D}$. Then an optimal incentive contract that induces $a^*$ exists.
\end{thm}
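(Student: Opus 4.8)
The plan is to adapt the existence arguments behind Theorems~\ref{thm_exist} and \ref{thm_exist_multiagent} to the vector of deviations $\mathcal{D}$. First I would recast the problem in finite-dimensional \emph{reduced-form} variables. The objective and the constraints (\ref{eqn_ic_multidev})--(\ref{eqn_ll}) depend on a monitoring technology $\mathcal{P}=\{A_1,\dots,A_N\}$ only through $\pi_n:=P_{a^*}(A_n)$ and $z_{a,n}:=z_a(A_n)$, $a\in\mathcal{D}$, together with the wages $w_n:=w(A_n)$. Since $\pi_n z_{a,n}=\int_{A_n}Z_a\,dP_{a^*}$, each $Z_a$ is bounded (Assumption~\ref{assm_compact}) and $P_{a^*}$ is non-atomic, the Dvoretzky--Wald--Wolfowitz / Lyapunov convexity theorem implies that, for each fixed $N\le K$, the set of achievable vectors $\big(\int_{A_n}(\mathbf{Z},1)\,dP_{a^*}\big)_{n=1}^{N}$ over all measurable partitions of $\Omega$ is compact and convex; in particular the set of achievable reduced data (allowing $\pi_n=0$, i.e.\ empty cells) is closed, and any limit of achievable reduced data is realized by an honest partition of $\Omega$ into at most $N$ cells of positive measure.

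Next I would run a minimizing-sequence argument in these coordinates. Exhibiting one feasible contract (a bi-partitional one, whenever inducing $a^*$ is possible at all, which I take as given since the problem is otherwise vacuous) bounds the optimal cost by some $\bar C$, so I may restrict to contracts of cost $\le\bar C$. Take $\langle\mathcal{P}^{(k)},w^{(k)}\rangle$ with cost converging to the infimum; by pigeonhole I may assume every $\mathcal{P}^{(k)}$ has the same cell count $N\le K$. Using the compactness above, extract a subsequence along which $\pi_n^{(k)}\to\pi_n$ and $\pi_n^{(k)}z_{a,n}^{(k)}\to m_{a,n}$; on cells with $\pi_n>0$ put $z_{a,n}:=m_{a,n}/\pi_n$, and note that $\pi_n^{(k)}w_n^{(k)}\le\bar C$ forces $w_n^{(k)}$ to be bounded there, so it converges on a further subsequence to some $w_n\ge0$. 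The candidate optimum is the partition $\mathcal{P}^*$ realizing $(\pi_n,z_{a,n})_{n:\pi_n>0}$, paired with the wages $w_n$. One then checks feasibility --- (\ref{eqn_ll}) is immediate, and (\ref{eqn_ic_multidev}) follows by passing to the limit in $\sum_n\pi_n^{(k)}u(w_n^{(k)})z_{a,n}^{(k)}\ge c(a^*)-c(a)$, using that each $z_{a,n}^{(k)}$ stays in the bounded interval $[\inf Z_a(\Omega),\sup Z_a(\Omega)]$ --- and that the limit attains the infimum, which reduces to lower semicontinuity of the monitoring term: in case (b) of Assumption~\ref{assm_cf} this is continuity of $h$, and in case (a) the cell count can only fall in the limit while $f$ is strictly increasing.

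The main obstacle is the absence of a priori compactness, with two distinct sources: wages can blow up, and probability mass can leak into cells of vanishing measure along the minimizing sequence. On cells with $\liminf\pi_n^{(k)}>0$ the cost bound $\bar C$ controls the wage, so the real difficulty is the vanishing cells, where $\pi_n^{(k)}w_n^{(k)}$ stays bounded but $w_n^{(k)}$ need not. Here I would show --- this is the crux of the proof --- that, thanks to the concavity of $u$ and the uniform bound $z_{a,n}^{(k)}<1$, such a cell contributes negligibly to (\ref{eqn_ic_multidev}) while costing at least as much as a suitably chosen non-vanishing cell carrying the same $z$-values; Assumption~\ref{assm_regular} is what guarantees that such replacement cells exist. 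One can then prune the vanishing cells from the sequence (equivalently, absorb them into a neighbour) without raising the cost, after which the argument above goes through. Making this pruning fully rigorous --- in particular accounting for the effect of the surgery on the remaining cells and on the binding incentive constraints --- is where the genuine work lies; a possible alternative is to invoke the interval structure of Theorem~\ref{thm_multidev} to parametrize the search space by a direction $\bm{\lambda}$ on the unit sphere of $\mathbb{R}_+^{|\mathcal{D}|}$ and finitely many cutpoints in the compact set $Z_{\bm{\lambda}}(\Omega)$ and to run the compactness argument there, though some care is needed because that theorem conditions on existence.
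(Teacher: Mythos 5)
Your reduced-form route (Lyapunov/DWW convexity of the achievable $(\pi_n,\int_{A_n}{\bf Z}\,dP_{a^*})$ plus a minimizing-sequence argument) is genuinely different from the paper's, but it has a gap at exactly the point you yourself flag as ``the crux,'' and the sketch you give for closing it does not work as stated. First, the claim that a vanishing cell contributes negligibly to (\ref{eqn_ic_multidev}) is false under the maintained assumptions on $u$: if $\pi_n^{(k)}\to 0$, $w_n^{(k)}\to\infty$ and $\pi_n^{(k)}w_n^{(k)}\to e_n>0$, the limiting contribution to the left-hand side of (\ref{eqn_ic_multidev}) is $e_n\cdot\bigl(\lim_{w\to\infty}u(w)/w\bigr)\cdot\lim z_{a,n}^{(k)}$, and concavity only guarantees that $\lim_{w\to\infty}u(w)/w$ exists, not that it vanishes (take $u(w)=w+\sqrt{w}$, which satisfies $u(0)=0$, $u'>0$, $u''<0$). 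So vanishing cells can carry a non-negligible share of the incentive provision into the limit, and dropping them can destroy feasibility of the limit contract. Second, the proposed repair---replacing such a cell by a non-vanishing cell ``carrying the same $z$-values''---fails on two counts: a positive-probability set with exactly the prescribed vector $(z_a(\cdot))_{a\in\mathcal{D}}$ need not exist when that vector is extreme in the attainable set (extreme likelihood-ratio profiles are attainable only on small sets), and any replacement cell must be carved out of the remaining cells, which perturbs their probabilities and $z$-values and hence the monitoring cost and every constraint---precisely the ``effect of the surgery'' you defer. As written, the proof is therefore incomplete at its essential step.

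For comparison, the paper takes the alternative you mention only in passing at the end: it uses the $Z_{\bm{\lambda}}$-interval structure of Theorem \ref{thm_multidev} to parametrize candidate monitoring technologies by a direction $\bm{\lambda}$ on the unit sphere of $\mathbb{R}_+^{|\mathcal{D}|}$ together with at most $K-1$ cutpoints in the compact set $\bm{\lambda}^{\top}{\bf Z}\left(\Omega\right)$, proves that the minimal incentive cost $W\left(\bm{\lambda},\widehat{\bf{z}}\right)$ is continuous on this compact domain by exhibiting explicit feasible wage profiles (paying $w_n+\epsilon$ on cells where all $z_{a,n}^{\delta}>0$), and concludes by Weierstrass. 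That construction sidesteps the wage blow-up problem entirely, because only continuity of the \emph{value} of the inner wage-minimization problem is needed, never convergence of the optimal wages themselves. Your observation that Theorem \ref{thm_multidev} conditions on existence is a fair point about the logic of that parametrization, but it is the route the paper actually adopts; if you wish to keep your minimizing-sequence approach instead, the vanishing-cell step must be carried out in full rather than asserted.
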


\begin{proof}
See Appendix \ref{sec_proof_multidev}. 
\end{proof}

In the presence of multiple actions, each data point is associated with finitely many $z$-values, each corresponding to a deviation from $a^*$ that the agent can potentially commit.  By establishing that the assignment of Lagrange multiplier-weighted $z$-values into wage categories is positive assortative, Theorem \ref{thm_multidev} relates the focus of data processing and analysis to the agent's endogenous tendencies to commit deviations. Intuitively, when $\lambda_a^*$ is large and hence the agent is tempted to commit deviation $a$, focus should be given to the information $Z_a$ that helps detect deviation $a$, and the final performance rating should vary significantly with the assessment of $Z_a$. The next section gives an application of this result.

\subsection{Application: Multiple Tasks}\label{sec_multitask_application}
A single agent can exert either high or low effort $a_i\in \{0,1\}$ in each of the two tasks $i=1,2$, and each $a_i$ independently generates a probability space $\left(\Omega_i, \Sigma_i, P_{i, a_i}\right)$ as in Section \ref{sec_baseline}. The goal of a risk-neutral principal is to induce high effort in both tasks. 

Write ${\bf{a}}=a_1a_2$, $\bm{\omega}=\omega_1\omega_2$, $\mathcal{A}=\left\{11,01,10,00\right\}$, ${\bf{a}}^*=11$ and $\mathcal{D}=\left\{01, 10, 00\right\}$. For any $i=1,2$ and $\omega_i \in \Omega_i$, define 
\[Z_i\left(\omega_i\right)=1-\frac{p_{i, a_i=0}\left(\omega_i\right)}{p_{i, a_i=1}\left(\omega_i\right)},\] where $p_{i,a_i}$ is the probability density function induced by $P_{i,a_i}$. For any $\bm{\omega} \in 
\Omega_1 \times \Omega_2$ and $\bm{\lambda}=\left(\lambda_{01}, \lambda_{10}, \lambda_{00}\right)^{\top} \in \mathbb{R}_+^3$, define 
\[
Z_{\bm{\lambda}}\left(\bm{\omega}\right)=\left(\lambda_{01}+\lambda_{00}\right) \cdot Z_{1}(\omega_1)+\left(\lambda_{10}+\lambda_{00}\right) \cdot Z_{2}\left(\omega_2\right)
-\lambda_{00} \cdot Z_{1}\left(\omega_1\right) Z_{2}\left(\omega_2\right).\]
The next corollary is immediate from Theorem \ref{thm_multidev}:

\begin{cor}\label{cor_multitask}
Assume Assumption \ref{assm_mc} and Assumption \ref{assm_compact} for all $a \in \mathcal{D}$. Then for any optimal incentive contract $\langle \mathcal{P}^*, w^*\left(\cdot\right)\rangle$ that induces high effort in both tasks, there exists $\bm{\lambda^*} \in \mathbb{R}_+^{3}$ with $\lambda^*_{01}+\lambda^*_{00}$, $\lambda^*_{10}+\lambda^*_{00}>0$ such that all cells of $\mathcal{P}^*$ are $Z_{\bm{\lambda}^*}$-convex and can be labeled as $A_1,\cdots, A_N$ such that $0=w^*\left(A_1\right)<\cdots <w^*\left(A_N\right)$. Assume, in addition, Assumption  \ref{assm_regular} for all $a \in \mathcal{D}$. Then there exist $-\infty\leq \widehat{z}_0<\cdots< \widehat{z}_N<+\infty$ such that $A_n=\left\{\bm{\omega}: Z_{\bm{\lambda}^*}\left(\bm{\omega}\right) \in [\widehat{z}_{n-1}, \widehat{z}_n)\right\}$ for $n=1,\cdots, N$.
\end{cor}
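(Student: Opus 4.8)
The plan is to recognize the two‑task model of this section as a special case of the multiple‑action model of Section~\ref{sec_multidev} and then read the conclusion off Theorem~\ref{thm_multidev}. Concretely I would take $\mathcal{A}=\{11,01,10,00\}$, $a^{*}=\mathbf{a}^{*}=11$, $\mathcal{D}=\{01,10,00\}$, and $\left(\Omega,\Sigma,P_{\mathbf{a}}\right)=\left(\Omega_{1}\times\Omega_{2},\Sigma_{1}\otimes\Sigma_{2},P_{1,a_{1}}\times P_{2,a_{2}}\right)$. Because the per‑task effort costs are positive and additive across tasks, $\mathbf{a}^{*}$ is the most costly action, and mutual absolute continuity together with everywhere‑positive densities is inherited from the single‑task primitives; hence all the structural hypotheses of Section~\ref{sec_multidev} hold, and Assumptions~\ref{assm_mc} and \ref{assm_compact} (and, for the second part, \ref{assm_regular}) for every $a\in\mathcal{D}$ are precisely the hypotheses assumed in the corollary.

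The only nonmechanical input is to compute the deviation statistics $Z_{a}$ for $a\in\mathcal{D}$. Using $p_{\mathbf{a}}(\bm{\omega})=p_{1,a_{1}}(\omega_{1})\,p_{2,a_{2}}(\omega_{2})$ and cancelling the common factor one gets $Z_{01}(\bm{\omega})=Z_{1}(\omega_{1})$, $Z_{10}(\bm{\omega})=Z_{2}(\omega_{2})$, and $Z_{00}(\bm{\omega})=1-\left(1-Z_{1}(\omega_{1})\right)\left(1-Z_{2}(\omega_{2})\right)=Z_{1}(\omega_{1})+Z_{2}(\omega_{2})-Z_{1}(\omega_{1})Z_{2}(\omega_{2})$, so that for $\bm{\lambda}=(\lambda_{01},\lambda_{10},\lambda_{00})^{\top}\in\mathbb{R}_{+}^{3}$,
\[
Z_{\bm{\lambda}}=\lambda_{01}Z_{01}+\lambda_{10}Z_{10}+\lambda_{00}Z_{00}=\left(\lambda_{01}+\lambda_{00}\right)Z_{1}+\left(\lambda_{10}+\lambda_{00}\right)Z_{2}-\lambda_{00}Z_{1}Z_{2},
\]
which is exactly the displayed definition of $Z_{\bm{\lambda}}$ preceding the corollary. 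With this identification Theorem~\ref{thm_multidev} delivers a vector $\bm{\lambda}^{*}\in\mathbb{R}_{+}^{3}$ with $\|\bm{\lambda}^{*}\|>0$ such that every cell of $\mathcal{P}^{*}$ is $Z_{\bm{\lambda}^{*}}$‑convex and the cells admit a labeling $A_{1},\dots,A_{N}$ with $0=w^{*}(A_{1})<\dots<w^{*}(A_{N})$, and—under the extra regularity—$A_{n}=\left\{\bm{\omega}:Z_{\bm{\lambda}^{*}}(\bm{\omega})\in[\widehat{z}_{n-1},\widehat{z}_{n})\right\}$. This is the whole corollary except that one must strengthen ``$\|\bm{\lambda}^{*}\|>0$'' to ``$\lambda_{01}^{*}+\lambda_{00}^{*}>0$ and $\lambda_{10}^{*}+\lambda_{00}^{*}>0$,'' i.e.\ that neither $Z_{1}$ nor $Z_{2}$ drops out of $Z_{\bm{\lambda}^{*}}$.

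For this refinement I would argue by contradiction. If the coefficient of $Z_{1}$ in $Z_{\bm{\lambda}^{*}}$ were zero, then, since $\lambda_{01}^{*},\lambda_{00}^{*}\ge0$, we would have $\lambda_{01}^{*}=\lambda_{00}^{*}=0$, so $Z_{\bm{\lambda}^{*}}=\lambda_{10}^{*}Z_{2}$ depends on $\omega_{2}$ alone. Then $Z_{\bm{\lambda}^{*}}$‑convexity forces every cell of $\mathcal{P}^{*}$ to be, up to a $P_{\mathbf{a}^{*}}$‑null set, a union of level sets of $Z_{2}$, so $w^{*}(A(\bm{\omega}))$ is $P_{\mathbf{a}^{*}}$‑a.s.\ a function of $\omega_{2}$ only. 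Using the independence of $\omega_{1}$ and $\omega_{2}$ under $\mathbf{a}^{*}$ and $\mathbf{E}\left[Z_{1}\mid a_{1}=1\right]=0$, the left‑hand side of (\ref{eqn_ic_multidev}) for $a=01$ equals $\mathbf{E}\left[u(w^{*}(A(\bm{\omega})))Z_{1}(\omega_{1})\mid\mathbf{a}^{*}\right]=\mathbf{E}\left[u(w^{*})\mid\mathbf{a}^{*}\right]\cdot\mathbf{E}\left[Z_{1}\mid a_{1}=1\right]=0<c(\mathbf{a}^{*})-c(01)$, contradicting incentive compatibility; the mirror‑image argument with $a=10$ rules out the vanishing of the coefficient of $Z_{2}$.

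I expect the one genuinely delicate step to be the assertion that $Z_{\bm{\lambda}^{*}}$‑convexity forces $w^{*}$ to depend on $\omega_{2}$ alone up to a null set: $Z_{2}$ may carry an atom, so a positive‑measure level set $\{Z_{2}=c\}$ could in principle be split between two adjacent cells along the $\omega_{1}$‑direction without violating $Z_{\bm{\lambda}^{*}}$‑convexity (the latter imposes nothing on a set of constant $Z_{\bm{\lambda}^{*}}$‑value). I would close this gap either by the multi‑deviation analogue of Lemma~\ref{lem_foc}, which pins $w^{*}(A)$ down as a function of $z_{\bm{\lambda}^{*}}(A)=\lambda_{10}^{*}\mathbf{E}\left[Z_{2}\mid A\right]$ on cells with positive wage, or by noting that any welfare‑improving split of such a level set is monotone in $Z_{1}$ and hence realizes $\mathcal{P}^{*}$ as a $Z_{\bm{\lambda}}$‑convex partition for some $\bm{\lambda}$ whose coefficient on $Z_{1}$ is strictly positive, which is again the desired form. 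Everything else is a verbatim translation of Theorem~\ref{thm_multidev}.
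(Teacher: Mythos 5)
Your proposal is correct and takes exactly the paper's route: the paper presents this corollary as an immediate specialization of Theorem \ref{thm_multidev}, resting on precisely your identification $Z_{01}=Z_1$, $Z_{10}=Z_2$, $Z_{00}=Z_1+Z_2-Z_1Z_2$ and hence $Z_{\bm{\lambda}}=\left(\lambda_{01}+\lambda_{00}\right)Z_1+\left(\lambda_{10}+\lambda_{00}\right)Z_2-\lambda_{00}Z_1Z_2$. The only point where you go beyond the source is the strengthening of $\|\bm{\lambda}^*\|>0$ to $\lambda^*_{01}+\lambda^*_{00}>0$ and $\lambda^*_{10}+\lambda^*_{00}>0$, which the paper asserts without argument; your independence/zero-mean contradiction (a vanishing coefficient on $Z_1$ would make the wage $\sigma\left(\omega_2\right)$-measurable and drive the left-hand side of the corresponding incentive constraint to $0<c\left(11\right)-c\left(01\right)$) is the right justification, and the atom caveat you flag is resolved under Assumption \ref{assm_regular} for all $a\in\mathcal{D}$, so you are if anything more careful than the paper itself.
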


In a seminal paper, \cite{holmstrommilgrom} shows that when the agent faces  multiple tasks, over-incentivizing tasks that generate precise performance data may prevent the completion of tasks that generate noisy performance data. That analysis abstracts away from monitoring costs and focuses on the power of (linear) compensation schemes. 

Corollary \ref{cor_multitask} delivers a different message: when it comes to allocating limited resources across the assessments of multiple task performances, the optimal allocation should reflect the agent's endogenous tendency to shirk each task. The usefulness of this result is illustrated by the next example: 

\begin{example}\label{exm_multitask}
A cashier faces two tasks: to scan items and to project warmth to customers. A piece of performance data consists of the scanner data recorded by the point of sale (POS) system, as well as the feedback gathered from customers. By Corollary \ref{cor_multitask}, the following ratio: 
\[
R=\frac{\lambda_{01}^*+\lambda_{00}^*}{\lambda_{10}^*+\lambda_{00}^*}
\]
captures how the principal should allocate limited resources across the assessments of skillfulness in scanning items and warmth. Intuitively, a small $R$ arises when the cashier is reluctant to project warmth to customers, in which case resources should be devoted to the assessment of warmth, and the final performance rating should depend significantly on such assessment. 

We examine how optimal resource allocation varies with the precision of raw performance data. As in \cite{holmstrommilgrom}, we assume that
\begin{itemize}
\item $\omega_i=a_i+ \xi_i$ for $i=1,2$, where $\xi_i$'s are independent normal random variables with mean zero and variance $\sigma_i^2$'s;

\item the cashier has CARA utility of consumption $u(w)=1-\exp\left(-\gamma w\right)$.
\end{itemize}
Unlike \cite{holmstrommilgrom}, we do not confine ourselves to linear wage schemes.

In the case where the monitoring cost is an increasing function of the rating scale, we compute $R$ for different values of $\sigma_1^2$, holding $\sigma_2^2=1$ and $|\mathcal{P}|=2$ fixed. Our findings are reported in Figure \ref{figure_multitask}. 
Assuming that our parameter choices are reasonable ones, we arrive at the following conclusion: as skillfulness becomes easier to measure--thanks to the advent of high quality scanner data--the cashier becomes more afraid to shirk the scanning task and less so about projecting coldness to customers; to correct the cashier's incentive, resources should be shifted towards the processing and analysis of customer feedback and away from that of scanner data. In the future, one can test this prediction by running field experiments as that of \cite{india}. For example, one can randomize the quality of scanner data among otherwise similar stores and examine the effect on resource allocation between scanner data and customer feedback. 

\bigskip

\begin{figure}[!h]
    \centering
    \includegraphics[scale=.6]{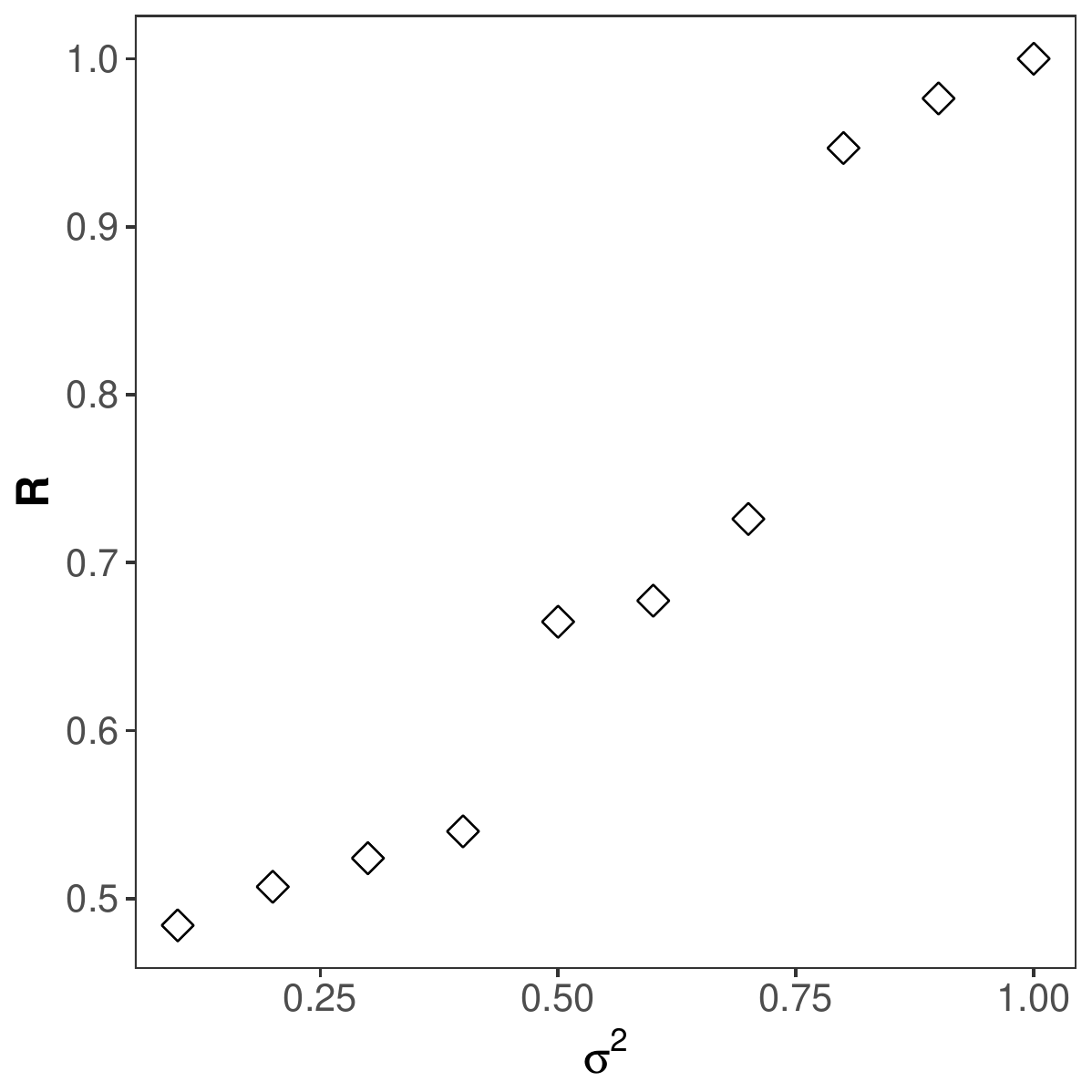}
    \caption{Plot $R$ against $\sigma_1^2$: $H\left(\mathcal{P},a\right)=f\left(|\mathcal{P}|\right)$, $|\mathcal{P}|=2$; $u\left(w\right)=1-\exp\left(-.5w\right)$; $c\left(00\right)=0$, $c\left(01\right)=0.3$, $c\left(10\right)=0.2$ and $c\left(11\right)=0.5$; $\xi_1$ and $\xi_2$ are normally distributed with mean zero and $\sigma_2^2=1$.}\label{figure_multitask}
    \end{figure}
\end{example}

\section{Conclusion}\label{sec_conclusion}
We conclude by posing open questions. First, our work is broadly related to the burgeoning literature on information design (see, e.g., \cite{bergemannmorris} for a survey), and we hope it inspires new research questions such as how to conduct costly yet flexible monitoring in long-term employment relationships. Second, our theory may guide  investigations into empirical issues such as how advancements in big data technologies have affected the design and implementation of monitoring technologies, and whether they can partially explain the heterogeneity in the internal organizations of otherwise similar firms. We hope that someone, maybe ourselves, will pursue these research agendas in the future. 

\appendix 

\section{Omitted Proofs}\label{sec_proof}

\subsection{Proofs of Section \ref{sec_analysis}}\label{sec_proof_baseline}
In this appendix, write any $N$-partitional contract $\langle \mathcal{P}, w(\cdot)\rangle$ as the corresponding tuple $\langle A_n, \pi_n, z_n, w_n\rangle_{n=1}^N$, where $A_n$ is a generic cell of $\mathcal{P}$, $\pi_n=P_1\left(A_n\right)$, $z_n=z\left(A_n\right)$ and $w_n=w\left(A_n\right)$. Assume w.l.o.g. that $z_1 \leq \cdots \leq z_N$. 

\subsubsection{Useful Lemmas}\label{sec_proof_baseline_lemma}
\noindent Proof of Lemma \ref{lem_foc}
\begin{proof}
The wage-minimization problem for given monitoring technology $\langle A_n, \pi_n, z_n\rangle_{n=1}^N$ as in Lemma \ref{lem_foc} is
\begin{align*}
\min_{\langle \tilde{w}_n\rangle}  \sum_{n=1}^N \pi_n\tilde{w}_n - \lambda \left(\sum_{n=1}^N \pi_n u\left(\tilde{w}_n\right) z_n - c\right)-\sum_{n=1}^N \eta_n \tilde{w}_n, 
\end{align*}
where $\lambda$ and $\eta_n$ denote the Lagrange multipliers associated with the (\ref{eqn_ic}) constraint and (\ref{eqn_ll}) constraint at $\tilde{w}_n$, respectively. Differentiating the objective function with respect to $\tilde{w}_n$ and setting the result equal to zero yields $\lambda z_n u'\left(w_n\right)=1-\eta_n/\pi_n,$ implying that 
$u'\left(w_n\right)=1/\left(\lambda z_n\right)$
if and only if $w_n>0$.
\end{proof}

\bigskip

\noindent Proof of Lemma \ref{lem_mlrp}
\begin{proof}
Fix any optimal incentive contract that induces high effort from the agent and let $\langle A_n, \pi_n, z_n, w_n\rangle_{n=1}^N$ be the corresponding tuple. Note  that $N \geq 2$. By Assumption \ref{assm_mc}(b), if $w_j=w_k$ for some $j \neq k$, then merging $A_j$ and $A_k$ has no effect on the incentive cost but strictly reduces the monitoring cost, which contradicts the optimality of the original contract. Then from Lemma \ref{lem_foc} and the assumption $z_1 \leq \cdots \leq z_N$, it follows that $0\leq w_1<\cdots<w_N$ and $z_1 < \cdots<z_N$. In particular, we must have $z_1<0$  because $\sum_{n=1}^N \pi_n z_n=0$. This implies $w_1=0$, because otherwise letting $w_1=0$  reduces the expected wage and relaxes the (\ref{eqn_ic}) constraint while keeping the (\ref{eqn_ll}) constraint satisfied. Finally, combining $w_n>0$ for $n \geq 2$ and Lemma \ref{lem_foc} yields $z_n>0$ for $n \geq 2$.
\end{proof}

\bigskip 

\begin{lem}\label{lem_smallSet}
For all $A\in\Sigma$ such that $P_1\left(A\right)>0$ and $\epsilon\in (0,P_1\left(A\right)]$, there exists $A_\epsilon\subset{A}$ such that  $P_1\left(A_\epsilon\right)=\epsilon$ and $z\left(A_\epsilon\right)=z\left(A\right)$. 
\end{lem}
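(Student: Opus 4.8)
The plan is to realize $A_\epsilon$ as a subset of $A$ on which the conditional expectation of $Z$ is forced to equal $z(A)$ by a symmetry/intermediate-value argument. The natural device is to slice $A$ by the value of $Z$ and choose a ``balanced'' sub-slice. Concretely, for $t \in \mathbb{R}$ let $A^{\leq t} = \{\omega \in A : Z(\omega) \leq t\}$ and $A^{\geq t} = \{\omega \in A : Z(\omega) \geq t\}$, and more generally for a parameter controlling how much mass we take from below versus above a threshold, define a family of candidate sets $B(s)$ depending continuously (in the $P_1$-measure sense) on $s$, all with $P_1(B(s)) = \epsilon$, such that $z(B(s))$ ranges over an interval that contains $z(A)$. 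Since $z(A)$ is the $\epsilon/P_1(A)$-average... more precisely, since $z(A)$ is a weighted average of $Z$-values over $A$, any sufficiently flexible family of $\epsilon$-mass subsets will have $z$-values straddling $z(A)$, and continuity plus the intermediate value theorem then delivers the desired $A_\epsilon$.

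First I would dispose of the trivial case $\epsilon = P_1(A)$ by taking $A_\epsilon = A$. For $\epsilon \in (0, P_1(A))$, I would first reduce to the case where $Z$ is non-constant on $A$: if $Z \equiv z(A)$ on $A$ (up to $P_1$-null sets), then any measurable $A_\epsilon \subset A$ with $P_1(A_\epsilon) = \epsilon$ works, and such a set exists because $P_1$ restricted to $A$ is atomless — note $P_1$ is mutually absolutely continuous with $P_a$'s having everywhere-positive densities on a Euclidean $\Omega$, so $P_1$ is atomless, and an atomless measure takes every value in $[0, P_1(A)]$ on subsets of $A$ (Sierpi\'nski / Lyapunov). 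Then, assuming $Z$ non-constant on $A$, pick values $t_- < t_+$ in the essential range of $Z|_A$ with $P_1(A \cap \{Z \leq t_-\}) > 0$ and $P_1(A \cap \{Z \geq t_+\}) > 0$. Using atomlessness of $P_1$, choose for each $r \in [0, \epsilon]$ a set $C_r \subset A \cap \{Z \leq t_-\}$ (when possible) and a complementary set $D_{\epsilon - r} \subset A \cap \{Z > t_-\}$ so that $P_1(C_r) = \min(r, \cdot)$ and $P_1(D_{\epsilon-r})$ makes up the rest; the point is to build a one-parameter family $A_\epsilon(r)$ of $\epsilon$-mass subsets of $A$ interpolating between ``as much low-$Z$ mass as possible'' and ``as much high-$Z$ mass as possible,'' with $z(A_\epsilon(r))$ depending continuously on $r$.

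The key step — and the main obstacle — is verifying that the map $r \mapsto z(A_\epsilon(r))$ is continuous and that its range contains $z(A)$. Continuity follows from dominated convergence once the sets $A_\epsilon(r)$ are chosen to vary continuously in symmetric difference (which atomlessness permits, e.g.\ by filling up super-level sets of $Z$ monotonically); boundedness of $Z$ on $A$ is automatic here since the relevant $z$-values lie in a bounded set, but even without Assumption \ref{assm_compact} one can truncate. For the range: when $r = 0$ we have packed $A_\epsilon(0)$ with the highest available $Z$-values, so $z(A_\epsilon(0)) \geq z(A)$; when $r = \epsilon$ (taking as much low-$Z$ mass as available, then the least-high remainder) we get $z(A_\epsilon(\epsilon)) \leq z(A)$ — the inequalities hold because replacing part of a set by a subset with uniformly larger (resp.\ smaller) $Z$-values raises (resp.\ lowers) the conditional mean, and $A$ itself is an $\epsilon$-fraction mixture that these extremes bracket. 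By the intermediate value theorem there is $r^*$ with $z(A_\epsilon(r^*)) = z(A)$, and $A_\epsilon := A_\epsilon(r^*)$ is the desired set. I would write the interpolating family explicitly via level sets of $Z$ to make the continuity and bracketing transparent, rather than appealing to an abstract Lyapunov-type theorem, since the one-dimensional structure of $Z$ makes the construction elementary.
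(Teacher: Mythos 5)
Your proposal is correct and follows essentially the same route as the paper's proof: the paper also forms, for each $t\in[0,\epsilon]$, the union of a ``top-mass'' set $B_t$ and a ``bottom-mass'' set $C_{\epsilon-t}$ (defined via the level sets of $Z$ within $A$, using atomlessness of $P_1$), notes that $z(B_t)\geq z(A)\geq z(C_{\epsilon-t})$ so the endpoints of the family bracket $z(A)$, and concludes by continuity of $t\mapsto z\left(B_t\cup C_{\epsilon-t}\right)$ and the intermediate value theorem. Your additional handling of the trivial cases ($\epsilon=P_1(A)$ and $Z$ constant on $A$) is fine but is subsumed by the general argument.
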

\begin{proof}
Let $A$ be as above. Since $P_1$ admits a density, it follows that for all $t\in(0,P_1(A)]$, there exists $B_t\subset A$ such that $P_1\left(B_t\right)=t$ and $Z\left(\omega'\right)\leq Z\left(\omega\right)$ for all $\omega\in B_t$ and $\omega'\in A\setminus B_t$. Likewise, there exists $C_t\subset A$ such that  $P_1\left(C_t\right)=t$ and $Z\left(\omega'\right)\geq Z\left(\omega\right)$ for all $\omega\in C_t$ and $\omega'\in A\setminus C_t$. For $t=0$ define  $B_0=C_0=\emptyset$. 
 
Let $\epsilon$ be as above. Consider $B_t \cup C_{\epsilon-t}$, $t \in \left[0,\epsilon\right]$. Since $z\left(B_t\right)\geq z\left(A\right)$ and $z\left(C_{\epsilon-t}\right)\leq z\left(A\right)$ for all $t \in \left(0,\epsilon\right)$  and $z\left(B_t\cup C_{\epsilon-t}\right)$ is continuous in $t$ (because $P_1$ admits a density), there exists $t\in \left[0,\epsilon\right]$ such that $z\left(B_t\cup C_{\epsilon-t}\right)=z\left(A\right)$. Meanwhile $P_1\left(B_t\cup C_{\epsilon-t}\right)=\epsilon$ by construction, so let $A_{\epsilon}=B_t\cup C_{\epsilon-t}$ and we are done.
\end{proof} 

\subsubsection{Proof of Theorem \ref{thm_main}}
\begin{proof}
Take any optimal incentive contract that induces high effort from the agent and let $\langle A_n, \pi_n, z_n, w_n\rangle_{n=1}^N$ be the corresponding tuple. Suppose, to the contrary, that some $A_j$ is not $Z$-convex. By Definition \ref{defn_zconvex}, there exist $A', A'' \subset A_j$ and $ \tilde{A} \subset A_k$, $k \neq j$ such that (i) $P_1\left(A'\right)$, $P_1\left(A''\right)$,  $P_1(\tilde{A})>0$, and (ii) $\tilde{z}=\left(1-s\right)z'+sz''$, where $z'\coloneqq z\left(A'\right)\neq z'' \coloneqq z\left(A''\right)$, $\tilde{z}\coloneqq z(\tilde{A})$ and $s \in \left(0,1\right)$. By Lemma \ref{lem_smallSet}, for all $\epsilon\in (0, \min \{P_1\left(A'\right), P_1\left(A''\right), P_1(\tilde{A})\})$, there exist $A'_\epsilon \subset A'$, $A''_\epsilon \subset A''$ and $\tilde{A}_\epsilon \subset \tilde{A}$ such that (i) $P_1\left(A'_\epsilon\right)=P_1\left(A''_\epsilon\right)=P_1(\tilde{A}_\epsilon)=\epsilon$, and (ii) $z\left(A'_\epsilon\right)=z'$, $z\left(A''_\epsilon\right)=z''$ and $z(\tilde{A}_\epsilon)=\tilde{z}$. 

Consider two perturbations to the monitoring technology: (a) move $A'_\epsilon$ to $A_k$ and $\tilde{A}_\epsilon$ to $A_j$; (b) move $\tilde{A}_\epsilon$ to $A_j$ and $A_\epsilon''$ to $A_k$. By construction, neither perturbation affects the probability distribution of the output signal under high effort and hence the monitoring cost. Below we demonstrate that one of them strictly reduces the incentive cost compared to the original (optimal) contract. 

\paragraph{Perturbation (a)} Let $\langle A_n(\epsilon), \pi_n, z_n(\epsilon)\rangle_{n=1}^N$ be the tuple associated with the monitoring technology after perturbation (a). By construction, $A_j\left(\epsilon\right)=(A_j\cup \tilde{A}_{\epsilon}) \setminus A'_{\epsilon}$, so 
\[
z_j\left(\epsilon\right)=\frac{\pi_j z_j - \epsilon z' + \epsilon \tilde{z}}{\pi_j}=z_j+\displaystyle \frac{s\left(z''-z'\right)}{\pi_j} \epsilon.
\] 
Likewise, $A_k\left(\epsilon\right)=\left(A_k \cup A'_{\epsilon}\right) \setminus \tilde{A}_{\epsilon}$ and $A_n\left(\epsilon\right)=A_n$ for $n \neq j, k$, and similar algebraic manipulation as above yields
\begin{align}\label{eqn_zprime}
\begin{cases}
z_j\left(\epsilon\right)=z_j+\displaystyle \frac{s\left(z''-z'\right)}{\pi_j} \epsilon,\\
z_k\left(\epsilon\right)=z_k-\displaystyle\frac{s\left(z''-z'\right)}{\pi_k} \epsilon,\\
z_n\left(\epsilon\right)=z_n \text{ }\forall n \neq j, k. 
\end{cases}
\end{align}
Consider wage profile $\langle w_n\left(\epsilon\right)\rangle_{n=1}^N$ such that $w_1\left(\epsilon\right)=0$ and the (\ref{eqn_ic}) constraint remains binding after the perturbation, i.e., 
\begin{equation}\label{eqn_perturbic}
\sum_{n=1}^N \pi_n u\left(w_n\left(\epsilon\right)\right)z_n\left(\epsilon\right) =\sum_{n=1}^N \pi_n u\left(w_n\right)z_n=c. 
\end{equation} 
A close inspection of Equations (\ref{eqn_zprime}) and (\ref{eqn_perturbic}) reveals the existence of $M>0$ independent of $\epsilon$ such that when $\epsilon$ is small, there exist wage profiles as above that satisfy $|w_n\left(\epsilon\right)-w_n|<M\epsilon$ for all $n$ and hence the (\ref{eqn_ll}) constraint by Lemma \ref{lem_mlrp}.\footnote{To be precise, recall that $u\left(w_n\right)$, $z_n > 0$ for $n \geq 2$ by Lemma \ref{lem_mlrp}. Thus when $\epsilon$ is small, $z_n\left(\epsilon\right)> 0$ for $n \geq 2$ and solving $\sum_{n=2}^N \pi_n u\left(x_n\right) z_n\left(\epsilon\right) =\sum_{n=2}^N \pi_n u\left(w_n\right)z_n$ yields wage profiles as above. } 

With a slight abuse of notation, write $\dot{w}_{n}\left(\epsilon\right)= \left(w_{n}\left(\epsilon\right)-w_{n}\right)/\epsilon$ and $\dot{z}_{n}\left(\epsilon\right)= \left(z_{n}\left(\epsilon\right)-z_{n}\right)/\epsilon$,\footnote{Note that we do not assume differentiability of $w_n\left(\epsilon\right)$ or $z_n\left(\epsilon\right)$ with respect to $\epsilon$. The same disclaimer applies to the remainder of this paper.} and note that $\dot{w}_1\left(\epsilon\right)=0$. When $\epsilon$ is small, expanding Equation  (\ref{eqn_perturbic}) using the twice-differentiability of $u\left(\cdot\right)$ and $|w_n\left(\epsilon\right)-w_n| \sim \mathcal{O}\left(\epsilon\right)$ yields
\[
\sum_{n=1}^N \pi_n u\left(w_n\right)z_n=\sum_{n=1}^N \pi_n \left(u\left(w_n\right)+u'\left(w_n\right)\cdot \dot{w}_{n}\left(\epsilon\right) \cdot\epsilon + \mathcal{O}\left(\epsilon^2\right)\right)\left(z_n+\dot{z}_{n}\left(\epsilon\right) \cdot\epsilon\right). 
\]
Multiply the above equation by the Lagrange multiplier $\lambda>0$ associated with the (\ref{eqn_ic}) constraint prior to the perturbation. Rearranging yields
\[
\sum_{n=1}^N \pi_n \cdot u'\left(w_n\right) \cdot \lambda z_n \cdot \dot{w}_n\left(\epsilon\right)=-\lambda \sum_{n =1}^N u\left(w_{n}\right) \cdot \pi_n \dot{z}_{n}\left(\epsilon\right) + \mathcal{O}\left(\epsilon\right),
\]
and simplifying using $\dot{w}_1\left(\epsilon\right)=0$, $u'\left(w_n\right)=1/\left(\lambda z_n\right)$ for $n \geq 2$ (Lemmas \ref{lem_foc} and \ref{lem_mlrp}) and Equation (\ref{eqn_zprime}) yields
\begin{equation}\label{eqn_marginal impact of perturbation a}
\sum_{n=1}^N \pi_n \dot{w}_n\left(\epsilon\right)=s \left(u\left(w_k\right)-u\left(w_j\right)\right)  \left(\lambda z''-\lambda z'\right)+ \mathcal{O}\left(\epsilon\right). 
\end{equation}

\paragraph{Perturbation (b)} Repeating the above argument for perturbation (b) yields
\begin{equation}\label{eqn_marginal impact of perturbation b}
\sum_{n=1}^N \pi_n \dot{w}_n\left(\epsilon\right)=-\lambda \left(1-s\right)\left(u\left(w_k\right)-u\left(w_j\right)\right)\left(z''-z'\right)+\mathcal{O}\left(\epsilon\right).
\end{equation}
Then from $u\left(w_j\right)\neq u\left(w_k\right)$ (Lemma \ref{lem_mlrp}), $z' \neq z'' $ (by assumption) and $\lambda>0$, it follows that the right-hand side of either Equation (\ref{eqn_marginal impact of perturbation a}) or  (\ref{eqn_marginal impact of perturbation b}) is strictly negative when $\epsilon$ is small. Thus for either perturbation (a) or (b), we can construct a wage profile that incurs a lower incentive cost than the original optimal contract, and this leads to a contradiction.
\end{proof}

\subsubsection{Proof of Theorem \ref{thm_exist}}
\begin{proof}
By Theorem \ref{thm_main}, any optimal monitoring technology with at most $N \in \left\{2,\cdots, K\right\}$ cells is fully characterized by $N-1$ cutpoints $\widehat{z}_1,\cdots, \widehat{z}_{N-1}$ satisfying $\min Z\left(\Omega\right)\leq \widehat{z}_1\leq \cdots \leq \widehat{z}_{N-1} \leq \max Z\left(\Omega\right)$. Write $\widehat{\bf{z}}=\left(\widehat{z}_1, \cdots, \widehat{z}_{N-1}\right)^{\top}$. Define
\begin{equation*}
\mathcal{Z}_N=\left\{\widehat{\bf{z}}: \min Z\left(\Omega\right)\leq \widehat{z}_1 \leq \cdots \leq \widehat{z}_{N-1} \leq \max Z\left(\Omega\right)\right\}, 
\end{equation*}
equip $\mathcal{Z}_N$ with the sup norm $\| \cdot\|$, and note that $\mathcal{Z}_N$ is compact by Assumption \ref{assm_compact}. Let $W\left(\widehat{\bf{z}}\right)$ be the minimal incentive cost for inducing high effort from the agent under the monitoring technology formed by $\widehat{\bf{z}}$. Note that $W\left(\widehat{\bf{z}}\right)$ exists and is finite if and only if $\min Z\left(\Omega\right)<\widehat{z}_n<\max Z\left(\Omega\right)$ for some $n$, because then  $z\left(A\right) \not\equiv 0$ across the performance category $A$'s formed under $\widehat{\bf{z}}$, so $W\left(\widehat{\bf{z}}\right)$ can be solved by applying Lemma \ref{lem_foc}. 

We proceed in two steps.

\paragraph{Step 1} Show that $W\left(\widehat{\bf{z}}\right)$ is continuous in $\widehat{\bf{z}}$ for any given  $N \in \left\{2,\cdots, K\right\}$. 

\bigskip

Fix  any $\widehat{\bf{z}} \in \mathcal{Z}_N$ such that $W\left(\widehat{\bf{z}}\right)$ is finite. W.l.o.g. consider the case where $\widehat{z}_{n}$'s are all distinct. For sufficiently small $\delta>0$, let 
$\widehat{\bf{z}}^{\delta}$ be any element of $\mathcal{Z}_N$ such that $\|\widehat{\bf{z}}^{\delta}-\widehat{\bf{z}}\|<\delta$. Let $\pi_n$ and $z_n$ (resp. $\pi_n^{\delta}$ and $z_n^{\delta}$) denote the probability (under $a=1$) and $z$-value of $A_n=\left\{\omega: Z\left(\omega\right) \in [\widehat{z}_{n-1}, \widehat{z}_n)\right\}$ (resp. $A_n^{\delta}=\left\{\omega: Z(\omega) \in [\widehat{z}_{n-1}^{\delta}, \widehat{z}_n^{\delta})\right\}$), respectively. Let $w_n$ denote the optimal wage at $A_n$.

Fix any $\epsilon>0$, and consider the wage profile that pays $w_n+\epsilon$ at $A_n^{\delta}$ if $z_n^{\delta}>0$ and $w_n$ otherwise. By construction, this wage profile satisfies the (\ref{eqn_ll}) constraint. Under Assumptions \ref{assm_regular} and \ref{assm_compact}, it satisfies the (\ref{eqn_ic}) constraint when $\delta$ is sufficiently small:  
\[
\lim_{\delta \rightarrow 0}\sum_{n} \pi_n^{\delta} u\left(w_n+ 1_{z_n^{\delta}>0}\cdot \epsilon \right)z_n^{\delta}=\sum_{n} \pi_n u\left(w_n+1_{z_n>0} \cdot \epsilon\right)z_n>c,
\]
where the inequality holds because $\sum_{n} \pi_n z_n=0$ and $z_n \not\equiv 0$ so $z_n>0$ for some $n$. In addition, since 
\[
\lim_{\delta \rightarrow 0}\sum_{n} \pi_n^{\delta}\left(w_n+ 1_{z_n^{\delta}>0} \cdot \epsilon \right)=\sum_{n} \pi_n\left(w_n+1_{z_n>0} \cdot \epsilon\right), 
\]
it follows that when $\delta$ is sufficiently small, 
\[
W\left(\widehat{\bf{z}}^{\delta}\right)-W\left(\widehat{\bf{z}}\right)\leq \sum_{n} \pi_n^{\delta}\left(w_n+1_{z_n^\delta>0}\cdot \epsilon\right)-\sum_{n} \pi_n w_n <\epsilon, 
\]
where the first inequality holds because the constructed wage profile is not necessarily optimal under $\widehat{{\bf{z}}}^{\delta}$. Finally, interchanging the roles between $\widehat{\bf{z}}$ and $\widehat{\bf{z}}^{\delta}$ in the above derivation yields $W\left(\widehat{\bf{z}}\right)-W\left(\widehat{\bf{z}}^{\delta}\right)<\epsilon$, implying that $\left\vert W\left(\widehat{\bf{z}}^{\delta}\right)-W\left(\widehat{\bf{z}}\right)\right\vert<\epsilon$ when $\delta$ is sufficiently small.

\paragraph{Step 2} Under Assumption \ref{assm_cf}(a), 
the following quantity:
\[
W_N\coloneqq \min_{\widehat{\bf{z}} \in \mathcal{Z}_N} W\left(\widehat{\bf{z}}\right)
\]
exists and is finite for all $N \in \left\{2,\cdots, K\right\}$ by Step 1 and the compactness of $\mathcal{Z}_N$. Let $m_N$ denote the minimal rating scale attained by $W_N$. Solving \[\min_{2 \leq N\leq K}W_N+\mu \cdot f\left(m_N\right)\] yields the solution(s) to the principal's problem. 

Under Assumption \ref{assm_cf}(b), the principal's problem can be written as follows: 
\begin{equation*}
\min_{\widehat{\bf{z}} \in \mathcal{Z}_K} W\left(\widehat{\bf{z}}\right) + \mu\cdot h\left(\bm{\pi}\left(\widehat{\bf{z}}\right)\right), 
\end{equation*}
where $\bm{\pi}\left(\widehat{\bf{z}}\right)$ is the probability vector formed under $\widehat{\bf{z}}$ and is clearly continuous in $\widehat{\bf{z}}$. 
The existence of solution(s) then follows from Step 1 and the compactness of $\mathcal{Z}_{K}$. 
\end{proof}

\subsection{Proof of Section \ref{sec_multiagent}}\label{sec_proof_multiagent}
In this appendix, write any $N$-partitional contract $\langle \mathcal{P}, {\bf{w}}(\cdot) \rangle$ as the corresponding tuple $\langle A_n, \pi_n, {\bf{z}}_n, {\bf{w}}_n \rangle_{n=1}^N$, where $A_n$ is a generic cell of $\mathcal{P}$, $\pi_n=P_{{\bf{1}}}\left(A_n\right)$, ${\bf{z}}_n=\left(z_{1,n},z_{2,n}\right)^{\top}
\coloneqq \left(z_1\left(A_n\right), z_2\left(A_n\right)\right)^{\top}$ and ${\bf{w}}_n=\left(w_{1,n},w_{2,n}\right)^{\top}\coloneqq \left(w_1\left(A_n\right), w_2\left(A_n\right)\right)^{\top}$. 

\subsubsection{Useful Lemmas}\label{sec_proof_multiagent_lemma}
The next lemma generalizes Lemmas \ref{lem_foc} and \ref{lem_mlrp} to encompass multiple agents: 

\begin{lem}\label{lem_multiagent}
Assume Assumption \ref{assm_mc}. Then under any optimal incentive contract that induces high effort from both agents, (i) there exist $\lambda_1, \lambda_2>0$ such that $u_i'\left(w_{i,n}\right)=1/\left(\lambda_i z_{i,n} \right)$ if and only if $w_{i,n}>0$; (ii) ${\bf{w}}_j \neq {\bf{w}}_k$ for all $j \neq k$. 
\end{lem}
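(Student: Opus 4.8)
The plan is to mimic the single-agent arguments (Lemmas \ref{lem_foc} and \ref{lem_mlrp}) applied componentwise, with the added observation that under Assumption \ref{assm_product} the only coupling across agents comes through the shared monitoring technology, not through the wage scheme or the incentive constraints. First I would write the principal's problem for a fixed monitoring technology $\langle A_n, \pi_n, {\bf{z}}_n\rangle_{n=1}^N$ as a wage-minimization problem and form the Lagrangian with multipliers $\lambda_i$ on each (\ref{eqn_ici}) constraint and $\eta_{i,n}$ on each (\ref{eqn_lli}) constraint. Since the objective $\sum_n \pi_n\sum_i w_{i,n}$ and the constraints separate across $i$, the first-order condition in $w_{i,n}$ gives $\lambda_i z_{i,n} u_i'(w_{i,n}) = 1 - \eta_{i,n}/\pi_n$, which yields part (i): $u_i'(w_{i,n}) = 1/(\lambda_i z_{i,n})$ exactly when $w_{i,n}>0$, and $\lambda_i>0$ because (\ref{eqn_ici}) must bind (if it were slack one could shave wages for agent $i$, as in the baseline). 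One subtlety to record: $\lambda_i>0$ forces $z_{i,n}>0$ whenever $w_{i,n}>0$, and since $\mathbf{E}[Z_i\mid{\bf{a}}={\bf{1}}]=0$ there must be at least one cell with $z_{i,n}\le 0$, hence with $w_{i,n}=0$; this mirrors Lemma \ref{lem_mlrp}'s treatment of the binding limited-liability cell.

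For part (ii), suppose ${\bf{w}}_j={\bf{w}}_k$ for some $j\neq k$. The idea is the same as in Lemma \ref{lem_mlrp}: merging $A_j$ and $A_k$ into a single cell $A_j\cup A_k$ leaves both agents' incentive constraints intact and strictly lowers the monitoring cost. Concretely, because $w_{i,j}=w_{i,k}$ for $i=1,2$, the merged cell pays each agent that common wage, so for each $i$ the term $\pi_j u_i(w_{i,j})z_{i,j} + \pi_k u_i(w_{i,k})z_{i,k}$ in (\ref{eqn_ici}) equals $(\pi_j+\pi_k)u_i(w_{i,j})z_i(A_j\cup A_k)$ by the tower property of conditional expectation (the $z$-value of the union is the probability-weighted average of the two $z$-values), so (\ref{eqn_ici}) is unchanged; the expected wage is likewise unchanged; but by Assumption \ref{assm_mc}(b) the monitoring cost $h(\bm{\pi})$ strictly decreases when two positive-probability coordinates are consolidated into one. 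This contradicts optimality, so ${\bf{w}}_j\neq{\bf{w}}_k$.

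The main obstacle — really the only place care is needed — is handling the limited-liability multipliers cleanly in part (i), i.e., making sure the ``if and only if'' is stated for the right cells and that $\lambda_i>0$ is genuinely forced rather than merely allowed. The argument that $\lambda_i>0$ should be: if $\lambda_i=0$ then the optimal wage for agent $i$ is identically zero at every cell (the objective is minimized at the corner), but then (\ref{eqn_ici}) reads $0\ge c_i>0$, a contradiction. Everything else is a componentwise replay of the baseline proofs, using that Assumption \ref{assm_product} is not even needed for this lemma — the separability of objective and constraints across agents holds for any $P_{\bf{a}}$ — so I would state and prove the lemma in the general multi-agent setting and only invoke Assumption \ref{assm_product} later where it is actually used.
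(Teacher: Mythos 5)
Your proposal is correct and follows essentially the same route as the paper: the Lagrangian/first-order-condition argument for part (i) (with the observation that $\lambda_i=0$ would force zero wages and violate (\ref{eqn_ici})), and the cell-merging argument via Assumption \ref{assm_mc}(b) for part (ii), which the paper simply imports from Lemma \ref{lem_mlrp}. Your added remarks — that the $z$-value of a union is the probability-weighted average so merging preserves both incentive constraints, and that Assumption \ref{assm_product} is not needed here — are accurate and consistent with how the paper states the lemma.
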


\begin{proof}
The wage-minimization problem for given monitoring technology $\langle A_n, \pi_n, {\bf{z}}_n\rangle_{n=1}^N$ is
\begin{align*}
\min_{\langle \tilde{w}_{i,n}\rangle} \sum_{i,n} \pi_n  \tilde{w}_{i,n}-\sum_{i} \lambda_i \left(\sum_{n} \pi_n u_i
\left(\tilde{w}_{i,n}\right)z_{i,n} - c_i\right) -\sum_{i,n}\eta_{i,n}\tilde{w}_{i,n},
\end{align*}
where $\lambda_i$ and $\eta_{i,n}$ denote the Lagrange multipliers associated with the (\ref{eqn_ici}) constraint and (\ref{eqn_lli}) constraint at $\tilde{w}_{i,n}$, respectively. Differentiating the objective function with respect to $\tilde{w}_{i,n}$ yields the first-order condition in Part (i). The proof of Part (ii) is the same as that of Lemma \ref{lem_mlrp} and is therefore omitted.
\end{proof}

The next lemma plays an analogous role as that of Lemma \ref{lem_smallSet}:
\begin{lem}\label{lem_smallSet_multiagent}
Assume Assumption \ref{assm_compact_multiagent}. Fix any $\delta>0$ and  any $A \in \Sigma$ such that $P_{{\bf{1}}}\left(A\right)>0$. Then for all $\epsilon \in (0, P_{{\bf{1}}}\left(A\right)]$, there exists $A_{\epsilon} \subset A$ such that $P_{{\bf{1}}}\left(A_{\epsilon}\right)=\epsilon$ and $\|{\bf{z}}\left(A_{\epsilon}\right)- {\bf{z}}\left(A\right)\|<\delta$.
\end{lem}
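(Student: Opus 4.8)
The plan is to mimic the one–dimensional construction in Lemma \ref{lem_smallSet}, but since $\mathbf{Z}$ is now $\mathbb{R}^2$-valued we cannot simply order points by a scalar and peel off extreme tails; instead the strategy is to reduce the problem to a collection of one–dimensional problems by slicing $A$ along level sets of one coordinate of $\mathbf{Z}$, and to exploit that we only need the conditional mean of $\mathbf{Z}(A_\epsilon)$ to be \emph{close} to $\mathbf{z}(A)$ rather than exactly equal. First I would invoke Assumption \ref{assm_compact_multiagent} to get a uniform bound $\|\mathbf{Z}(\omega)\| \le M$ on $\mathbf{Z}(\Omega)$, so that $\mathbf{z}(B)$ ranges over the compact set $\mathbf{Z}(\Omega)$ for every $B$ of positive measure, and more importantly so that the oscillation of $\mathbf{z}$ over small-probability sets is controlled.

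The key steps, in order. Step 1: partition $\mathbf{Z}(\Omega)$ into finitely many Borel cells $\{D_1,\dots,D_L\}$ each of diameter less than $\delta/2$ (a grid of mesh $\delta/2$ works since $\mathbf{Z}(\Omega)$ is bounded), and let $A^\ell = A \cap \mathbf{Z}^{-1}(D_\ell)$. Since $\sum_\ell P_{\mathbf{1}}(A^\ell) = P_{\mathbf{1}}(A)$ and $P_{\mathbf{1}}$ is nonatomic (Assumption \ref{assm_regular_multiagent}), I can choose, for each $\ell$, a subset $A^\ell_\epsilon \subset A^\ell$ with $P_{\mathbf{1}}(A^\ell_\epsilon) = \epsilon \cdot P_{\mathbf{1}}(A^\ell)/P_{\mathbf{1}}(A)$; this uses only nonatomicity of the measure, via Lyapunov's theorem or a direct density argument as in the proof of Lemma \ref{lem_smallSet}. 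Step 2: set $A_\epsilon = \bigcup_\ell A^\ell_\epsilon$, so that $P_{\mathbf{1}}(A_\epsilon) = \epsilon$ by construction. Step 3: estimate $\|\mathbf{z}(A_\epsilon) - \mathbf{z}(A)\|$. Writing $\alpha_\ell = P_{\mathbf{1}}(A^\ell)/P_{\mathbf{1}}(A)$ for the weights, one has $\mathbf{z}(A) = \sum_\ell \alpha_\ell \mathbf{z}(A^\ell)$ and $\mathbf{z}(A_\epsilon) = \sum_\ell \alpha_\ell \mathbf{z}(A^\ell_\epsilon)$ because the subsets were chosen with the \emph{same} relative weights $\alpha_\ell$; then $\|\mathbf{z}(A_\epsilon) - \mathbf{z}(A)\| \le \sum_\ell \alpha_\ell \|\mathbf{z}(A^\ell_\epsilon) - \mathbf{z}(A^\ell)\|$, and each term is at most $\operatorname{diam}(D_\ell) < \delta/2$ since both $\mathbf{z}(A^\ell_\epsilon)$ and $\mathbf{z}(A^\ell)$ are convex combinations of values of $\mathbf{Z}$ lying in $D_\ell$. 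Hence $\|\mathbf{z}(A_\epsilon) - \mathbf{z}(A)\| < \delta$, as required. (If some $A^\ell$ has zero measure, drop it from the sum; this changes nothing.)

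The main obstacle is Step 1: producing, inside each $A^\ell$, a measurable subset of \emph{prescribed} probability $\epsilon\alpha_\ell$. In the scalar Lemma \ref{lem_smallSet} this was handled by peeling off a lower or upper tail in $Z$ and using continuity of the resulting probability in the cutoff; here the cleanest route is to note that $P_{\mathbf{1}}$ restricted to $A^\ell$ is a finite nonatomic measure on a standard Borel space, so by Sierpi\'nski's theorem (or the intermediate value theorem applied to $t \mapsto P_{\mathbf{1}}(A^\ell \cap \{\|\omega\| \le t\})$, using that $P_{\mathbf{1}}$ admits a density and $\Omega$ is Euclidean) there is a measurable subset of every intermediate measure. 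One must also check the harmless bookkeeping that the $A^\ell_\epsilon$ are pairwise disjoint (automatic, since the $A^\ell$ are) and that $\epsilon \le P_{\mathbf{1}}(A)$ guarantees $\epsilon\alpha_\ell \le P_{\mathbf{1}}(A^\ell)$ for every $\ell$. Everything else is routine, and no appeal to the structure of $\mathbf{Z}$ beyond boundedness (Assumption \ref{assm_compact_multiagent}) and nonatomicity of $P_{\mathbf{1}}$ (Assumption \ref{assm_regular_multiagent}) is needed; in particular, unlike Lemma \ref{lem_smallSet}, we do not and cannot force exact equality of the $\mathbf{z}$-values, which is precisely why the statement only asks for the $\delta$-approximation.
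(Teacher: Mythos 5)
Your proposal is correct and follows essentially the same route as the paper's proof: partition the range of $\mathbf{Z}$ into small-diameter cells, pull back to slice $A$, extract from each slice a subset carrying the same proportional weight $\epsilon\,P_{\mathbf{1}}(A^{\ell})/P_{\mathbf{1}}(A)$ (possible since $P_{\mathbf{1}}$ admits a density), and conclude via the weighted-average bound that the $\mathbf{z}$-values differ by less than $\delta$. The only cosmetic differences are your mesh of $\delta/2$ versus the paper's $\delta$ and your citation of Assumption \ref{assm_regular_multiagent} for nonatomicity, which in fact already follows from the maintained assumption that $P_{\mathbf{1}}$ has a density.
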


\begin{proof}
With a slight abuse of notation, let $\mathcal{P}$ be any finite partition of $\Omega$ such that every $B \in \mathcal{P}$ is measurable and $\|{\bf{Z}}\left(\omega\right)-{\bf{Z}}\left(\omega'\right)\|<\delta$ for all $\omega, \omega' \in B$. $\mathcal{P}$ exists because $P_{\bf{1}}$ admits a density and ${\bf{Z}}\left(\Omega\right)$ is a compact set in $\mathbb{R}^2$. Define $\mathcal{P}^+=\left\{B\in \mathcal{P}: P_{\bf{1}}\left(A\cap B\right)>0\right\}$ and $\mathcal{P}^0=\left\{B\in \mathcal{P}: P_{\bf{1}}\left(A\cap B\right)=0\right\}$, which are both finite. Note that $\sum_{B \in \mathcal{P}^0}P_{\bf{1}}\left(A \cap B\right)=0$, $\sum_{B \in \mathcal{P}^+}P_{\bf{1}}\left(A \cap B\right)=P_{\bf{1}}\left(A\right)$ and ${\bf{z}}\left(A\right)=\sum_{B \in \mathcal{P}^+} P_{\bf{1}}\left(A \cap B\right) {\bf{z}} \left(A \cap B\right)$. 

Since $P_{\bf{1}}$ admits a density, it follows that for all $B \in \mathcal{P}^+$, there exists $C_B \subset A \cap B$ such that $P_1\left(C_B\right)=P_{\bf{1}}\left(A\cap B\right)\epsilon \slash P_{\bf{1}}\left(A\right)$. Also note that $\|{\bf{z}}\left(C_B\right)-{\bf{z}}\left(A\cap B\right)\|<\delta$ by construction. Let $A_{\epsilon}=\cup_{B \in \mathcal{P}^+} C_{B}$. Then 
$P_{\bf{1}}\left(A_{\epsilon}\right)=\sum_{B \in \mathcal{P}^+}P_{\bf{1}}\left(A\cap B\right)\epsilon/P_{\bf{1}}\left(A\right)=\epsilon$ and 
\begin{multline*}
\|{\bf{z}}\left(A_{\epsilon}\right)-{\bf{z}}\left(A\right)\|
=\|\sum_{B \in \mathcal{P}^+}\frac{P_{\bf{1}}\left(A \cap B\right)}{P_{\bf{1}}\left(A\right)} \left( {\bf{z}}\left(C_{B}\right)-{\bf{z}}\left(A \cap B\right)\right)\|\\
\leq \sum_{B \in \mathcal{P}^+}\frac{P_{\bf{1}}\left(A\cap B\right)}{P_{\bf{1}}\left(A\right)} \|{\bf{z}}\left(C_{B}\right)-{\bf{z}}\left(A \cap B\right)\|
<\delta.
\end{multline*}
\end{proof}

\subsubsection{Proof of Theorem \ref{thm_multiagent}}
\begin{proof}
Take any optimal incentive contract that induces high effort from both agents  and let $\langle A_n, \pi_n, {\bf{z}}_n, {\bf{w}}_n \rangle_{n=1}^N$ be the corresponding tuple. Suppose, to the contrary, that some $A_j$ is not ${\bf{Z}}$-convex. By definition, there exist $A', A'' \subset A_j$ and $\tilde{A} \in A_k$, $k \neq j$ such that (i) $P_{{\bf{1}}}\left(A'\right)$, $P_{{\bf{1}}}\left(A''\right)$,  $P_{{\bf{1}}}(\tilde{A})>0$, and (ii) $\tilde{{\bf{z}}}=(1-s) {\bf{z}}'+s {\bf{z}}''$ where ${\bf{z}}'\coloneqq {\bf{z}}\left(A'\right) \neq {\bf{z}}''\coloneqq {\bf{z}}\left(A''\right)$, $\tilde{\bf{z}}\coloneqq {\bf{z}}(\tilde{A})$ and $s \in (0,1)$. By Lemma \ref{lem_smallSet_multiagent}, for all $\delta>0$ and $ \epsilon\in (0,\min\{ P_{{\bf{1}}}(A'), P_{{\bf{1}}}(A''), P_{{\bf{1}}}(\tilde{A})\})$, there exist $A'_\epsilon \subset A'$, $A''_\epsilon \subset A''$ and $\tilde{A}_\epsilon \subset \tilde{A}$ such that (i) $P_{{\bf{1}}}\left(A''_\epsilon\right)=P_{{\bf{1}}}\left(A''_\epsilon\right)=P_{{\bf{1}}}(\tilde{A}_\epsilon)  = \epsilon$, and (ii) $\|{\bf{z}}\left(A'_{\epsilon}\right)-{\bf{z}}'\|,$ $\|{\bf{z}}\left(A''_{\epsilon}\right)-{\bf{z}}''\| $, $\|{\bf{z}}(\tilde{A}_{\epsilon})-\tilde{\bf{z}}\|<\delta$.

Consider two perturbations to the monitoring technology: (a) move $A'_\epsilon$ to $A_k$ and $\tilde{A}_\epsilon$ to $A_j$; (b) move $\tilde{A}_\epsilon$ to $A_j$ and $A''_\epsilon$ to $A_k$. By Assumption \ref{assm_mc}, neither perturbation affects the probability distribution of the output signal under ${\bf{a}}={\bf{1}}$ and hence the monitoring cost. Below we demonstrate that one of them strictly reduces the incentive cost compared to the original optimal contract. 

\paragraph{Perturbation (a)} Let $\langle A_n\left(\epsilon\right), \pi_n, {\bf{z}}_n\left(\epsilon\right) \rangle_{n=1}^N$ denote the tuple associated with the monitoring technology after perturbation (a), where $A_j\left(\epsilon\right)=(A_j\cup \tilde{A}_{\epsilon}) \setminus A'_{\epsilon}$, $A_k\left(\epsilon\right)=\left(A_k \cup A'_{\epsilon}\right) \setminus \tilde{A}_{\epsilon}$ and $A_n\left(\epsilon\right)=A_n$ for $n \neq j, k$. Straightforward algebra shows that  
\begin{align}\label{eqn_multiagent_zprime}
\begin{cases}
 {\bf{z}}_j(\epsilon)={\bf{z}}_j+ \displaystyle \frac{{\bf{z}}(\tilde{A}_\epsilon)-{\bf{z}}\left(A'_\epsilon\right)}{\pi_j}\epsilon, \\
{\bf{z}}_k(\epsilon)={\bf{z}}_k- \displaystyle \frac{{\bf{z}}(\tilde{A}_\epsilon)-{\bf{z}}\left(A'_\epsilon\right)}{\pi_k}\epsilon, \\
{\bf{z}}_n(\epsilon)={\bf{z}}_n \text{ }\forall n \neq j, k,
\end{cases}
\end{align}
and that 
\begin{align}\label{eqn_zbound}
\|{\bf{z}}(\tilde{A}_{\epsilon})-{\bf{z}}\left(A'_\epsilon\right)-\left(\tilde{{\bf{z}}}-{\bf{z}}'\right)\| \leq & \|{\bf{z}}(\tilde{A}_{\epsilon})-\tilde{\bf{z}}\|+\|{\bf{z}}\left(A'_\epsilon \right)-{\bf{z}}'\| \nonumber \\
&<\min\left\{2\delta, 4\max_{\omega \in \Omega}\|{\bf{Z}}\left(\omega\right)\|\right\}. 
\end{align}
Define $\mathcal{B}_i=\left\{n: w_{i,n}=0\right\}$ for $i=1,2$. Consider wage profile $\langle {\bf{w}}_{n}\left(\epsilon\right)\rangle_{n=1}^N$ such that for $i=1,2$: (1) $w_{i,n}\left(\epsilon\right)=w_{i,n}=0$ for $n \in \mathcal{B}_i$; (2) agent $i$'s incentive compatibility constraint remains binding after perturbation (a), i.e., 
\begin{equation}\label{eqn_multiagent_perturbic}
 \sum_{n=1}^N \pi_n u_i\left(w_{i,n}\left(\epsilon\right)\right)z_{i,n}\left(\epsilon\right)=\sum_{n=1}^N \pi_n u_i\left(w_{i,n}\right)z_{i,n}=c_i.
\end{equation}
A close inspection of Equations (\ref{eqn_multiagent_zprime})-(\ref{eqn_multiagent_perturbic}) reveals the existence of $M>0$ independent of $\epsilon$ and $\delta$ such that when $\epsilon$ is sufficiently small, there exist wage profiles as above that satisfy $\|{\bf{w}}_n\left(\epsilon\right)-{\bf{w}}_n\|<M\epsilon$ for all $n$ and hence (\ref{eqn_lli}) constraints. 

With a slight abuse of notation, write $\dot{\bf{w}}_{n}\left(\epsilon\right)= \left({\bf{w}}_{n}\left(\epsilon\right)-{\bf{w}}_{n}\right)/\epsilon$ and $\dot{\bf{z}}_{n}\left(\epsilon\right)= \left({\bf{z}}_{n}\left(\epsilon\right)-{\bf{z}}_{n}\right)/\epsilon$, and note that $\dot{w}_{i,n}\left(\epsilon\right)=0$ for $i=1,2$ and $n \in \mathcal{B}_i$. When $\epsilon$ is small, expanding Equation (\ref{eqn_multiagent_perturbic}) using the twice-differentiability of $u_i\left(\cdot\right)$ and $|w_{i,n}\left(\epsilon\right)-w_{i,n}| \sim \mathcal{O}\left(\epsilon\right)$ and multiplying the result by the Lagrange multiplier $\lambda_i>0$ associated with the (\ref{eqn_ici}) constraint prior to the perturbation yields
\[
\sum_{n=1}^N \pi_n \cdot u_i'\left(w_{i,n}\right)  \cdot \lambda_i z_{i,n} \cdot \dot{w}_{i,n}\left(\epsilon\right) 
= -\lambda_i \sum_{n=1}^N u_i\left(w_{i,n}\right) \cdot \pi_n \dot{z}_{i,n}\left(\epsilon\right) + \mathcal{O}\left(\epsilon\right).
\]
Simplifying using $\dot{w}_{i,n}\left(\epsilon\right)=0$ if $n \in \mathcal{B}_i$, $u'\left(w_{i,n}\right)=1/\left(\lambda_i z_{i,n}\right)$ if $n \notin \mathcal{B}_i$ (Lemma \ref{lem_multiagent}) and Equation (\ref{eqn_multiagent_zprime}) yields
\[
 \sum_{i,n} \pi_n \dot{w}_{i,n}= \left({\bf{u}}_k-{\bf{u}}_j\right)^{\top} \Lambda \text{ } ({\bf{z}}(\tilde{A}_\epsilon)-{\bf{z}}\left(A'_\epsilon\right))+\mathcal{O}\left(\epsilon\right),
\]
where ${\bf{u}}_n= \left(u_1\left(w_{1,n}\right), 
 u_2\left(w_{2,n}\right)\right)^{\top}$
 for $n=k,j$ and $\Lambda = \left(\begin{smallmatrix}
    \lambda_1  & 0\\
    0      &  \lambda_2 
\end{smallmatrix}\right)$. Further simplifying using Equation (\ref{eqn_zbound}) and $\tilde{{\bf{z}}}=(1-s) {\bf{z}}'+s {\bf{z}}''$ yields the following when $\delta$ is small: 
\begin{align}\label{eqn3}
\sum_{i,n} \pi_n \dot{w}_{i,n}=&\left({\bf{u}}_k-{\bf{u}}_j\right)^{\top} \Lambda         \text{ } (\tilde{\bf{z}}-{\bf{z}}')+\mathcal{O}\left(\epsilon\right)\nonumber \\
&+\left({\bf{u}}_k-{\bf{u}}_j\right)^{\top} \Lambda \text{ } ({\bf{z}}(\tilde{A}_{\epsilon})-{\bf{z}}\left(A'_\epsilon\right)-\left(\tilde{{\bf{z}}}-{\bf{z}}'\right)) \nonumber \\
=&s \left({\bf{u}}_k-{\bf{u}}_j\right)^{\top}  \Lambda \text{ } \left({\bf{z}}''-{\bf{z}}'\right)+\mathcal{O}\left(\epsilon\right)+\mathcal{O}\left(\delta\right).
\end{align}

\paragraph{Perturbation (b)} Repeating the above argument for perturbation (b) yields
 \begin{equation}\label{eqn4}
 \sum_{i,n} \pi_n \dot{w}_{i,n}=-(1-s)\left({\bf{u}}_k-{\bf{u}}_j\right)^{\top}  \Lambda \left({\bf{z}}''-{\bf{z}}'\right)+\mathcal{O}\left(\epsilon\right)+\mathcal{O}\left(\delta\right). 
 \end{equation}
 
Consider two cases: 
\begin{enumerate}
\item[Case 1] $\left({\bf{u}}_k-{\bf{u}}_j\right)^{\top} \Lambda \left({\bf{z}}''-{\bf{z}}'\right) \neq 0$. In this case, the right-hand sides of Equations (\ref{eqn3}) and (\ref{eqn4}) have the opposite signs when $\epsilon$ and $\delta$ are sufficiently small, and the remainder of the proof is the same as that of Theorem \ref{thm_main}. 

\item[Case 2] $\left({\bf{u}}_k-{\bf{u}}_j\right)^{\top} \Lambda \left({\bf{z}}''-{\bf{z}}'\right) = 0$. In this case, note that $\left({\bf{u}}_k-{\bf{u}}_j\right)^{\top} \Lambda \neq {\bf{0}}^{\top}$ by Lemma \ref{lem_multiagent}, where $\bf{0}$ denotes the $2$-vector of zeros. Then from Assumption \ref{assm_regular_multiagent} ($\bf{Z}$ is distributed atomlessly on a connected set), there exist $B' \subset A'$, $B'' \subset A''$ and $\tilde{B} \subset \tilde{A}$ such that $P_{{\bf{1}}}\left(B'\right)$, $P_{{\bf{1}}}\left(B''\right)$, $P_{{\bf{1}}}(\tilde{B})>0$, ${\bf{z}}(\tilde{B})=\left(1-s'\right){\bf{z}}\left(B'\right)+s'  {\bf{z}}\left(B''\right)$ for some $s' \in \left(0,1\right)$, and $\left({\bf{u}}_k-{\bf{u}}_j\right)^{\top} \Lambda  \left({\bf{z}}\left(B''\right)-{\bf{z}}\left(B'\right)\right) \neq 0$. Replacing $A'$, $A''$ and $\tilde{A}$ with $B'$, $B''$ and $\tilde{B}$, respectively, in the above argument gives the desired result. 
\end{enumerate}
 \end{proof}
 
 \subsubsection{Proof of Theorem \ref{thm_exist_multiagent}}
\begin{proof}
By Theorem \ref{thm_multiagent}, any optimal monitoring technology with at most $N \in \left\{2,\cdots, K\right\}$ cells is fully characterized by (1) a finite  number $q_N$ of vertices ${\bf{z}}_1,\cdots, {\bf{z}}_{q_N}$ in ${\bf{Z}}\left(\Omega\right)$, and (2) a $q_N \times q_N$ adjacency matrix ${\bf{M}}$ whose $lm$'th entry equals $1$ if ${\bf{z}}_{l}$ and ${\bf{z}}_m$ are connected by a line segment and $0$ otherwise. By definition, ${\bf{M}}$ is symmetric and hence is determined by its upper triangle entries, which can be either 0 or 1. Thus ${\bf{M}}$ belongs to $\mathcal{M}_N
\coloneqq \left\{0,1\right\}^{q_N\times(q_N-1)/2}$, which is a finite set.

Write $\vec{\bf{z}}$ for $\left({\bf{z}}_1,\cdots, {\bf{z}}_{q_N}\right)^{\top}$. For any $N \in \left\{2,\cdots, K\right\}$ and adjacency matrix ${\bf{M}}\in \mathcal{M}_N$, define 
\begin{equation*}
\mathcal{Z}_N\left({\bf{M}}\right)=\left\{\vec{\bf{z}}: \left(\vec{\bf{z}}, {\bf{M}}\right)\text{ partitions } {\bf{Z}}\left(\Omega\right) \text{ into at most } N \text{ convex polygons} \right\}, 
\end{equation*}
equip $\mathcal{Z}_N\left({\bf{M}}\right)$ with the sup norm $\| \cdot\|$, and note that $\mathcal{Z}_N\left({\bf{M}}\right)$ is compact by Assumption \ref{assm_compact_multiagent}. Let $W\left(\vec{\bf{z}},{\bf{M}}\right)$ denote the minimal incentive cost for inducing high effort from both agents under the monitoring technology formed by $\left(\vec{\bf{z}}, {\bf{M}}\right)$. $W\left(\vec{\bf{z}},{\bf{M}}\right)$ exists and is finite if and only if for all $i=1,2$, $z_i\left(A\right) \not \equiv 0$ across the performance category $A$'s formed under $\left(\vec{\bf{z}}, {\bf{M}}\right)$.

We proceed in two steps.

\paragraph{Step 1} Show that $W\left(\vec{\bf{z}}, {\bf{M}}\right)$ is continuous in the first argument for any given $N \in \left\{2,\cdots, K\right\}$ and ${\bf{M}} \in \mathcal{M}_N$. 

\bigskip

Fix any $\vec{\bf{z}} \in \mathcal{Z}_N\left({\bf{M}}\right)$ such that $W\left(\vec{\bf{z}}, {\bf{M}} \right)$ is finite. For sufficiently small $\delta>0$, let $\vec{\bf{z}}^{\delta}$ be any element of $\mathcal{Z}_N\left({\bf{M}}\right)$ such that $\|\vec{\bf{z}}^{\delta}-\vec{\bf{z}}\|<\delta$. Label the performance categories formed under $\left(\vec{\bf{z}}, {\bf{M}}\right)$ and $\left(\vec{\bf{z}}^{\delta}, {\bf{M}}\right)$ as $A_n$'s and $A_n^{\delta}$'s, respectively, such that for $n=1,2,\cdots$, ${\bf{z}}_l$ is a vertex of $cl\left({\bf{Z}}\left(A_n\right)\right)$ if and only if ${\bf{z}}_l^{\delta}$ is a vertex of $cl\left({\bf{Z}}\left(A_n^{\delta}\right)\right)$. Let $\pi_n$ and $z_{i,n}$ (resp. $\pi_n^{\delta}$ and $z_{i,n}^{\delta}$) denote the probability (under ${\bf{a}}={\bf{1}}$) and $z_i$-value of $A_n$ (resp. $A_n^{\delta}$), respectively. Let $w_{i,n}$ denote the optimal wage of agent $i$ at $A_n$. 

Fix any $\epsilon>0$. Consider the wage profile that pays $w_{i,n}+\epsilon/2$ to agent $i$ if $z_{i,n}^{\delta}>0$ and $w_{i,n}$ otherwise and therefore  satisfies the (\ref{eqn_lli}) constraint. Under Assumptions \ref{assm_regular_multiagent} and \ref{assm_compact_multiagent}, the (\ref{eqn_ici}) constraint is satisfied when $\delta$ is sufficiently small: 
\[
\lim_{\delta \rightarrow 0}\sum_{n} \pi_{i,n}^{\delta}  u\left(w_{i,n}+ 1_{z_{i,n}^{\delta}>0} \cdot \epsilon/2 \right)z_{i,n}^{\delta}=\sum_{n} \pi_n u\left(w_{i,n}+1_{z_{i,n}>0} \cdot \epsilon/2\right) z_{i,n}>c_i,
\]
where the inequality holds because $\sum_{n} \pi_n z_{i,n}=0$ and $z_{i,n} \not \equiv 0$ so $z_{i,n}>0$ for some $n$. In addition, since 
\[
\lim_{\delta \rightarrow 0}\sum_{i, n} \pi_{n}^{\delta}\left(w_{i,n}+ 1_{z_{i,n}^{\delta}>0} \cdot \epsilon/2 \right)=\sum_{i, n} \pi_n\left(w_{i,n}+1_{z_{i,n}>0} \cdot \epsilon/2\right), 
\]
it follows that when $\delta$ is sufficiently small, 
\[
W\left(\vec{\bf{z}}^{\delta}, {\bf{M}}\right)-W\left(\vec{\bf{z}}, {\bf{M}}\right)\leq \sum_{i,n} \pi_n^{\delta}\left(w_{i,n}+1_{z_{i,n}^\delta>0}\cdot \epsilon/2\right)-\sum_{i,n} \pi_nw_{i,n} <\epsilon,
\]
where the first inequality holds because the constructed wage profile is not necessarily optimal under $\left(\vec{\bf{z}}^{\delta}, {\bf{M}}\right)$. Finally, interchanging the roles between $\vec{\bf{z}}^{\delta}$ and $\vec{\bf{z}}$ in the above derivation yields $W\left(\vec{\bf{z}}, {\bf{M}} \right)-W\left(\vec{\bf{z}}^{\delta}, {\bf{M}}\right)<\epsilon$, implying that $\left\vert W\left(\vec{\bf{z}}^{\delta}, {\bf{M}}\right)-W\left(\vec{\bf{z}}, {\bf{M}}\right)\right\vert<\epsilon$ when $\delta$ is sufficiently small.

\paragraph{Step 2} Under Assumption \ref{assm_cf}(a), the following quantity: 
\[W_N \coloneqq \min_{{\bf{M}} \in \mathcal{M}_N,  \vec{\bf{z}} \in \mathcal{Z}_N\left({\bf{M}}\right)} W\left(\vec{\bf{z}},{\bf{M}}\right)
\]
exists and is finite for all $N \in \left\{2,\cdots, K\right\}$ by Step 1, the compactness of $\mathcal{Z}_N \left({\bf{M}}\right)$ and the finiteness of $\mathcal{M}_N$. Under Assumption \ref{assm_cf}(b), the principal's problem can be written as follows: 
\[\min_{{\bf{M}} \in \mathcal{M}_K, \vec{\bf{z}} \in \mathcal{Z}_K({\bf{M}})}  W\left(\vec{\bf{z}}, {\bf{M}}\right)+  \mu\cdot h\left(\bm{\pi}\left(\vec{\bf{z}},{\bf{M}}\right)\right),\]
where $\bm{\pi}\left(\vec{\bf{z}},{\bf{M}}\right)$ is the probability vector formed under $\left(\vec{\bf{z}},{\bf{M}}\right)$ and is clearly continuous in $\vec{\bf{z}}$. The remainder of the proof is the same as that of Theorem \ref{thm_exist} and is therefore omitted. 
\end{proof}

\subsection{Proofs of Section \ref{sec_multidev}}\label{sec_proof_multidev}
In this appendix, write ${\bf{z}}\left(A\right)=\left(z_a\left(A\right)\right)^{\top}_{a \in \mathcal{D}}$ for any set $A \in \Sigma$ of positive measure, as well as  any $N$-partitional contract $\langle \mathcal{P}, w\left(\cdot \right)\rangle$ as the corresponding tuple $\langle A_n, \pi_n, {\bf{z}}_n, w_n \rangle_{n=1}^N$, where $A_n$ is a generic cell of $\mathcal{P}$,  $\pi_n = P_{a^*}\left(A_n\right)$, ${\bf{z}}_n={\bf{z}}\left(A_n\right)$ and $w_n=w\left(A_n\right)$. Assume w.l.o.g. that $w_1  \leq \cdots \leq w_N$. 

\subsubsection{Useful Lemma}
The next lemma generalizes Lemmas \ref{lem_foc} and \ref{lem_mlrp} to encompass multiple agents:

\begin{lem}\label{lem_multidev}
Assume Assumption \ref{assm_mc}. Then for any optimal incentive contract that induces $a^*$, (i) there exists $\bm{\lambda} \in \mathbb{R}_+^{|\mathcal{D}|}$ with $\|\bm{\lambda}\|>0$ such that $u'\left(w_{n}\right)=1/\left(\bm{\lambda}^{\top} {\bf{z}}_n\right)
$ if and only if $w_n>0$; (ii) $\bm{\lambda}^{\top} {\bf{z}}_{1}<0<\bm{\lambda}^{\top} {\bf{z}}_{2}<\cdots$ and  $0=w_1<w_2<\cdots$.  
\end{lem}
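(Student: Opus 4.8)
The plan is to set up the wage-minimization problem for a fixed monitoring technology exactly as in the proofs of Lemmas \ref{lem_foc} and \ref{lem_mlrp}, now with one incentive constraint per deviation $a \in \mathcal{D}$. Fix any optimal incentive contract inducing $a^*$, with tuple $\langle A_n, \pi_n, {\bf{z}}_n, w_n\rangle_{n=1}^N$. For the wage-minimization subproblem at this fixed $\mathcal{P}$, write the Lagrangian
\[
\sum_{n=1}^N \pi_n \tilde{w}_n - \sum_{a \in \mathcal{D}} \lambda_a \left( \sum_{n=1}^N \pi_n u\left(\tilde{w}_n\right) z_a\left(A_n\right) - \left(c\left(a^*\right)-c\left(a\right)\right)\right) - \sum_{n=1}^N \eta_n \tilde{w}_n,
\]
where $\lambda_a \geq 0$ are the multipliers on the (\ref{eqn_ic_multidev}) constraints and $\eta_n \geq 0$ those on (\ref{eqn_ll}). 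Differentiating in $\tilde{w}_n$ gives the stationarity condition $\left(\sum_{a \in \mathcal{D}} \lambda_a z_a\left(A_n\right)\right) u'\left(w_n\right) = 1 - \eta_n/\pi_n = 1$ whenever $w_n > 0$, i.e. $u'\left(w_n\right) = 1/\left(\bm{\lambda}^{\top}{\bf{z}}_n\right)$ on the set where $w_n>0$, which is Part (i) modulo the claim $\|\bm{\lambda}\|>0$. That claim follows because if $\bm{\lambda} = {\bf 0}$ then stationarity forces $1 - \eta_n/\pi_n = 0$, hence $\eta_n = \pi_n > 0$ for every $n$, so by complementary slackness $w_n = 0$ for all $n$; but then no (\ref{eqn_ic_multidev}) constraint is satisfied (the left side is $0 < c\left(a^*\right)-c\left(a\right)$), contradicting feasibility — so some $\lambda_a > 0$. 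One must also note that $u'\left(w_n\right) > 0$ forces $\bm{\lambda}^{\top}{\bf z}_n > 0$ at every $n$ with $w_n > 0$.

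For Part (ii), I would first rule out ties in wages using Assumption \ref{assm_mc}(b) exactly as in Lemma \ref{lem_mlrp}: if $w_j = w_k$ for $j \neq k$, merging $A_j$ and $A_k$ leaves all $z_a$-values of the resulting category equal to a convex combination — hence the (\ref{eqn_ic_multidev}) constraints and the incentive cost are unchanged — while Assumption \ref{assm_mc}(b) strictly lowers the monitoring cost, contradicting optimality. So $0 \leq w_1 < w_2 < \cdots < w_N$. Monotonicity of $u'$ together with $u'\left(w_n\right) = 1/\left(\bm{\lambda}^{\top}{\bf z}_n\right)$ on the $w_n>0$ block then gives $\bm{\lambda}^{\top}{\bf z}_n$ strictly increasing and strictly positive there. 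To locate the sign pattern, use that for each $a \in \mathcal{D}$, $\mathbf{E}\left[Z_a \mid a^*\right] = \sum_n \pi_n z_a\left(A_n\right) = 0$, hence $\sum_n \pi_n\left(\bm{\lambda}^{\top}{\bf z}_n\right) = 0$; since the $\pi_n$ are positive and the $\bm{\lambda}^{\top}{\bf z}_n$ cannot all be $\geq 0$ (they would have to be identically zero, contradicting strict monotonicity once $N \geq 2$), the smallest one $\bm{\lambda}^{\top}{\bf z}_1$ is strictly negative. This in turn forces $w_1 = 0$: if $w_1 > 0$, lowering it to $0$ strictly decreases expected wage and, because $z_a\left(A_1\right)$ enters each incentive constraint multiplied by $u\left(w_1\right)$ and $\bm\lambda^{\top}{\bf z}_1<0$ with $\bm\lambda\geq 0$, there is a deviation $a$ with $z_a\left(A_1\right)<0$ so that constraint is relaxed — but I should be careful here, as individual $z_a\left(A_1\right)$ need not all be negative; the cleaner argument is that reducing $w_1$ to $0$ and slightly increasing some $w_n$ with $n\geq 2$ (where all relevant $\bm\lambda^{\top}{\bf z}_n>0$) restores feasibility at lower cost. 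Finally, $w_n > 0$ for $n \geq 2$ plus Part (i) gives $\bm{\lambda}^{\top}{\bf z}_n > 0$ for $n \geq 2$, completing the chain $\bm{\lambda}^{\top}{\bf z}_1 < 0 < \bm{\lambda}^{\top}{\bf z}_2 < \cdots$.

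The main obstacle I anticipate is the $w_1 = 0$ step: unlike the single-deviation baseline, reducing $w_1$ no longer obviously relaxes every incentive constraint because the sign of each individual $z_a\left(A_1\right)$ is not pinned down — only the $\bm{\lambda}$-weighted combination is negative. The fix is to argue at the level of the Lagrangian / total cost rather than constraint-by-constraint: fixing $\bm{\lambda}$ from the optimum, the objective $\sum_n \pi_n\left(w_n - \sum_a \lambda_a u\left(w_n\right) z_a\left(A_n\right)\right) + \text{const}$ is separable across $n$, each term is minimized over $w_n \geq 0$ at either the stationary point (if $\bm\lambda^{\top}{\bf z}_n>0$) or at $w_n=0$ (if $\bm\lambda^{\top}{\bf z}_n\leq 0$), and since $\bm\lambda^{\top}{\bf z}_1<0$ the minimizing $w_1$ is $0$; strong duality / complementary slackness for this convex subproblem then transfers this back to the primal optimum. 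A secondary technical point is handling the case $N=1$, which is vacuous here since feasibility already requires $N \geq 2$ (a single category gives all $z_a$-values equal to zero, violating every incentive constraint).
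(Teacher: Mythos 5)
Your proposal is correct and follows the same route as the paper's: the Lagrangian of the wage-minimization subproblem with one multiplier per deviation yields the first-order condition in Part (i), the merging argument of Lemma \ref{lem_mlrp} (via Assumption \ref{assm_mc}(b)) rules out wage ties, and the identity $\sum_n \pi_n\, \bm{\lambda}^{\top}{\bf{z}}_n=0$ pins down the sign pattern. The one place where you add real content is the $w_1=0$ step. The paper disposes of Part (ii) with ``same as Lemma \ref{lem_mlrp}, omitted,'' but, as you rightly observe, the Lemma \ref{lem_mlrp} argument (``setting $w_1=0$ relaxes the incentive constraint'') does not transfer verbatim: $\bm{\lambda}^{\top}{\bf{z}}_1<0$ does not imply $z_a\left(A_1\right)<0$ for each individual $a\in\mathcal{D}$, so lowering $w_1$ may tighten some (\ref{eqn_ic_multidev}) constraints. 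Your Lagrangian-separability repair is valid, though to invoke strong duality you should first pass to utilities $v_n=u\left(w_n\right)$ (in $w$-space the constraint set need not be convex, since $u\left(w_n\right)z_a\left(A_n\right)$ is convex where $z_a\left(A_n\right)<0$). There is also a shorter repair already implicit in your sign-pattern paragraph: wages are strictly increasing, so at most one is zero; if $w_1>0$ then Part (i) applies to every $n$ and gives $\bm{\lambda}^{\top}{\bf{z}}_n=1/u'\left(w_n\right)>0$ for all $n$, contradicting $\sum_n\pi_n\,\bm{\lambda}^{\top}{\bf{z}}_n=0$; hence $w_1=0$, whence $\bm{\lambda}^{\top}{\bf{z}}_n>0$ for $n\geq 2$ and $\bm{\lambda}^{\top}{\bf{z}}_1<0$ follows from the zero-sum identity. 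Your feasibility argument for $\|\bm{\lambda}\|>0$ is a clean alternative to the paper's perturbation argument and is, if anything, tighter.
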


\begin{proof}
The wage-minimization problem for given monitoring technology $\langle A_n, \pi_n, {\bf{z}}_n \rangle_{n=1}^N$ is 
\begin{align*}
\min_{\langle \tilde{w}_n \rangle} \sum_{n} \pi_n \tilde{w}_n -\sum_{n} \pi_n u\left(\tilde{w}_n\right)\cdot \bm{\lambda}^{\top} {\bf{z}}_n-\sum_{n} \eta_n \tilde{w}_n, 
\end{align*}
where $\bm{\lambda}$ denotes the profile of the Lagrange multipliers associated with the (\ref{eqn_ic_multidev}) constraints and $\eta_n$ the Lagrange multiplier associated with the (\ref{eqn_ll}) constraint at $\tilde{w}_n$. Note  that $\|\bm{\lambda}\|>0$, because otherwise all (\ref{eqn_ic_multidev}) constraints are slack and hence subtracting a small $\epsilon>0$ from all positive wages constitutes an improvement. Differentiating the objective function with respect to $\tilde{w}_n$ yields the first-order condition in Part (i). The proof of Part (ii) is the same as that of Lemma \ref{lem_mlrp} and is therefore omitted. 
\end{proof}

\subsubsection{Proof of Theorem \ref{thm_multidev}}
\begin{proof}
Take any optimal incentive contract that induces $a^*$. Let $\langle A_n, \pi_n, {\bf{z}}_n, w_n \rangle_{n=1}^N$ be the corresponding tuple and $\bm{\lambda}$ be the profile of the Lagrange multipliers associated with the  (\ref{eqn_ic_multidev}) constraints. Suppose, to the contrary, that some $A_j$ is not $Z_{\bm{\lambda}}$-convex. Then there exist $A', A'' \subset A_j$ and $\tilde{A} \subset A_k$, $k \neq j$ such that (i) $P_{a^*}\left(A'\right), P_{a^*}\left(A''\right), P_{a^*}(\tilde{A})>0$, and (ii) $\bm{\lambda}^{\top} \tilde{\bf{z}}=\left(1-s\right) \bm{\lambda}^{\top} {\bf{z}}'+s \bm{\lambda}^{\top}{\bf{z}}''$, where ${\bf{z}}' \coloneqq {\bf{z}}\left(A'\right)$, $ {\bf{z}}''\coloneqq {\bf{z}}\left(A''\right)$, $\tilde{\bf{z}}\coloneqq {\bf{z}}(\tilde{A})$, $\bm{\lambda}^{\top} {\bf{z}}' \neq \bm{\lambda}^{\top} {\bf{z}}''$ and $s \in \left(0,1\right)$. By Lemma \ref{lem_smallSet}, for all $\epsilon \in (0, \min \{P_{a^*}\left(A'\right), P_{a^*}\left(A''\right), P_{a^*}(\tilde{A})\})$, there exist $A'_{\epsilon}\subset A'$, $A''_{\epsilon} \subset A''$ and $\tilde{A}_{\epsilon} \subset \tilde{A}$ such that (i) $P_{a^*}\left(A'_\epsilon\right)=P_{a^*}\left(A''_\epsilon\right)=P_{a^*}(\tilde{A}_\epsilon)=\epsilon$, and (ii) $\bm{\lambda}^{\top}{\bf{z}}\left(A'_\epsilon\right)=\bm{\lambda}^{\top}{\bf{z}}'$, $\bm{\lambda}^{\top}{\bf{z}}\left(A''_\epsilon\right)= \bm{\lambda}^{\top}{\bf{z}}''$ and $\bm{\lambda}^{\top} {\bf{z}}(\tilde{A}_\epsilon)=\bm{\lambda}^{\top} \tilde{\bf{z}}$.

Consider two perturbations to the monitoring technology: (a) move $A'_\epsilon$ to $A_k$ and $\tilde{A}_\epsilon$ to $A_j$, and (b) move $\tilde{A}_\epsilon$ to $A_j$ and $A''_\epsilon$ to $A_k$. By Assumption \ref{assm_mc}, neither perturbation affects the probability distribution of the output signal under action $a^*$ and hence the monitoring cost. Below we demonstrate that one of them strictly reduces the incentive cost compared to the original (optimal) contract. 

\paragraph{Perturbation (a)} Let $\langle A_n(\epsilon), \pi_n, {\bf{z}}_n\left(\epsilon\right)\rangle_{n=1}^N$ be the tuple associated with the monitoring technology after perturbation (a), where $A_j\left(\epsilon\right)=(A_j\cup \tilde{A}_{\epsilon}) \setminus A'_{\epsilon}$, $A_k\left(\epsilon\right)=\left(A_k \cup A'_{\epsilon}\right) \setminus \tilde{A}_{\epsilon}$ and $A_n\left(\epsilon\right)=A_n$ for $n \neq j, k$. Straightforward algebra shows that
\begin{align}\label{eqn_multitask_zprime}
\begin{cases}
{\bf{z}}_{j}\left(\epsilon\right)={\bf{z}}_j+\displaystyle \frac{{\bf{z}}(\tilde{A}_{\epsilon})-{\bf{z}}\left(A'_{\epsilon}\right)}{\pi_j} \epsilon,\\
{\bf{z}}_{k}\left(\epsilon\right)={\bf{z}}_{k}-\displaystyle \frac{{\bf{z}}(\tilde{A}_{\epsilon})-{\bf{z}}\left(A'_{\epsilon}\right)}{\pi_k}\epsilon, \\
{\bf{z}}_n\left(\epsilon\right)={\bf{z}}_n\text{ } \forall n \neq j, k,
\end{cases}
\end{align}
and that \begin{equation}\label{eqn_zbound_multidev}
\|{\bf{z}}(\tilde{A}_{\epsilon})-{\bf{z}}\left(A'_{\epsilon}\right)\|\leq \|{\bf{z}}(\tilde{A}_{\epsilon})\|+\|{\bf{z}}\left(A'_{\epsilon}\right)\|\leq 2\max_{\omega \in \Omega} \|{\bf{Z}}\left(\omega\right)\|.
\end{equation}
Consider wage profile $\langle w_n\left(\epsilon\right)\rangle_{n=1}^N$ such that (1) $w_1\left(\epsilon\right)=w_1=0$ and (2) all (\ref{eqn_ic_multidev}) constraints are slack by $\mathcal{O}\left(\epsilon\right)$ after the perturbation, i.e., 
\begin{equation}\label{eqn_ic_perturb3}
0\leq \sum_{n=1}^N \pi_n u\left(w_n\left(\epsilon\right)\right)z_{a,n}\left(\epsilon\right) -\sum_{n=1}^N \pi_n u\left(w_n\right)z_{a,n}\sim \mathcal{O}\left(\epsilon\right) \text{ } \forall a \in \mathcal{D}.
\end{equation}
A close inspection of Equations (\ref{eqn_multitask_zprime})-(\ref{eqn_ic_perturb3}) reveals the existence of $M>0$ such that when $\epsilon$ is sufficiently small, there exist wage profiles as above that satisfy $|w_n\left(\epsilon\right)-w_n|<M \epsilon$ for all $n$ and hence the (\ref{eqn_ll}) constraint.\footnote{To see why, define $\kappa_a=\sum_{n=2}^N \pi_n u\left(w_n\right)z_{a,n}$ and $\mathcal{S}_a=\left\{\langle x_n\rangle_{n=2}^N \in \mathbb{R}^{N-1}:  \sum_{n=2}^N x_n z_{a,n} \geq \kappa_a\right\}$ for each $a \in \mathcal{D}$, and note that $\langle \pi_n u\left(w_n\right)\rangle_{n=2}^N \in \displaystyle \cap_{a \in \mathcal{D}} \mathcal{S}_a$. If we cannot construct a wage profile as above, then there exist $a', a'' \in \mathcal{D}$ such that 
$\cap_{a=a', a''}\left\{\langle x_n\rangle_{n=2}^{N}\in \mathbb{R}^{N-1}:  \sum_{n=2}^N x_n z_{a,n} \geq  \kappa_{a}\right\}=\left\{\langle x_n\rangle_{n=2}^{N}\in \mathbb{R}^{N-1}:  \sum_{n=2}^N x_n z_{a',n} = \kappa_{a'}\right\}$ and hence  $z_{a'',n}=-z_{a',n}$ for $n=2,\cdots, N$ and $\kappa_{a''}=-\kappa_{a'}$. In the meantime, $\kappa_{a} \geq c\left(a^*\right)-c\left(a\right)>0$ for all $a \in \mathcal{D}$, thus reaching a contradiction. }

Write $\dot{w}_n= (w_n(\epsilon)-w_n)/\epsilon$ and $\dot{\bf{z}}_n\left(\epsilon\right)=\left({\bf{z}}_n\left(\epsilon\right)-{\bf{z}}_n\right)/\epsilon$. When $\epsilon$ is small, expanding Equation (\ref{eqn_ic_perturb3}) using the twice-differentiability of $u\left(\cdot\right)$ and $|w_n\left(\epsilon\right)-w_n|\sim \mathcal{O}\left(\epsilon\right)$ and multiplying the result by $\bm{\lambda}$ yields 
\[
\sum_{n=1}^N \pi_n \cdot u'\left(w_n\right) \cdot \bm{\lambda}^{\top} {\bf{z}}_n \cdot \dot {w}_n\left(\epsilon\right) = -\sum_{n=1}^N u\left(w_n\right) \cdot \pi_n \cdot \bm{\lambda}^{\top}\dot{\bf{z}}_n\left(\epsilon\right) +\mathcal{O}\left(\epsilon\right). 
\]
Simplifying using $\dot{w}_1\left(\epsilon\right)=0$, $u'\left(w_n\right)=1/\left(\bm{\lambda}^{\top}{\bf{z}}_n\right)$ for $n \geq 2$ (Lemma \ref{lem_multidev}) and Equation (\ref{eqn_multitask_zprime}) yields
\begin{equation}\label{eqn5}
\sum_{n=1}^N \pi_n \dot{w}_n\left(\epsilon\right) = s \left(u\left(w_k\right)-u\left(w_j\right)\right) \left(\bm{\lambda}^{\top}{\bf{z}}''-\bm{\lambda}^{\top} {\bf{z}}'\right) + \mathcal{O}\left(\epsilon\right). 
\end{equation}

\paragraph{Perturbation (b)} Repeating the above argument for perturbation (b) yields
\begin{equation}\label{eqn6}
\sum_{n=1}^N \pi_n \dot{w}_n\left(\epsilon\right) = - \left(1-s\right)  \left(u\left(w_k\right)-u\left(w_j\right)\right) \left(\bm{\lambda}^{\top} {\bf{z}}''-\bm{\lambda}^{\top} {\bf{z}}'\right)+ \mathcal{O}\left(\epsilon\right).
\end{equation}
Since $u\left(w_k\right)\neq u\left(w_j\right)$ by Lemma \ref{lem_multidev} and $ \bm{\lambda}^{\top} {\bf{z}}'' \neq \bm{\lambda}^{\top} {\bf{z}}'$ by assumption, the right-hand sides of Equations (\ref{eqn5}) and (\ref{eqn6}) have the opposite signs when $\epsilon$ is small. The remainder of the proof is the same as that of Theorem \ref{thm_main} and is therefore omitted.
\end{proof}

\subsubsection{Proof of Theorem \ref{thm_exist_multidev}}
\begin{proof}
Define
\[
\Lambda=\left\{{\bm{\lambda}}: {\bm{\lambda}} \in \mathbb{R}_{+}^{|\mathcal{D}|} \text{ and } \left\| \bm{\lambda} \right\|_{|\mathcal{D}|}=1\right\}, 
\]
where $\left\| \cdot \right\|_{|\mathcal{D}|}$ denotes the $|\mathcal{D}|$-dimensional Euclidean norm.
By Theorem \ref{thm_multidev}, any optimal monitoring technology with at most $N \in \left\{2,\cdots, K\right\}$ performance categories is fully captured by $\bm{\lambda} \in \Lambda$ and $N-1$ cutpoints $\widehat{z}_1, \cdots,  \widehat{z}_{N-1}$ such that $\min_{\omega \in \Omega} \bm{\lambda}^{\top} {\bf{Z}}\left(\omega\right) \leq \widehat{z}_1\leq \cdots \leq \widehat{z}_{N-1} \leq \max_{\omega \in \Omega} \bm{\lambda}^{\top}{\bf{Z}}\left(\omega\right)$. Write $\widehat{\bf{z}}=\left(\widehat{z}_1,\cdots, \widehat{z}_{N-1}\right)$. Define  
\[
\mathcal{Z}_{N}\left(\bm{\lambda}\right)=\left\{\widehat{\bf{z}}: \min_{\omega \in \Omega} \bm{\lambda}^{\top} {\bf{Z}}\left(\omega\right) \leq \widehat{z}_1\leq \cdots \leq \widehat{z}_{N-1} \leq \max_{\omega \in \Omega} \bm{\lambda}^{\top} {\bf{Z}}\left(\omega\right) \right\},
\]
equip $\mathcal{Z}_{N}(\bm{\lambda})$ with the sup norm $\|\cdot\|$, and note that $\mathcal{Z}_{N}(\bm{\lambda})$ is compact by Assumption \ref{assm_compact}. For any given pair $\left(\bm{\lambda}, \widehat{\bf{z}}\right)$, write the minimal incentive cost for inducing $a^*$ as $W\left(\bm{\lambda}, \widehat{\bf{z}}\right)$, and note that $W\left(\bm{\lambda}, \widehat{\bf{z}}\right)$ exists and is finite if and only if $\lambda_a>0$ for all $a \in \mathcal{D}$ and $\min_{\omega \in \Omega} \bm{\lambda}^{\top} {\bf{Z}}\left(\omega\right)<\widehat{z}_n<\max_{\omega \in \Omega} \bm{\lambda}^{\top} {\bf{Z}}\left(\omega\right)$ for some $n$. The first condition is necessary: otherwise there exists $a \in \mathcal{D}$ such that $z_a\left(A\right)\equiv 0$ across all performance category $A$'s formed under $\left(\bm{\lambda}, \widehat{\bf{z}}\right)$ and hence the (\ref{eqn_ic_multidev}) constraint will be violated. 

We proceed in two steps.

\paragraph{Step 1} Show that $W\left(\bm{\lambda}, \widehat{\bf{z}}\right)$ is continuous in $\left(\bm{\lambda}, \widehat{\bf{z}}\right)$ for any given $N \in \left\{2,\cdots, K\right\}$. 

\bigskip

Fix any $\bm{\lambda} \in \Lambda$ and $\widehat{\bf{z}} \in \mathcal{Z}_{N}\left(\bm{\lambda}\right)$ such that $W\left(\bm{\lambda}, \widehat{\bf{z}}\right)$ is finite. W.l.o.g. consider the case where $\widehat{z}_n$'s are all  distinct. For sufficiently small $\delta>0$, let $\bm{\lambda}^{\delta}$ and ${\widehat{\bf{z}}}^{\delta}$ be any element of $\Lambda$ and $ \mathcal{Z}_{N}\left(\bm{\lambda}^{\delta}\right)$, respectively, such that $\|\bm{\lambda}^{\delta}-\bm{\lambda}\|_{|\mathcal{D}|}$, $\|{\widehat{\bf{z}}}^{\delta}-\widehat{\bf{z}}\|<\delta$. Let $\pi_n$ and ${\bf{z}}_n$ (resp. $\pi_n^{\delta}$ and ${\bf{z}}_n^{\delta}$) denote the probability (under $a=a^*$) and $|\mathcal{D}|$-vector of $z$-values associated with performance category $A_n=\left\{\omega: \bm{\lambda}^{\top} {\bf{Z}}\left(\omega\right) \in [\widehat{z}_{n-1}, \widehat{z}_n) \right\}$ (resp. $A_n^{\delta}=\left\{\omega: \bm{\lambda} ^{\delta \top} {\bf{Z}}\left(\omega\right) \in [\widehat{z}_{n-1}^{\delta}, \widehat{z}_n^{\delta}) \right\}$), respectively. Let $w_n$ denote the optimal wage at $A_n$. 

Fix any $\epsilon>0$, and consider the wage profile that pays $w_n+\epsilon$ at $A_n^{\delta}$ if $z^{\delta}_{a,n} > 0$ for all $a \in \mathcal{D}$ and $w_n$ otherwise. By construction, this wage profile satisfies the (\ref{eqn_ll}) constraint. Under Assumptions \ref{assm_regular} and \ref{assm_compact}, it satisfies every (\ref{eqn_ic_multidev}) constraint when $\delta$ is small: 
\begin{align*}
&\lim_{\delta \rightarrow 0}\sum_{n} u\left(w_n+ \prod_{a' \in \mathcal{D}}1_{z_{a',n}^{\delta}> 0} \cdot \epsilon \right)\pi_n^{\delta} z_{a,n}^{\delta}\\
&= \sum_{n} u\left(w_n+\prod_{a' \in \mathcal{D}}1_{z_{a',n}> 0} \cdot \epsilon\right)\pi_n z_{a, n}\\
&>\sum_{n} u\left(w_n\right)\pi_n z_{a, n},
\end{align*}
where the inequality holds because that $\sum_{n} \pi_n z_{a',n}=0$ and $z_{a',n}$ is strictly increasing in $n$ for all $a' \in \mathcal{D}$ so there exists $n$ such that $\prod_{a' \in \mathcal{D}}1_{z_{a',n}> 0}=1$. 
To complete the proof, note that 
\[
\lim_{\delta \rightarrow 0}\sum_{n} \pi_n^{\delta}\left(w_n+ \prod_{a \in \mathcal{D}}1_{z_{a,n}^{\delta}> 0} \cdot \epsilon \right)=\sum_{n} \pi_n\left(w_n+\prod_{a \in \mathcal{D}}1_{z_{a,n}>0} \cdot \epsilon\right), 
\]
so the following holds when $\delta$ is sufficiently small: 
\[
W\left(\bm{\lambda}^{\delta}, \widehat{\bf{z}}^{\delta}\right)-W\left(\bm{\lambda}, \widehat{\bf{z}}\right)\leq \sum_{n} \pi_n^{\delta}\left(w_n+\prod_{a \in \mathcal{D}}1_{z_{a,n}^{\delta}> 0} \cdot \epsilon\right)-\sum_{n} \pi_nw_n <\epsilon.
\]
Finally, interchanging the roles between $\left(\bm{\lambda}, \widehat{\bf{z}}\right)$ and $\left(\bm{\lambda}^{\delta}, \widehat{\bf{z}}^{\delta}\right)$ in the above derivation yields $W\left(\bm{\lambda}, \widehat{\bf{z}}\right)-W\left(\bm{\lambda}^{\delta}, \widehat{\bf{z}}^{\delta}\right)<\epsilon$, implying that $\left\vert W\left(\bm{\lambda}^{\delta}, \widehat{\bf{z}}^{\delta}\right)-W\left(\bm{\lambda}, \widehat{\bf{z}}\right)\right\vert<\epsilon$ when $\delta$ is sufficiently small.

\paragraph{Step 2} Under Assumption \ref{assm_cf}(a), the following quantity: 
\[W_N\coloneqq \min_{\bm{\lambda} \in \Lambda, \widehat{\bf{z}} \in \mathcal{Z}_N(\bm{\lambda})}  W\left(\bm{\lambda}, \widehat{\bf{z}}\right)\]
exists and is finite for all $N\in \left\{2,\cdots, K\right\}$ by Step 1 and the compactness of $\Lambda$ and $\mathcal{Z}_{N}\left({\bm{\lambda}}\right)$. Under Assumption \ref{assm_cf}(b), the principal's problem can be written as follows: 
\begin{equation*}
\min_{\bm{\lambda} \in \Lambda, \hat{\bf{z}} \in \mathcal{Z}_K\left(\bm{\lambda}\right)} W\left(\bm{\lambda}, \widehat{\bf{z}}\right) + \mu \cdot h\left(\bm{\pi} \left(\bm{\lambda}, \widehat{\bf{z}}\right)\right), 
\end{equation*}
where $\bm{\pi} \left(\bm{\lambda}, \widehat{\bf{z}}\right)$ denotes the probability vector formed under $\left(\bm{\lambda}, \widehat{\bf{z}}\right)$  and is continuous in its argument. The remainder of the proof is the same as that of Theorem \ref{thm_exist} and is therefore omitted. 
\end{proof}

\section{Other Extensions}\label{sec_online}
\subsection{Individual Rationality}\label{sec_ir}
In this appendix, let everything be as in the baseline model except that the agent is constrained by individual rationality rather than limited liability: 
\begin{equation}\label{eqn_ir}
\tag{IR} \sum_{A \in \mathcal{P}} P_1(A)u(w(A))  \geq c+\underline{u}. 
\end{equation}
A wage scheme is $w:\mathcal{P} \rightarrow \mathbb{R}$, and an optimal incentive contract that induces high effort from the agent (optimal incentive contract for short) minimizes the total implementation cost, subject to the (\ref{eqn_ic}) and (\ref{eqn_ir}) constraints. 
\begin{cor}\label{cor_ir}
Under Assumption \ref{assm_mc}, any optimal monitoring technology that induces high effort from the agent comprises $Z$-convex cells. 
\end{cor}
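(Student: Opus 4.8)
The plan is to re-run the three-step argument behind the first part of Theorem \ref{thm_main} (which uses only Assumption \ref{assm_mc}), the one substantive change being that the per-cell (\ref{eqn_ll}) multipliers are replaced by a single multiplier on the (\ref{eqn_ir}) constraint.

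\textbf{Step one: first-order condition and binding constraints.} Fix a monitoring technology $\langle A_n,\pi_n,z_n\rangle_{n=1}^N$ and minimize $\sum_n\pi_n w_n$ over \emph{unconstrained} $w_n\in\mathbb{R}$ subject to (\ref{eqn_ic}) and (\ref{eqn_ir}). Writing $\lambda$ and $\nu$ for the multipliers on (\ref{eqn_ic}) and (\ref{eqn_ir}), differentiating the Lagrangian in $w_n$ as in Lemma \ref{lem_foc} gives $u'(w_n)=1/(\lambda z_n+\nu)$ at \emph{every} cell (there is no corner to worry about). I would then argue both multipliers are strictly positive: since $\sum_n\pi_n z_n=0$ while (\ref{eqn_ic}) forces $\sum_n\pi_n u(w_n)z_n=c>0$, some $z_n<0$, so positivity of $u'$ and the first-order condition force $\nu>0$; and $\lambda=0$ would make the wage constant (by injectivity of $u'$) and violate (\ref{eqn_ic}), so $\lambda>0$. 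Finally, as in Lemma \ref{lem_mlrp}, merging two cells with equal wages leaves the left-hand sides of (\ref{eqn_ic}) and (\ref{eqn_ir}) unchanged (both are affine in the probabilities once the common wage is fixed) but strictly lowers the monitoring cost by Assumption \ref{assm_mc}(b); hence at an optimum distinct cells carry distinct wages, and by the first-order condition distinct $z$-values.

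\textbf{Step two: perturbation.} Suppose a cell $A_j$ of an optimal monitoring technology is not $Z$-convex. Exactly as in the proof of Theorem \ref{thm_main}, take positive-measure $A',A''\subset A_j$ and $\tilde A\subset A_k$ ($k\neq j$) with $z(A')=z'\neq z''=z(A'')$ and $z(\tilde A)=\tilde z=(1-s)z'+sz''$ for some $s\in(0,1)$, and use Lemma \ref{lem_smallSet} to shrink them to subsets $A'_\epsilon,A''_\epsilon,\tilde A_\epsilon$ of common probability $\epsilon$ with the same $z$-values. Consider perturbations (a) and (b) of that proof; by Assumption \ref{assm_mc}(a) neither changes the probability vector of the output signal, hence neither changes the monitoring cost. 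After either perturbation pick wages $w_n(\epsilon)$ with $|w_n(\epsilon)-w_n|=\mathcal{O}(\epsilon)$ keeping both (\ref{eqn_ic}) and (\ref{eqn_ir}) binding — no sign restriction is needed since wages may be negative. Expanding the two binding identities to first order in $\epsilon$ and eliminating the wage derivatives with $u'(w_n)(\lambda z_n+\nu)=1$ (the (\ref{eqn_ir})-identity kills the $\nu$-weighted sum, the (\ref{eqn_ic})-identity supplies the rest), I obtain, for perturbation (a),
\[
\frac{d}{d\epsilon}\sum_n\pi_n w_n(\epsilon)\bigg|_{\epsilon=0}=-\lambda\sum_n\pi_n u(w_n)\,z_n'(0)=s\,\lambda\,(z''-z')\,(u(w_k)-u(w_j)),
\]
and the same expression with factor $-(1-s)$ for perturbation (b) — precisely the formulas in the proof of Theorem \ref{thm_main}. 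Since $\lambda>0$, $z'\neq z''$, and $u(w_j)\neq u(w_k)$ by Step one, these two derivatives have opposite signs, so one perturbation strictly reduces the incentive cost while leaving the monitoring cost fixed, contradicting optimality. Hence every cell is $Z$-convex.

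\textbf{Main obstacle.} The only genuine work beyond transcribing the baseline proof is Step one: checking that in the individual-rationality formulation the relevant multipliers $\lambda,\nu$ are strictly positive and that the first-order condition $u'(w_n)=1/(\lambda z_n+\nu)$ holds at \emph{every} cell, so that the wage-differential terms cancel exactly as before. Everything downstream — the set selection via Lemma \ref{lem_smallSet}, the $\mathcal{O}(\epsilon)$ bookkeeping, and the opposite-sign conclusion — is identical to (and slightly simpler than) the proof of Theorem \ref{thm_main}, since the (\ref{eqn_ll}) constraint is now absent.
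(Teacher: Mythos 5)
Your proposal is correct and follows essentially the same route as the paper's own proof: the first-order condition $u'(w_n)=1/(\lambda z_n+\gamma)$ with a single multiplier on (IR), the merging argument to get distinct wages and $z$-values, and then the two swap perturbations with both constraints held binding, yielding derivatives $s\lambda(z''-z')(u(w_k)-u(w_j))$ and $-(1-s)\lambda(z''-z')(u(w_k)-u(w_j))$ of opposite sign. The only additions you make beyond the paper's write-up are the explicit (and correct) positivity arguments for the two multipliers, which the paper simply asserts.
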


\begin{proof}
Take any optimal incentive contract and let $\langle A_n, \pi_n, z_n, w_n \rangle_{n=1}^N$ be the corresponding tuple. Assume without loss of generality that $z_1 \leq \cdots \leq z_N$. 

\paragraph{Step 1} Show that $z_1<\cdots<z_N$ and $w_1<\cdots<w_N$.

\bigskip

The wage-minimization problem given $\langle A_n, \pi_n, z_n\rangle_{n=1}^N$ is
\begin{align*}
\min_{\langle \tilde{w}_n \rangle} \sum_{n=1}^N \pi_n \tilde{w}_n - \lambda \left(\sum_{n=1}^N \pi_n u\left(\tilde{w}_n\right) z_n-c\right) -\gamma \left(\sum_{n=1}^N \pi_n u\left(\tilde{w}_n\right) - \left(c+\underline{u}\right)\right), 
\end{align*}
where $\lambda$ and $\gamma$ denote the Lagrange multipliers associated with the (\ref{eqn_ic}) and (\ref{eqn_ir}) constraints, respectively. Differentiating the objective function with respect to $\tilde{w}_n$ and setting the result equal to zero, we obtain  
\[
u'\left(w_n\right)=\frac{1}{\lambda z_n+\gamma}. 
\]
Thus if  $z_j=z_k$ for some $j\neq k$, then $w_j=w_k$. But then merging $A_j$ and $A_k$ has no effect on the incentive cost but strictly reduces the monitoring cost by Assumption \ref{assm_mc}(b), a contradiction to the optimality of the original contract. 

\paragraph{Step 2} Show $Z$-convexity.

\bigskip

Suppose, to the contrary, that some $A_j$ is not $Z$-convex. Consider first perturbation (a) in the proof of Theorem \ref{thm_main}. Take any wage profile $\langle w_n(\epsilon)\rangle_{n=1}^N$ such that the (\ref{eqn_ic}) and (\ref{eqn_ir}) constraints remain binding after the perturbation, i.e.,
\begin{equation}\label{eqn_online_perturbic}
\sum_{n=1}^N \pi_n u\left(w_n\left(\epsilon\right)\right)z_n\left(\epsilon\right)=\sum_{n=1}^N\pi_n u\left(w_n\right)z_n, 
\end{equation}
and 
\begin{equation}\label{eqn_online_perturbir}
\sum_{n=1}^N \pi_n u\left(w_n\left(\epsilon\right)\right)=\sum_{n=1}^N \pi_n u\left(w_n\right).
\end{equation}
A close inspection of Equations (\ref{eqn_zprime}), (\ref{eqn_online_perturbic}) and (\ref{eqn_online_perturbir}) reveals the existence of $M>0$ such that when $\epsilon$ is sufficiently small, there exist wage profiles as above such that $|w_n\left(\epsilon\right)-w_n|<M\epsilon$ for all $n$.\footnote{To see why, define $\kappa_1=\sum_{n=1}^N \pi_n u\left(w_n\right)z_n$, $\kappa_2=\sum_{n=1}^N \pi_n u\left(w_n\right)$, $\mathcal{S}_1=\left\{\langle x_n\rangle_{n=1}^N \in \mathbb{R}^{N}:  \sum_{n=1}^N x_n z_{n} \geq \kappa_1\right\}$ and $\mathcal{S}_2=\left\{\langle x_n\rangle_{n=1}^N \in \mathbb{R}^{N}:  \sum_{n=1}^N x_n \geq \kappa_2\right\}$, and note that $\langle \pi_n u\left(w_n\right)\rangle_{n=1}^N \in \displaystyle \mathcal{S}_1\cap \mathcal{S}_2$. Then from $z_1<\cdots<z_N$, it follows that $\dim \mathcal{S}_1\cap \mathcal{S}_2=N$, and combining with Equation (\ref{eqn_zprime}) gives the desired result.}

Write $\dot{w}_n\left(\epsilon\right)=\left(w_n\left(\epsilon\right)-w_n\right)/\epsilon$ and $\dot{z}_n\left(\epsilon\right)=\left(z_n\left(\epsilon\right)-z_n\right)/\epsilon$, and let $\lambda>0$ and $\gamma>0$ denote the Lagrange multipliers associated with the (\ref{eqn_ic}) and (\ref{eqn_ir}) constraints prior to the perturbation, respectively. Expanding $\lambda$ (\ref{eqn_online_perturbic})+$\gamma$ (\ref{eqn_online_perturbir}) using the twice-differentiability of $u\left(\cdot\right)$ and $|w_n\left(\epsilon\right)-w_n| \sim \mathcal{O}\left(\epsilon\right)$ yields the following when $\epsilon$ is small:  
\begin{equation}\label{eqn_online1}
\sum_{n=1}^N \pi_n \cdot u'\left(w_n\right) \cdot (\lambda z_n+\gamma) \cdot \dot{w}_n\left(\epsilon\right) = - \lambda \sum_{n=1}^N u\left(w_n\right) \cdot \pi_n \dot{z}_n\left(\epsilon\right) +\mathcal{O}\left(\epsilon\right).
\end{equation}
Simplifying using $u'\left(w_n\right)=1/(\lambda z_n+\gamma)$ and Equation (\ref{eqn_zprime}) yields
\begin{equation}\label{eqn_online2}
\sum_{n=1}^N \pi_n \dot{w}_n\left(\epsilon\right)=s \left(u\left(w_k\right)-u\left(w_j\right)\right)  \left(\lambda z''-\lambda z'\right).  
\end{equation}

Consider next perturbation (b). Similar algebraic manipulation as above yields 
\begin{equation}\label{eqn_online3}
\sum_{n=1}^N \pi_n \dot{w}_n\left(\epsilon\right)=-(1-s)  \left(u\left(w_k\right)-u\left(w_j\right)\right)\left(\lambda z''-\lambda z'\right). 
\end{equation}
Since $u\left(w_j\right)\neq u\left(w_k\right)$ and $z'' \neq z'$, we must have $\sgn$ (\ref{eqn_online2}) $\neq \sgn$ (\ref{eqn_online3}), and the remainder of the proof is the same as that of Theorem \ref{thm_main}.

\end{proof}

\subsection{Random Monitoring Technology}\label{sec_random}
This appendix extends the baseline model to encompass random monitoring technologies ${\bf{q}}:\Omega \rightarrow \Delta^K$ mapping raw data points  to elements in the $K$-dimensional simplex. Time evolves as follows: 
\begin{enumerate}
\item the principal commits to $\langle {\bf{q}}, w\rangle$;
\item the agent privately chooses $a \in\{0,1\}$;
\item Nature draws $\omega \in \Omega$ according to $P_a$;
\item the monitoring technology outputs $n \in \{1,\cdots, K\}$ with probability $q_n(\omega)$;
\item the principal pays the promised wage $w_n \geq 0$.
\end{enumerate} 

Under $\langle {\bf{q}}, w\rangle$, the agent is assigned to performance category $n$ with probability 
\[\pi_n=\int q_{n}(\omega) dP_1\left(\omega\right)\]
if he exerts high effort. Define $\mathcal{N}=\left\{n: \pi_n>0\right\}$. For $n \in \mathcal{N}$, define
\begin{equation*}
z_n=\int Z(\omega) q_n(\omega)dP_1(\omega)/\pi_n
\end{equation*}
as the $z$-value of performance category $n$. For $n \notin \mathcal{N}$, let  $w_n=0$.  Then $\langle {\bf{q}}, w\rangle$ is incentive compatible if  
\begin{equation*}
\tag{IC} \sum_{n \in \mathcal{N}} \pi_n u\left(w_n\right) z_n \geq c, 
\end{equation*}
in which case the monitoring cost is proportional to the mutual information of the raw data and output signal conditional on high effort: 
\begin{equation*}
H\left({\bf{q}}, 1\right)=\sum_{n \in \mathcal{N}} \int q_n\left(\omega\right) \log \frac{q_n\left(\omega\right)}{\int q_n\left(\omega\right) dP_1\left(\omega\right)} dP_1\left(\omega\right). 
\end{equation*}
An optimal incentive contract $\langle {\bf{q}}^*, {w}^* \rangle$ that induces high effort from the agent solves
\[
\min_{\langle {\bf{q}}, w \rangle }\sum_{n \in \mathcal{N}} \pi_n w_n+\mu\cdot H({\bf{q}}, 1) \text{ s.t. (\ref{eqn_ic}) and (\ref{eqn_ll})}. 
\]

The next theorem gives characterizations of optimal incentive contracts: 
\begin{thm}\label{thm_mi}
For any optimal incentive contract $\langle {\bf{q}}^{\ast}, w^*\rangle$ that  induces high effort from the agent, we have (i) ${\bf{q}}^*: Z\left(\Omega\right) \rightarrow \Delta^K$; (ii) $\min \left\{w_n^*: n \in \mathcal{N}^*\right\}=0$; (iii) for all $j, k \in \mathcal{N}^{\ast}$, $w_j^* \neq w_k^*$ and $q^*_k\left(z\right)/q^*_j\left(z\right)$ is strictly increasing in $z$ if $w_j^*<w_k^*$.
\end{thm}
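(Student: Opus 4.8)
The plan is to prove the three claims in sequence, using a sufficient-statistic argument for (i), Holmström's wage characterization (Lemma~\ref{lem_foc}) for (ii), and a Gibbs-type first-order characterization of the optimal randomized technology for (iii). For part (i), I would run a Rao--Blackwell-type argument at the level of the channel. Given any feasible $\langle\mathbf q,w\rangle$, replace $\mathbf q$ by $\bar{\mathbf q}(\omega):=\mathbf E[\mathbf q(\omega')\mid Z(\omega')=Z(\omega);a=1]$, which depends on $\omega$ only through $Z(\omega)$. Since $\bar{\mathbf q}$ and $\mathbf q$ induce the same joint law of $\bigl(Z(\omega),\text{output}\bigr)$ under $a=1$, they produce identical probabilities $\pi_n$ and $z$-values $z_n$, hence the same incentive cost for the unchanged $w$. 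Writing the monitoring cost as the mutual information $H(\mathbf q,1)=\mathbf E_{P_1}[D(\mathbf q(\cdot)\,\|\,\bm\pi)]$ with $\bm\pi$ the (common) output marginal, I would condition on $Z$ and apply Jensen's inequality to the convex map $p\mapsto D(p\,\|\,\bm\pi)$ to get $H(\mathbf q,1)\ge H(\bar{\mathbf q},1)$; strict convexity of $D(\cdot\,\|\,\bm\pi)$ on the relative interior of the simplex makes this strict unless $\mathbf q(\omega)=\bar{\mathbf q}(Z(\omega))$ a.s. So any optimal $\mathbf q^*$ is (a.s.) a function of $Z$, which is (i).

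For part (ii), observe that $w^*$ must be optimal given $\mathbf q^*$, so Lemma~\ref{lem_foc} yields $\lambda>0$ with $u'(w^*_n)=1/(\lambda z^*_n)$ whenever $w^*_n>0$; here $\lambda>0$ because the incentive constraint binds (since $c>0$ forces some $w^*_n>0$). As $\sum_{n\in\mathcal N^*}\pi^*_n z^*_n=\mathbf E[Z\mid a=1]=0$ while binding IC forces some $z^*_n>0$, there is $n'\in\mathcal N^*$ with $z^*_{n'}<0$, and for such $n'$ the first-order condition rules out $w^*_{n'}>0$; hence $w^*_{n'}=0$ and $\min_{n\in\mathcal N^*}w^*_n=0$.

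For part (iii), I would characterize $\mathbf q^*$ via the first-order conditions of the full program. Using part (i), write $\mathbf q$ as a map $z\mapsto\mathbf q(z)\in\Delta^K$, let $F$ be the law of $Z$ under $a=1$, and note $\pi_n=\int q_n(z)\,dF$, $\pi_n z_n=\int z\,q_n(z)\,dF$. The Lagrangian is $\mathcal L=\sum_n\pi_n w_n+\mu\int\sum_n q_n(z)\log\frac{q_n(z)}{\pi_n}\,dF(z)-\lambda(\sum_n\pi_n u(w_n)z_n-c)$. The crucial computation is that the functional derivative of the mutual-information term with respect to $q_n(z)$ is $\log(q_n(z)/\pi_n)$ — the global dependence of $\pi_n$ on $\mathbf q$ contributes nothing to the variation because the $-\pi_n\log\pi_n$ derivatives cancel — so the pointwise-in-$z$ stationarity conditions are $w_n-\lambda u(w_n)z+\mu\log(q^*_n(z)/\pi^*_n)=\xi(z)$ for $n\in\mathcal N^*$, with $\xi(z)$ the multiplier on $\sum_n q_n(z)=1$. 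This gives the Gibbs form $q^*_n(z)\propto\pi^*_n\exp(-w^*_n/\mu)\exp(\tfrac{\lambda}{\mu}u(w^*_n)z)$, with every $n\in\mathcal N^*$ receiving strictly positive weight (the marginal value of $q_n(z)$ tends to $-\infty$ as $q_n(z)\downarrow0$, so a boundary solution with $\pi^*_n>0$ is impossible). Then for $j,k\in\mathcal N^*$ the ratio $q^*_k(z)/q^*_j(z)$ is log-affine in $z$ with slope $\tfrac{\lambda}{\mu}(u(w^*_k)-u(w^*_j))$, which is strictly positive precisely when $w^*_k>w^*_j$ (as $u$ is strictly increasing and $\lambda,\mu>0$) — the asserted strict monotone likelihood ratio property. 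Finally, if $w^*_j=w^*_k$, the same formula forces $q^*_j\equiv(\pi^*_j/\pi^*_k)\,q^*_k$, so merging categories $j$ and $k$ changes neither the incentive cost nor the monitoring cost; hence one may take $\mathbf q^*$ to have no two categories with equal wage, delivering $w^*_j\ne w^*_k$.

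The hardest part will be making the variational argument in (iii) rigorous: I would need to (a) secure a constraint qualification (a Slater point — paying a large wage on a high-$z$ event — gives strict feasibility) so that the multipliers $\lambda>0$ and $\xi(\cdot)$ exist at the optimum; (b) justify the functional-derivative identity for mutual information and the cancellation of the $\pi_n$-terms; and (c) deal with the constraints $q_n(z)\ge0$, in particular prove $q^*_n(z)>0$ for a.e.\ $z$ and all $n\in\mathcal N^*$ so the interior formula applies. A more hands-on alternative would be to mimic the paper's perturbation method of Section~\ref{sec_proofsketch}, shifting a small amount of conditional mass between two categories near a given $z$ and invoking the same envelope-type cancellation of the wage adjustment; but arranging such perturbations in the randomized setting so as to extract the likelihood-ratio slope is itself delicate. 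A minor caveat is that the clause ``$w^*_j\ne w^*_k$ for \emph{every} optimal contract'' holds only modulo the merging reduction above.
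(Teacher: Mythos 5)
Your proof is correct and follows essentially the same route as the paper's: part (i) is the paper's one-line ``incentive cost linear, monitoring cost convex'' argument fleshed out via Jensen/Rao--Blackwell, part (ii) is the same zero-wage-at-the-bottom observation, and part (iii) derives exactly the paper's first-order condition \eqref{eqn_foc_random}, whose log-affine-in-$z$ structure with slope $\lambda\left(u\left(w_k^*\right)-u\left(w_j^*\right)\right)/\mu$ gives the strict MLRP. The technical caveats you flag --- interiority of $q_n^*\left(z\right)$, the functional derivative of mutual information, and the merging qualification behind $w_j^*\neq w_k^*$ --- are genuine but are equally glossed over in the paper's own proof.
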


\begin{proof}
Since the incentive cost is linear in ${\bf{q}}\left(\omega\right)$ whereas the monitoring cost is convex in ${\bf{q}}\left(\omega\right)$, it follows that ${\bf{q}}^*: Z\left(\Omega\right) \rightarrow \Delta^K$ and that $w_j^* \neq w_k^*$ for all $j, k \in \mathcal{N}^{\ast}$. Write $\mathcal{N}^*=\{1,\cdots, N\}$ and assume w.l.o.g. that $w_1^*<\cdots<w_N^*$. Then $w_1^*=0$ for the same reason as in proof of Lemma \ref{lem_mlrp}. Differentiating the principal's objective function with respect to ${\bf{q}}\left(z\right)$ yields the following first-order condition:
\begin{equation}\label{eqn_foc_random}
-w^*_n+ \lambda u\left(w^*_n\right) z=\mu\left(\log \frac{q_n^*\left(z\right)}{q_1^*\left(
z\right)} - \log\frac{\pi^*_n}{\pi^*_1}\right)  \text{ } \forall n=2,\cdots, N,
\end{equation}
where $\lambda>0$ denotes the Lagrange multiplier associated with the (\ref{eqn_ic}) constraint. The left-hand side of Equation (\ref{eqn_foc_random}) is strictly increasing in $z$, thus proving Part (iii) of this  theorem.
\end{proof}

The next theorem proves existence of optimal incentive contract: 

\begin{thm}
Assume Assumptions \ref{assm_regular} and \ref{assm_compact}. Then an optimal incentive contract that induces high effort from the agent exists. 
\end{thm}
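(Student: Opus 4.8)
The plan is to recast the random-monitoring problem as the minimization of a weak-* lower semicontinuous functional over a weak-* compact set, and then apply the Weierstrass theorem, in the same spirit as the proof of Theorem~\ref{thm_exist} but in function space. A random monitoring technology is a measurable map $\mathbf{q}:\Omega\to\Delta^K$, i.e. an element of $L^\infty(\Omega,P_1;\mathbb{R}^K)$ taking values in the simplex $\Delta^K$ for $P_1$-a.e. $\omega$; call this set $\mathcal{Q}$. First I would note that $L^1(\Omega,P_1)$ is separable (the Borel $\sigma$-algebra is countably generated), that $\mathcal{Q}$ is norm-bounded, and that the pointwise constraints ``$q_n\ge 0$'' and ``$\sum_n q_n=1$'' are each an intersection of weak-* closed half-spaces and hyperplanes, so $\mathcal{Q}$ is weak-* compact and metrizable by Banach--Alaoglu. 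On $\mathcal{Q}$ the maps $\mathbf{q}\mapsto\pi_n=\int q_n\,dP_1$ and $\mathbf{q}\mapsto\int Z q_n\,dP_1=\pi_n z_n$ are weak-* continuous, because the test functions $1$ and $Z$ lie in $L^1(P_1)$ --- here Assumption~\ref{assm_compact} is used, since it makes $Z$ bounded and hence integrable.

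Next I would treat the wages $w\in\mathbb{R}_+^K$ jointly with $\mathbf{q}$, adopting the appendix's convention $w_n=0$ whenever $\pi_n=0$. The value of the problem is finite: the feasible set is nonempty --- for instance the two-cell deterministic contract splitting $Z(\Omega)$ at $0$ has cells of positive probability and opposite-signed $z$-values by Assumption~\ref{assm_regular} together with $\mathbf{E}[Z\mid a=1]=0$, is incentive compatible for an appropriate positive wage on the high cell, and has finite mutual-information cost. Along a minimizing sequence $(\mathbf{q}^{(m)},w^{(m)})$ the monitoring cost is nonnegative, so $\sum_n\pi_n^{(m)}w_n^{(m)}$ is bounded; hence $w_n^{(m)}$ stays bounded on every coordinate whose probability does not shrink to zero, and after passing to a subsequence $\mathbf{q}^{(m)}\to\mathbf{q}^*$ weak-* in $\mathcal{Q}$ and $w_n^{(m)}\to w_n^*$ on those coordinates (set $w_n^*=0$ on the rest).

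I would then verify lower semicontinuity of the objective and closedness of the constraint set. The incentive cost $\sum_n\pi_n w_n$ has each summand nonnegative and, on the non-vanishing coordinates, continuous, so its $\liminf$ dominates the limiting value; the monitoring cost $H(\mathbf{q},1)=\sum_n\int q_n\log\!\big(q_n/\pi_n\big)\,dP_1$ is the sum of the weak-* continuous term $-\sum_n\pi_n\log\pi_n$ and the convex integral functional $\mathbf{q}\mapsto\int\sum_n q_n\log q_n\,dP_1$, hence weak-* lower semicontinuous by the standard lower semicontinuity of convex integral functionals; and the (\ref{eqn_ic}) constraint $\sum_n u(w_n)\int Z q_n\,dP_1\ge c$ passes to the limit on the non-vanishing coordinates. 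Weierstrass then yields an optimal contract, provided the vanishing coordinates drop out cleanly in the limit.

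The main obstacle is exactly that last proviso: a coordinate along which $\pi_n^{(m)}\to 0$ but $w_n^{(m)}\to\infty$ could in principle retain a nonzero weight $u(w_n^{(m)})\pi_n^{(m)}z_n^{(m)}$ in the incentive constraint even though it disappears from the limiting contract, so that $(\mathbf{q}^*,w^*)$ might fail to be incentive compatible. I would rule this out by showing such coordinates cannot occur along a minimizing sequence: if $\pi_n^{(m)}\to 0$ while $u(w_n^{(m)})\pi_n^{(m)}$ stays bounded away from zero, then $w_n^{(m)}=u^{-1}\!\big(\Theta(1/\pi_n^{(m)})\big)\to\infty$, and --- using that the concave utility is sublinear at infinity, so $w/u(w)\to\infty$ --- the incentive-cost summand $\pi_n^{(m)}w_n^{(m)}$ diverges, contradicting minimality. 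Hence along a minimizing sequence every coordinate either keeps its probability bounded below, in which case the arguments above apply verbatim, or contributes vanishingly to both the objective and the incentive constraint, which restores feasibility of $(\mathbf{q}^*,w^*)$ and completes the proof. An alternative route, closer in form to the proof of Theorem~\ref{thm_exist}, would use the first-order condition (\ref{eqn_foc_random}) to express $\mathbf{q}^*$ through finitely many parameters --- the multiplier, the wages, and the category probabilities --- and reduce existence to a fixed point of the consistency relations $\pi_n=\int q_n\,dP_1$ over a compact parameter set; I find the compactness argument cleaner and would write that one up.
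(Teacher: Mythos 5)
Your overall strategy is genuinely different from the paper's. The paper first invokes the first-order condition (\ref{eqn_foc_random}) to argue that any optimum must lie in $C^1\left(Z\left(\Omega\right),\Delta^K\right)$, restricts the minimization to that class, and then extracts a uniformly convergent subsequence from a minimizing sequence via uniform boundedness and equicontinuity (it cites Helly's selection theorem, in effect Arzel\`a--Ascoli). You instead keep the full class of measurable $\mathbf{q}:\Omega\rightarrow\Delta^K$, get compactness from Banach--Alaoglu, and obtain lower semicontinuity of the mutual-information cost as a convex integral functional. That is the standard direct method and it sidesteps the paper's delicate a priori regularity step (justifying a uniform modulus of continuity along the minimizing sequence). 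The price is that the two proofs locate the difficulty differently: the paper's is the smoothness of the minimizing sequence; yours is the joint limit in $\left(\mathbf{q},w\right)$, and that is where your argument does not close.

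The gap is the claim that concavity of $u$ with $u\left(0\right)=0$ forces $w/u\left(w\right)\rightarrow\infty$. Under the paper's assumptions ($u\left(0\right)=0$, $u'>0$, $u''<0$) this is false: $u\left(w\right)=w+1-e^{-w}$ is strictly increasing, strictly concave, vanishes at zero, and has $w/u\left(w\right)\rightarrow 1$; more generally any admissible $u$ bounded below by $\varepsilon w$ defeats the divergence. For such $u$, a coordinate with $\pi_n^{(m)}\rightarrow 0$ and $u\left(w_n^{(m)}\right)\pi_n^{(m)}$ bounded away from zero makes the incentive-cost summand $\pi_n^{(m)}w_n^{(m)}$ merely bounded away from zero rather than divergent, so no contradiction with minimality follows, and the limit $\left(\mathbf{q}^*,w^*\right)$ could violate (\ref{eqn_ic}) by exactly the incentive weight carried off to infinity --- the failure mode your own last paragraph identifies as the crux. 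The step is likely repairable: since $w/u\left(w\right)$ is nondecreasing for concave $u$ with $u\left(0\right)=0$, a unit of utility-weighted incentive delivered at diverging wages is weakly more expensive than the same unit delivered at bounded wages on a cell of fixed positive probability, so one can replace the escaping mass by a feasible competitor of no greater cost. But as written the exclusion of these coordinates is asserted via a false inequality, so the proof is incomplete at precisely the point where it needed to be careful.
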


\begin{proof}
For any given $\bf{q}$, the wage-minimization problem admits solutions if and only if $z_j \neq z_k$ for some $j, k \in \mathcal{N}$, in which case we denote the minimal incentive cost by $W\left({\bf{q}}\right)$. The principal's problem is 
\[\min_{{\bf{q}}} W\left({\bf{q}}\right)+\mu\cdot H\left({\bf{q}}, 1\right),\]
and any solution of it must be continuous differentiable on $Z\left(\Omega\right)$ by Equation (\ref{eqn_foc_random}) and Assumptions \ref{assm_regular} and \ref{assm_compact} (taking the usual care of derivatives at end points). Define $C^1\left(Z\left(\Omega\right), \Delta^K\right)$ as the set of ${\bf{q}}$'s as above and equip $C^1\left(Z\left(\Omega\right), \Delta^K\right)$ with the sup norm $\| \cdot\|$, i.e., $\|{\bf{q}}'-{\bf{q}}\|=\sup_{z ,n} |q'_n\left(z\right)-q_{n}\left(z\right)|$. Rewrite the principal's problem as follows: 
\[
\min_{{\bf{q}} \in C^1\left(Z\left(\Omega\right), \Delta^K\right)} W\left({\bf{q}}\right)+\mu \cdot H\left({\bf{q}},1\right),
\]
and note that the objective function is continuous in $\bf{q}$. 

To prove existence of solutions, note that \[\inf_{{\bf{q}} \in C^1\left(Z\left(\Omega\right), \Delta^K\right)} W\left({\bf{q}}\right)+\mu \cdot H\left({\bf{q}},1\right)\] is a finite number, hereafter denoted by $x$. Let $\left\{{\bf{q}}^{k}\right\}$ be any sequence in $C^1\left(Z\left(\Omega\right), \Delta^K\right)$ such that $\lim_{k \rightarrow \infty} W\left({\bf{q}}^k\right)+\mu \cdot H\left({\bf{q}}^k,1\right)=x$. Clearly, ${\bf{q}}^k$ is uniformly bounded for all $k$, and the family $\left\{{\bf{q}}^k\right\}$ is equicontinuous by Assumption \ref{assm_compact} and the definition of $C^1\left(Z\left(\Omega\right), \Delta^K\right)$. Thus, a subsequence of $\left\{{\bf{q}}^k\right\}$ converges uniformly to some ${\bf{q}}^{\infty}$ by Helly's selection theorem, and $W\left({\bf{q}}^{\infty}\right)+\mu \cdot H\left({\bf{q}}^{\infty},1\right)=x$ by the continuity of the objective function. 
\end{proof}

\end{document}